\documentclass[final,5p,times,twocolumn]{elsarticle}

\usepackage{amssymb}
\usepackage{amsmath}
\usepackage{amsthm}
\usepackage{hyperref}

\newtheorem{theorem}{Theorem}[section]
\newtheorem{proposition}[theorem]{Proposition}

\newtheorem{corollary}[theorem]{Corollary}
\theoremstyle{definition}
\newtheorem{definition}[theorem]{Definition}

\theoremstyle{remark}
\newtheorem{remark}[theorem]{Remark}

\newcommand{\cH}{\mathcal H}
\newcommand{\cB}{\mathcal B}
\newcommand{\Tr}{\operatorname{Tr}}

\newcommand{\ee}{\mathrm e}
\newcommand{\ii}{\mathrm i}
\newcommand{\Y}[2]{Y_{#1,#2}}
\newcommand{\eps}{\varepsilon}
\makeatletter
\newcommand{\authororcid}[1]{%
  \ifx\corref\@gobble\else
    \texorpdfstring{\,\href{https://orcid.org/#1}{\raisebox{-0.15ex}{\includegraphics[height=1.7ex]{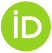}}}}{}%
  \fi}
\makeatother

\journal{Nuclear Physics B}

\begin{document}

\begin{frontmatter}

%% Title, authors and addresses

%% use the tnoteref command within \title for footnotes;
%% use the tnotetext command for theassociated footnote;
%% use the fnref command within \author or \affiliation for footnotes;
%% use the fntext command for theassociated footnote;
%% use the corref command within \author for corresponding author footnotes;
%% use the cortext command for theassociated footnote;
%% use the ead command for the email address,
%% and the form \ead[url] for the home page:
%% \title{Title\tnoteref{label1}}
%% \tnotetext[label1]{}
%% \author{Name\corref{cor1}\fnref{label2}}
%% \ead{email address}
%% \ead[url]{home page}
%% \fntext[label2]{}
%% \cortext[cor1]{}
%% \affiliation{organization={},
%%            addressline={}, 
%%            city={},
%%            postcode={}, 
%%            state={},
%%            country={}}
%% \fntext[label3]{}

\title{Movable seams in root-of-unity XXZ chains:\\ Relative flux classes and antiunitary spectral pairing}

\author[a,c]{Boliang Yu\authororcid{0009-0006-4664-5068}\texorpdfstring{\corref{cor1}}{}}
\ead{yuboliang@sjtu.edu.cn}
\author[a]{Ruixin Zhou\authororcid{0009-0004-9303-6674}}

\author[b,c]{Meisen Gao\authororcid{0000-0002-4028-9895}}

\cortext[cor1]{Corresponding author}

\affiliation[a]{organization={School of Physics and Astronomy, Shanghai Jiao Tong University},
            city={Shanghai},
            postcode={200240}, 
            country={China}}
\affiliation[b]{organization={School of Physics, East China University of Science and Technology},
            city={Shanghai},
            postcode={200237}, 
            country={China}}
\affiliation[c]{organization={Shanghai Key Laboratory of Particle Physics and Cosmology},
            city={Shanghai},
            postcode={200240}, 
            country={China}}
\begin{abstract}
Starting from the cyclic duality tensors introduced by Vernier, Miao,
and Yamazaki (VMY), whose endpoint-traced matrix-product operators obey
$\mathbb Z_N$ Tambara--Yamagami fusion and realize topological defect
lines of the compactified-boson conformal field theory, we retain the
virtual endpoint as an $N$-state dynamical degree of freedom and
construct an exactly movable seam in the spin-$\tfrac12$ XXZ chain at
$q=\mathrm e^{\mathrm i\pi M/N}$ with $\gcd(M,N)=1$. The local movement
identity holds for any unitary $q$-Weyl pair; in the finite cyclic
realization, all output phases are locally gauge equivalent and share
the same four seam eigenvalues. On a ring, the output gauge becomes a
directed twist on a single bond, while endpoint conjugacy reduces the
$2N$ labels to two relative $\mathbb Z_2$ flux classes. An explicit
antiunitary symmetry protects the class
$\eta\equiv M-1\pmod 2$: for even $N$ it pairs distinct charge sectors
isospectrally, whereas for odd $N$ it fixes one sector and squares to
$-I$ there. Hence every many-body energy eigenspace in the protected
class has even multiplicity. The source value
$\eta_{\rm VMY}=1-M$ satisfies this condition for every coprime root,
whereas the ungauged value $\eta=0$ does so only for odd $M$. Exact
 finite-size controls show that the opposite class can contain simple
 levels, so a global flux invisible to local gauge equivalence distinguishes
 the two classes spectrally.
\end{abstract}

%%Graphical abstract
%\begin{graphicalabstract}
%\includegraphics{grabs}
%\end{graphicalabstract}

%%Research highlights
%\begin{highlights}
%\item Research highlight 1
%\item Research highlight 2
%\end{highlights}
\begin{keyword}
Weyl algebra \sep root-of-unity XXZ chain \sep lattice holonomy \sep
movable seam \sep antiunitary pairing \sep non-invertible symmetry
\end{keyword}
\end{frontmatter}

%\tableofcontents

%% \linenumbers

%% main text
\section{Introduction}

A phase attached to a local tensor is removable on an open segment.  On a
closed ring, removing it from the tensor places a twist on a bond, and the
remaining datum is a holonomy.  For the movable XXZ seam studied here this
elementary gauge fact is the beginning, rather than the end, of the problem:
the nontrivial question is which relative holonomy supports an explicit
projective spectral symmetry.

Matrix-product operators (MPOs) provide a microscopic language for dualities
and generalized symmetries in quantum chains
\citep{Lootens2023Dualities,Lootens2024Sectors,Miao2026MPO}.  When a local MPO
tensor reshapes into a unitary, a sequential circuit can move a localized
Hamiltonian modification; this dangling-virtual-space mechanism and related
defect-Hilbert-space constructions were developed in several recent works
\citep{Ueda2026Perfect,Vanhove2025Duality,Tantivasadakarn2025Sequential,Inamura2026Remarks,Wen2026NonInvertible,Sato2026Invariants}.
Vernier, Miao, and Yamazaki (VMY) constructed endpoint-traced root-of-unity
XXZ duality MPOs $D_\pm$ from a cyclic Lax tensor and noncommuting transfer
matrices \citep{Vernier2026Lattice}.  They showed that these operators obey
$\mathbb Z_N$ Tambara--Yamagami-type fusion rules and furnish a lattice
realization of topological defect lines in the compactified-boson conformal
field theory.  Their local cyclic blocks are the source of the construction
below, but our endpoint is retained rather than traced.  Consequently the
Hamiltonians studied here act on a different Hilbert space, and we do not
transfer the source MPO's fusion or continuum interpretation to them.

Here a \emph{seam} is a local replacement of one ordinary nearest-neighbor
XXZ bond by an operator on the two adjacent spins and an endpoint Hilbert
space; a local physical--endpoint unitary translates this replacement by one
site.  The endpoint is \emph{retained}: the MPO virtual space is kept as a
dynamical tensor factor instead of being contracted or traced.  By
\emph{output normalization} we mean a diagonal phase multiplying the
physical output leg of the fixed local reshape, not a normalization of states
or energies and not a virtual-bond similarity.  Its \emph{relative ring
holonomy} is the residual twist class left after comparison with a reference
output gauge and transport to an ordinary reference bond.

We first construct, for any unitary $q$-Weyl pair, two ordinary two-sided
gates and explicit Hermitian seams.  Their movement identity and quartic
polynomial are identities of the unrestricted Laurent--Weyl algebra.  At a
finite root, the generic-$M$ VMY blocks differ from the Weyl blocks only by a
branch-dependent output gauge.  We then determine exactly what that gauge
means on a ring: it can be transmuted into a single-bond twist, while endpoint
conjugation identifies $\eta$ with $\eta+2$.  The quotient is a relative
holonomy $\nu_\eta\in\{+1,-1\}$, not a collection of $2N$ independent global
objects.

The projective antiunitary independently selects the parity class
$\eta\equiv M-1\pmod2$.  The source value inherited from the VMY cyclic
blocks is $\eta_{\rm VMY}=1-M$ and therefore belongs to this class for every
coprime root, whereas the ungauged value $\eta=0$ does so only for odd $M$.
For even $N$ the antiunitary pairs distinct charge sectors isospectrally;
for odd $N$ it also fixes one sector and squares to $-I$ there.  At
$(N,M,L)=(3,2,3)$ the opposite orbit has eight simple levels, and the
sector-resolved finite-size scan supports the corresponding numerical
converse.

VMY established the Tambara--Yamagami fusion relation for their
endpoint-traced MPO at the source normalization.  Whether that source-side
fusion requirement is generically tied to the protected holonomy class in
movable-seam constructions remains an open question.

Section~\ref{sec:conventions} defines the finite and unrestricted objects;
Sec.~\ref{sec:unitarity} gives the local theorem;
Sec.~\ref{sec:flux} derives flux transmutation and the two holonomy orbits; and
Sec.~\ref{sec:globalpairing} derives the two sector-pairing mechanisms.
Exact and finite-size spectral controls follow in
Sec.~\ref{sec:controls}, with scope and physical interpretation collected in
Sec.~\ref{sec:limitations}.
\ref{app:unitarity} proves the Weyl-algebra and finite-root statements;
\ref{app:vmyprovenance} records the source map and flux transport;
\ref{app:ansatz} constructs the seam and derives its local spectrum;
\ref{app:uniqueness} proves support-qualified uniqueness;
\ref{app:global} gives the projective antiunitary; and
\ref{app:controls} collects exact controls, numerical diagnostics, and
reproduction conventions.

Figure~\ref{fig:result-overview} summarizes the physical chain from an output
gauge to a movable ring twist, the two relative-holonomy orbits, and the two
protected charge-pairing mechanisms.

Full-spectrum doubling, antiunitary pairing, and closed movement cycles have
precedents in other models or operator domains
\citep{Ueda2026Perfect,Liang2026Physical,Aasen2020Dualities,Sato2026Invariants};
root-of-unity XXZ multiplets and hidden sectors provide further, differently
scoped neighbors
\citep{DeguchiFabriciusMcCoy2001Loop,Deguchi2004Twisted,Korff2004Twisted,Miao2021QOperator,Hu2026Hidden}.
\ref{app:prior-boundaries} compares these neighboring mechanisms at
the level of their operator domains.

\section{Models and objects}
\label{sec:conventions}

\subsection{Notation and operator domains}
\label{sec:notation}

\begin{table*}[t]
\centering
\caption{Notation used throughout the article.}
\label{tab:notation}
\begin{tabular}{p{0.22\textwidth}p{0.72\textwidth}}
\hline
\textbf{Symbol} & \textbf{Meaning} \\
\hline
$\eps,q,\omega$ & $\eps=\ee^{\ii\pi/N}$, $q=\eps^M$, and $\omega=q^2$ at a finite root. \\
$X,Z$; $Y_{m,n}$ & Endpoint Weyl pair and Laurent monomial $Y_{m,n}=X^mZ^n$. \\
$\sigma$; $\chi_\sigma$ & Seam branch $\sigma\in\{+,-\}$ and branch factor $\chi_+=1$, $\chi_-=-1$. \\
$\cH_a$ & Retained $N$-state endpoint Hilbert space. \\
$f_\eta,H_\eta^\sigma$ & Physical-output phase $f_\eta=\eps^\eta$ and the corresponding retained-endpoint Hamiltonian. \\
$\eta_{\rm VMY}$ & Source output label $\eta_{\rm VMY}=1-M$. \\
$\nu_\eta$ & Relative $\mathbb Z_2$ ring-holonomy label. \\
$m_{\rm inv}$ & The residue $M^{-1}\bmod N$ in $\{0,\ldots,N-1\}$. \\
$\Gamma(t)$ & One-site diagonal gauge that transports a directed bond twist. \\
$Q$ & Combined physical--endpoint cyclic charge. \\
$\kappa_\eta,\tau_\eta$; $\delta$ & Compatible modular lifts and the parity bit $\delta=N\bmod2$. \\
$P_N,J,R,D_q(t)$ & Endpoint shear, endpoint inversion, spatial reflection, and diagonal physical gauge. \\
$E_{rc}$ & Matrix unit $|r\rangle\langle c|$ in the ordered two-spin basis. \\
$\mathcal C_\eta,\Theta_\eta,\mathcal A_\eta$ & Branch exchange, antiunitary branch map, and their same-branch composition. \\
$c_\eta^\sigma$ & Branch-dependent center of the sector involution $m\mapsto c_\eta^\sigma-m$. \\
$\mathcal B_{\eta,r}^\sigma$ & Same-branch projective antiunitary acting on $H_\eta^\sigma$. \\
\hline
\end{tabular}
\end{table*}

\begin{figure*}[t]
\centering
\includegraphics[width=\textwidth]{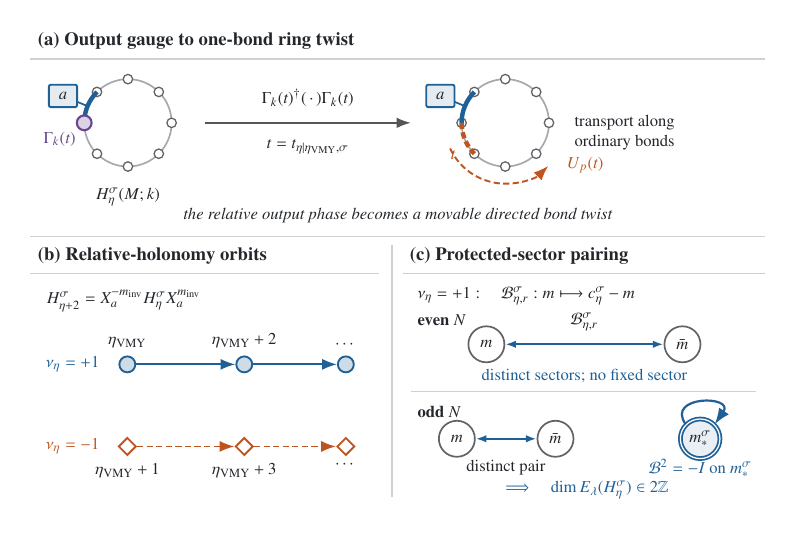}
\caption{From output gauge to protected spectral pairing.
(a) A one-site conjugation converts the relative output phase into a directed
bond twist, which moves along ordinary bonds.  (b) Endpoint conjugacy
$\eta\sim\eta+2$ leaves two relative-holonomy orbits.  (c) In the protected
orbit, even $N$ pairs distinct charge sectors; odd $N$ also gives a
fixed-sector Kramers pair.}
\label{fig:result-overview}
\end{figure*}

We reserve $s$ for the Lax parameter used in \citep{Vernier2026Lattice}
and write $\sigma\in\{+,-\}$ for the seam branch.  Let $q\in U(1)$,
$\omega=q^2$, and let $X,Z$ be bounded unitaries on the endpoint Hilbert
space $\cH_a$ satisfying
\begin{equation}
 ZX=\omega XZ.
 \label{eq:weyl}
\end{equation}
No finite-order condition is imposed here.  With
$Y_{m,n}=X^mZ^n$, our Laurent convention is
\begin{equation}
 Y_{m,n}Y_{r,t}=q^{2nr}Y_{m+r,n+t},\qquad
 Y_{m,n}^{\dagger}=q^{2mn}Y_{-m,-n}.
 \label{eq:weylproduct}
\end{equation}
The two physical--endpoint blocks and their ordinary-metric reshapes are
\begin{equation}
 \begin{aligned}
 L_+(q)&=\begin{pmatrix}I&\omega X^{-1}\\-qXZ&qZ\end{pmatrix},&
 L_-(q)&=\begin{pmatrix}qZ&-qX^{-1}Z\\X&\omega I\end{pmatrix},\\
 W_\sigma&=\frac{L_\sigma}{\sqrt2}.&&
 \end{aligned}
 \label{eq:Ltensors}
\end{equation}
On neighboring spins we use the Hermitian distributed-twist density
\begin{equation}
 h_{j,j+1}(q)=q\,\sigma_j^+\sigma_{j+1}^-
 +q^{-1}\sigma_j^-\sigma_{j+1}^+
 +\frac{q+q^{-1}}4\sigma_j^z\sigma_{j+1}^z.
 \label{eq:density}
\end{equation}

For the finite cyclic realization, set
\begin{equation}
 \begin{aligned}
 \eps&=\ee^{\ii\pi/N},\qquad q=\eps^M,\qquad N\ge2,\\
 &1\le M<N,\qquad \gcd(M,N)=1,
 \end{aligned}
 \label{eq:root}
\end{equation}
so $q^N=(-1)^M$ and $\omega$ has order $N$.  On
$\cH_a=\mathbb C^N$ we take
\begin{equation}
 Z|n\rangle=\omega^n|n\rangle,\qquad
 X|n\rangle=|n+1\!\!\pmod N\rangle,\qquad n=1,\ldots,N,
 \label{eq:weylrep}
\end{equation}
with residue zero represented by $|N\rangle$.  The periodic reference is
\begin{equation}
 H_{\rm unif}=\sum_{j=1}^{L}h_{j,j+1},\qquad L+1\equiv1.
 \label{eq:Hunif}
\end{equation}
It is gauge-equivalent to a boundary-twist presentation; all ring statements
below use Eq.~\eqref{eq:Hunif} and require $L\ge3$.

Taking the $s=0$ duality limit of the cyclic Lax tensor in
\citep{Vernier2026Lattice} gives, in our physical basis,
\begin{equation}
 \mathcal L^+_{\rm VMY}=\begin{pmatrix}I&\omega X^{-1}\\-\eps XZ&\eps Z\end{pmatrix},
 \qquad
 \mathcal L^-_{\rm VMY}=\begin{pmatrix}\eps Z&-\eps X^{-1}Z\\X&\omega I\end{pmatrix}.
 \label{eq:vmyBlocks}
\end{equation}
Here $q=\eps^M$ is the Weyl parameter of the generic cyclic relation,
whereas the displayed physical rows contain the principal phase $\eps$;
the two phases must therefore be kept distinct.
With
\begin{equation}
 f=\frac{\eps}{q},\qquad
 F_+=\operatorname{diag}(1,f),\qquad
 F_-=\operatorname{diag}(f,1),
 \label{eq:Fpm}
\end{equation}
the exact source map is
\begin{equation}
 \mathcal L_{\rm VMY}^{\sigma}=F_\sigma L_\sigma(q),\qquad
 W_\sigma^{\rm VMY}=F_\sigma W_\sigma.
 \label{eq:sourcegauge}
\end{equation}
Every diagonal one-site unitary $F$ also satisfies the bond-transport identity
\begin{equation}
 F_B^\dagger h_{BC}(q)F_B=F_Ch_{BC}(q)F_C^\dagger.
 \label{eq:bulktransport}
\end{equation}
Equations~\eqref{eq:sourcegauge} and \eqref{eq:bulktransport} are the entire
local bridge from the Weyl construction to the source-normalized one.  VMY's
transfer-matrix conservation is used only for their finite cyclic,
endpoint-traced representation; the unrestricted local Weyl theorem neither
requires nor constructs a traced transfer matrix.

Figure~\ref{fig:source-map} shows the source block, fixed reshape, and
retained-endpoint operator domains used in this construction.
\begin{figure*}[t]
\centering
\includegraphics[width=\textwidth]{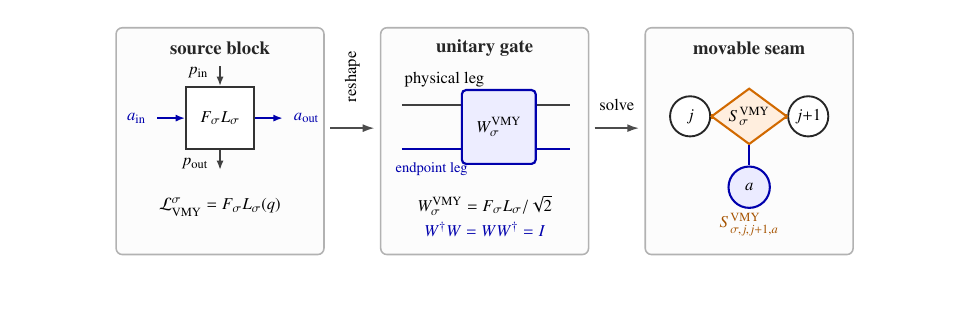}
\caption{From the source-normalized VMY block to the retained-endpoint seam.
The fixed physical--endpoint reshape gives a two-sided unitary gate, and the
local movement equation determines the two-spin seam.  Removing $F_\sigma$
gives the unrestricted Weyl construction.}
\label{fig:source-map}
\end{figure*}

Let $S_\sigma(q)$ denote the seam defined in
Eqs.~\eqref{eq:Splus}--\eqref{eq:Sminus}.  Retaining the endpoint $a$, define
\begin{align}
 H_{\rm Weyl}^{\sigma}(q;k)
 &=\sum_{j\neq k-1}h_{j,j+1}(q)+S_{\sigma,k-1,k,a}(q),
 \label{eq:HWeyl}\\
 S_{\sigma,ABa}^{\rm VMY}(M)
 &=F_{\sigma,B}S_{\sigma,ABa}(q)F_{\sigma,B}^\dagger,
 \label{eq:SVMYdef}\\
 H_{\rm VMY}^{\sigma}(M;k)
 &=\sum_{j\neq k-1}h_{j,j+1}+S_{\sigma,k-1,k,a}^{\rm VMY}.
 \label{eq:HVMY}
\end{align}
\begin{definition}[Output-normalization family]
\label{def:outputfamily}
Both retained-endpoint Hamiltonians sit in the family
\begin{align}
 f_\eta&=\eps^\eta,\nonumber\\[-0.4ex]
 F_+^{(\eta)}&=\operatorname{diag}(1,f_\eta),&
 F_-^{(\eta)}&=\operatorname{diag}(f_\eta,1),
 \label{eq:outputgaugefamily}\\
 W_\sigma^{(\eta)}&=F_\sigma^{(\eta)}W_\sigma,&
 S_\sigma^{(\eta)}&=F_{\sigma,B}^{(\eta)}S_\sigma
 F_{\sigma,B}^{(\eta)\dagger},\nonumber\\
 H_\eta^\sigma(M;k)&=\sum_{j\neq k-1}h_{j,j+1}
 +S_{\sigma,k-1,k,a}^{(\eta)},
 \label{eq:Heta}
\end{align}
where $\eta\in\mathbb Z_{2N}$.  Thus
$H_0^\sigma=H_{\rm Weyl}^\sigma$ and
$H_{1-M}^\sigma=H_{\rm VMY}^\sigma$.
\end{definition}

Bond transport makes all members of Definition~\ref{def:outputfamily}
locally gauge equivalent.  Section~\ref{sec:flux} identifies the global datum
that survives on a ring, after the local movement theorem has been stated.

Here ``source-normalized'' denotes the convention obtained from
Eq.~\eqref{eq:vmyBlocks} after fixing the clock--shift basis,
ordinary-metric reshape, distributed bulk, two-spin-plus-endpoint support,
and traceless affine representative.

\subsection{Objects and scope of claims}
\label{sec:objects-scope}

The article uses four operator scopes.  (i) The unrestricted local identities
hold for bounded unitary $q$-Weyl pairs.  (ii) Flux transmutation, endpoint
conjugacy, and spectral pairing concern the finite cyclic retained-endpoint
Hamiltonians on $(\mathbb C^2)^{\otimes L}\otimes\mathbb C_a^N$.  (iii) The
endpoint-traced VMY operators enter through the reproduced local blocks, while
their fusion and continuum interpretation belongs to the traced operator
domain.  (iv) Exact controls and the finite scan test the opposite orbit;
transfer and second-moment calculations supply additional diagnostics.
Endpoint-Weyl conjugation divides the labels into two parity orbits, implying
at most two classes under unrestricted global unitary equivalence.  The
analytic converse and a dressed-integrability construction remain open.

Table~\ref{tab:four-objects} compares the four operator domains used in
the article.
\begin{table*}[t]
\centering
\caption{Operator domains and global statements used in this work.}
\label{tab:four-objects}
\begin{tabular}{p{0.22\textwidth}p{0.30\textwidth}p{0.40\textwidth}}
\hline
\textbf{Object} & \textbf{Local structure} & \textbf{Global statement} \\
\hline
\textbf{Unrestricted $q$-Weyl seam}\newline
\emph{endpoint algebra}
& Two-sided unitary gate;\newline exact one-step movement;\newline quartic local identity.
& Algebraic scope only; no finite root or traced transfer matrix is assumed. \\
\hline
\textbf{Ungauged $H_{\mathrm{Weyl}}^{\sigma}$}\newline
\emph{endpoint retained}
& Finite clock--shift evaluation;\newline relative to VMY by a one-bond twist;\newline the same local seam spectrum.
& $\nu=(-1)^{M-1}$.\newline Odd $M$: protected; even $M$: opposite holonomy orbit. \\
\hline
\textbf{Source-normalized $H_{\mathrm{VMY}}^{\sigma}$}\newline
\emph{endpoint retained}
& VMY output gauge and fixed reshape;\newline the same movement law and seam spectrum.
& The source value satisfies $\eta\equiv M-1\pmod2$ for every coprime root; even/odd $N$ give the two pairing mechanisms. \\
\hline
\textbf{VMY temporal MPO}\newline
\emph{endpoint traced}
& Endpoint-traced source operator\newline on the physical chain.
& Transfer conservation belongs to the finite cyclic source representation; the present pairing theorem concerns the retained family. \\
\hline
\end{tabular}
\end{table*}

\section{Local movable-seam theorem}
\label{sec:unitarity}
\label{sec:construction}
\label{sec:movement}

The local statement separates the algebraic construction from the finite
root-of-unity consequences.  Let $A,B,C$ be successive physical sites.  The
seam is required to obey
\begin{equation}
 W_{\sigma,Ba}^{\dagger}
 \bigl(h_{BC}+S_{\sigma,ABa}\bigr)W_{\sigma,Ba}
 =h_{AB}+S_{\sigma,BCa}.
 \label{eq:movement}
\end{equation}
In the ordered two-spin basis
$|0\rangle=|00\rangle$, $|1\rangle=|01\rangle$,
$|2\rangle=|10\rangle$, $|3\rangle=|11\rangle$, write
$E_{rc}=|r\rangle\langle c|$.

\begin{theorem}[Local movable Weyl seam]
\label{thm:localmovement}
For every unitary representation of Eq.~\eqref{eq:weyl},
\begin{equation}
 W_\sigma^\dagger W_\sigma=W_\sigma W_\sigma^\dagger
 =I_{\mathbb C^2\otimes\cH_a},\qquad \sigma=\pm .
 \label{eq:unitarity}
\end{equation}
The following Hermitian Laurent operators solve Eq.~\eqref{eq:movement}:
\begin{align}
S_+(q)=\frac12\bigl[&
-E_{01}\Y{-1}{-1}+q^3E_{02}\Y{-1}{0}
-q^4E_{03}\Y{-2}{-1}\nonumber\\
&-q^2E_{10}\Y{1}{1}+q^2E_{12}\Y{0}{1}
-qE_{13}\Y{-1}{0}\nonumber\\
&+q^{-3}E_{20}\Y{1}{0}+q^{-2}E_{21}\Y{0}{-1}
+q^2E_{23}\Y{-1}{-1}\nonumber\\
&-E_{30}\Y{2}{1}-q^{-1}E_{31}\Y{1}{0}
+E_{32}\Y{1}{1}\bigr],
\label{eq:Splus}\\
S_-(q)=\frac12\bigl[&
E_{01}\Y{-1}{1}-qE_{02}\Y{-1}{0}
-E_{03}\Y{-2}{1}\nonumber\\
&+q^{-2}E_{10}\Y{1}{-1}+q^2E_{12}\Y{0}{-1}
+q^3E_{13}\Y{-1}{0}\nonumber\\
&-q^{-1}E_{20}\Y{1}{0}+q^{-2}E_{21}\Y{0}{1}
-q^{-2}E_{23}\Y{-1}{1}\nonumber\\
&-q^{-4}E_{30}\Y{2}{-1}+q^{-3}E_{31}\Y{1}{0}
-E_{32}\Y{1}{-1}\bigr].
\label{eq:Sminus}
\end{align}
They have vanishing physical diagonal and physical partial trace; every
finite-dimensional evaluation is traceless.
\end{theorem}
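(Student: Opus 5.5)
The proof splits into three independent checks: unitarity of $W_\sigma$, the movement identity \eqref{eq:movement} for the displayed $S_\sigma$, and the Hermiticity and trace statements. All three are computations in the Laurent--Weyl algebra generated by $X^{\pm1},Z^{\pm1}$. Throughout we use only \eqref{eq:weyl}, the product and adjoint rules \eqref{eq:weylproduct}, and $|q|=1$. No finite-order relation is invoked, so every identity holds in the unrestricted setting.

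\textbf{Unitarity.} Write $L_+=\begin{pmatrix}I&\omega X^{-1}\\-qXZ&qZ\end{pmatrix}$ as a $2\times2$ block matrix over the endpoint algebra. We then compute $L_+^\dagger L_+$ and $L_+L_+^\dagger$ entrywise.
\begin{itemize}
\item The diagonal blocks give $I+|q|^2Z^\dagger X^\dagger XZ=2I$ and its companions.
\item The off-diagonal blocks reduce to expressions such as $\omega X^{-1}-\bar q q\,Z^{-1}X^{-1}\cdot XZ\cdots$. Each of these collapses to zero once $ZX^{-1}=\omega^{-1}X^{-1}Z$ (equivalently $X^{-1}Z=\omega ZX^{-1}$) is applied.
\end{itemize}
This yields $L_\pm^\dagger L_\pm=L_\pm L_\pm^\dagger=2I$, and dividing by $2$ gives \eqref{eq:unitarity}. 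The same check applies to $L_-$. Two-sided unitarity is checked directly rather than inferred, since $\cH_a$ may be infinite-dimensional.

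\textbf{Hermiticity and traces.} Using $Y_{m,n}^\dagger=q^{2mn}Y_{-m,-n}$ and $\bar q=q^{-1}$, we pair each term $c\,E_{rc'}Y_{m,n}$ with the term $E_{c'r}$ and verify that its coefficient equals $\bar c\,q^{2mn}$ with monomial $Y_{-m,-n}$. For example, in $S_+$ the pair $-E_{01}Y_{-1,-1}$ and $-q^2E_{10}Y_{1,1}$ matches because $q^{2\cdot1}=q^2$. This is twelve terms, i.e.\ six pairs per branch. The diagonal statements are read off directly:
\begin{itemize}
\item No $E_{rr}$ appears, so the physical diagonal vanishes.
\item The physical partial trace over either spin collects only terms $E_{rc}$ with $r\neq c$ after tracing. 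Equivalently, the two-spin trace of each block is zero.
\end{itemize}
In a finite-dimensional evaluation, $\Tr S=\sum_r\Tr_a(\text{coefficient of }E_{rr})=0$.

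\textbf{Movement identity.} This is the main step. We expand both sides of \eqref{eq:movement} on $\mathbb C^2_A\otimes\mathbb C^2_B\otimes\mathbb C^2_C\otimes\cH_a$. Here $W_{Ba}$ is a $2\times2$ matrix in $B$ with endpoint-valued entries, and it commutes with the $A$ and $C$ factors. We organize the calculation by the $8\times8$ physical matrix units on $ABC$. Each side then becomes a finite sum $\sum c\,E^{ABC}_{rs}\,Y_{m,n}$, and the equation reduces to matching the Laurent coefficients in normal-ordered form using \eqref{eq:weylproduct}. Conjugating $h_{BC}$ by $W_{Ba}$ produces endpoint-dependent terms on $ABC$ that are localized on $BC$. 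Conjugating $S_{ABa}$ gives terms on $AB$ twisted by $W$. The requirement is that their sum equals $h_{AB}+S_{BCa}$.

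A cleaner route is to exploit $U(1)$ charge conservation. Both $h$ and $S$ conserve the combined charge, with $X$ carrying the compensating charge, so the identity splits into blocks by total magnetization of $ABC$. Within each block, the terms can be compared monomial by monomial. I expect the bookkeeping of the $q$-phases from reordering $X$ past $Z$ to be the main obstacle. The mitigation is to put every product into the normal form $Y_{m,n}$ and compare exponents symbolically, which can be done mechanically.

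Finally, the $\sigma=-$ case should follow from $\sigma=+$ by the branch exchange. If a simple relation between $L_-$ and $L_+$ exists, such as spin flip combined with $X\to X^{-1}$ or $q\to q^{-1}$, it would halve the work. Otherwise the same expansion is repeated for $\sigma=-$.
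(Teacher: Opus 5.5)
Your proposal follows essentially the paper's route: both products $L_\sigma^\dagger L_\sigma$ and $L_\sigma L_\sigma^\dagger$ are checked blockwise using only $ZX=\omega XZ$, Hermiticity and tracelessness are read off the twelve paired terms, and the movement identity is proved by normal-ordering the residual into matrix-unit$\otimes Y_{m,n}$ coefficients and checking that they cancel (the paper likewise only organizes this cancellation and delegates the full check to exact machine data). Two small precisions. First, the vanishing ``physical partial trace'' is the trace over both spins $A,B$; single-spin partial traces do not vanish, e.g.\ $\operatorname{Tr}_A S_+\propto q^2-1$. Second, the selection rule $m=n_\downarrow(r)-n_\downarrow(c)$ only fixes the $X$-exponent attached to each physical matrix unit. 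It does not split the identity by the physical magnetization of $ABC$, because the gates flip spins and that magnetization is conserved only together with the endpoint $X$-degree.
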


\begin{proof}[Proof sketch]
The four blocks of $L_\sigma^\dagger L_\sigma$ and
$L_\sigma L_\sigma^\dagger$ reduce to the Weyl relation and give $2I$.
For the seam, expand the complete fixed-$N$ operator space in
$E_{rc}\otimes X^mZ^n$.  A representation-independent particular solution is
found in the source-generated rectangle $-2\le m\le2$, $-1\le n\le1$.
The physical matrix-unit multiplication is
\[
 E_{sr}E_{j\ell}E_{\ell w}=\delta_{rj}E_{sw},
\]
so the coefficient map includes the factor $\delta_{rj}$ and vanishes when
$r\ne j$.  Exact Laurent normal ordering yields
Eqs.~\eqref{eq:Splus}--\eqref{eq:Sminus}; substitution cancels every
coefficient of the movement residual.  The full block multiplication,
coefficient map, and cancellation organization are given in
\ref{app:unitarity} and \ref{app:ansatz}.  The complete homogeneous
kernel is classified below, independently of the Laurent rectangle.
\end{proof}

\begin{proposition}[Finite-dimensional root constraint]
If invertible $d\times d$ matrices satisfy $ZX=q^2XZ$, then
\begin{equation}
 (q^2)^d=1.
 \label{eq:detconstraint}
\end{equation}
If $q^2$ has exact order $N$, every finite-dimensional irreducible unitary
representation has dimension $N$ and is, up to phases and unitary equivalence,
the clock--shift representation in Eq.~\eqref{eq:weylrep}.
\end{proposition}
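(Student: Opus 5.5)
The plan has two parts: a determinant argument for the root constraint, and the standard clock--shift classification for the irreducible unitary representations.

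\emph{Root constraint.} From $ZX=q^2XZ$ we get $X^{-1}ZX=q^2Z$. Taking determinants of $d\times d$ matrices gives $\det Z=(q^2)^d\det Z$. Invertibility of $Z$ means $\det Z\neq0$, so $(q^2)^d=1$. This is Eq.~\eqref{eq:detconstraint}. In particular, if $q^2=\omega$ has exact order $N$, then $N\mid d$.

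\emph{Classification.} Let $(X,Z)$ be an irreducible unitary pair on $\mathbb C^d$ with $\omega$ of exact order $N$. I will carry out four steps.
\begin{enumerate}
\item \textbf{Central elements.} The Weyl relation gives $Z^NX=\omega^NXZ^N=XZ^N$, so $Z^N$ commutes with $X$ and with $Z$. The same argument with the roles exchanged shows that $X^N$ is central.
\item \textbf{Scalars.} By Schur's lemma, irreducibility makes both central elements scalar: $Z^N=\alpha I$ and $X^N=\beta I$ with $|\alpha|=|\beta|=1$.
\item \textbf{Orbit of an eigenvector.} Pick a unit eigenvector $v$ of the normal operator $Z$, say $Zv=\lambda v$. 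Then $Z(X^kv)=\omega^k\lambda X^kv$. The vectors $X^kv$ for $k=0,\ldots,N-1$ therefore have the distinct eigenvalues $\omega^k\lambda$, so they are orthonormal; they are unit vectors because $X$ is unitary.
\item \textbf{Irreducibility forces dimension $N$.} Since $X^Nv=\beta v$, the span of these $N$ vectors is invariant under both $X$ and $Z$. Irreducibility then forces this span to be all of $\mathbb C^d$, so $d=N$.
\end{enumerate}

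\emph{Normalization.} In the basis $|n\rangle=X^{n}v$ (suitably rescaled by phases), $Z$ is diagonal with eigenvalues $\lambda\omega^n$ and $X$ shifts cyclically up to the phase $\beta$. The rescaling I intend is $e_n=\beta^{-n/N}X^nv$, using a fixed $N$-th root $\beta^{1/N}$. With this choice $X$ becomes $\beta^{1/N}$ times the cyclic shift of Eq.~\eqref{eq:weylrep}, and $Z$ becomes $\lambda\omega^{-1}$ times the clock matrix, after relabeling $n\mapsto n+1$ to match $Z|n\rangle=\omega^n|n\rangle$. Hence the pair equals $(\beta^{1/N}X_{\rm cs},\,\mu Z_{\rm cs})$ for phases $\beta^{1/N}$ and $\mu$. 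This is the clock--shift representation up to phases and unitary equivalence.

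The main point needing care is the last bookkeeping step: checking that the cyclic closure $X^N=\beta$ is absorbed by exactly one overall phase on $X$, and that the labeling convention $n=1,\ldots,N$ (with residue zero represented by $|N\rangle$) produces exactly Eq.~\eqref{eq:weylrep}. The other steps are standard.
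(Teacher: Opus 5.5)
Your proposal is correct and follows essentially the same route as the paper: a determinant of the Weyl relation for the root constraint, then the orbit $\{X^n v\}_{n=0}^{N-1}$ of a $Z$-eigenvector, with distinct eigenvalues $\lambda\omega^n$, spanning an invariant subspace that irreducibility makes the whole space, followed by a phase rephasing. Your centrality-plus-Schur step ($X^N=\beta I$) usefully makes explicit why that span is closed under $X$, a point the paper's sketch only asserts.
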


\begin{proof}[Proof sketch]
The determinant of the Weyl relation gives Eq.~\eqref{eq:detconstraint}.
For the irreducible statement, the orbit of a $Z$ eigenvector under $X$ has
$N$ distinct eigenvalues and spans an invariant subspace.  Details are in
\ref{app:unitarity}.
\end{proof}

\begin{theorem}[Finite source-normalized movement]
\label{thm:genericMmovement}
For every coprime $(M,N)$ and either branch, the source-normalized seam is
Hermitian and traceless and obeys
\begin{equation}
 (W^{\rm VMY}_{\sigma,Ba})^\dagger
 \bigl[h_{BC}+S^{\rm VMY}_{\sigma,ABa}\bigr]
 W^{\rm VMY}_{\sigma,Ba}
 =h_{AB}+S^{\rm VMY}_{\sigma,BCa}.
 \label{eq:genericMmovement}
\end{equation}
On a distributed-twist ring with $L\ge3$, if $G_{\sigma,k}^{(\eta)}$ embeds
$W_\sigma^{(\eta)}$ at site $k$, then every output normalization satisfies
\begin{equation}
 (G_{\sigma,k}^{(\eta)})^\dagger
 H_\eta^\sigma(M;k)G_{\sigma,k}^{(\eta)}
 =H_\eta^\sigma(M;k+1).
 \label{eq:globalmovement}
\end{equation}
In particular,
\begin{align}
 G_{\sigma,k}^\dagger H_{\rm Weyl}^\sigma(q;k)G_{\sigma,k}
 &=H_{\rm Weyl}^\sigma(q;k+1),
 \label{eq:globalmovementWeyl}\\
 (G_{\sigma,k}^{\rm VMY})^\dagger H_{\rm VMY}^\sigma(M;k)
 G_{\sigma,k}^{\rm VMY}
 &=H_{\rm VMY}^\sigma(M;k+1).
 \label{eq:globalmovementVMY}
\end{align}
\end{theorem}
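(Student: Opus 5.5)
The plan is to reduce everything to Theorem~\ref{thm:localmovement} together with the two gauge identities \eqref{eq:sourcegauge} and \eqref{eq:bulktransport}. First I fix a coprime $(M,N)$ and note that the clock--shift operators of Eq.~\eqref{eq:weylrep} are a unitary representation of Eq.~\eqref{eq:weyl} with $\omega=q^2$. Theorem~\ref{thm:localmovement} therefore applies verbatim. It gives the unitarity of $W_\sigma$, the Hermiticity and tracelessness of $S_\sigma(q)$, and the Weyl movement identity \eqref{eq:movement}. Since $F_{\sigma}^{(\eta)}$ is a diagonal unitary on site $B$, the operator $W_\sigma^{(\eta)}=F_{\sigma,B}^{(\eta)}W_\sigma$ is unitary. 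Likewise $S_\sigma^{(\eta)}=F_{\sigma,B}^{(\eta)}S_\sigma F_{\sigma,B}^{(\eta)\dagger}$ is a unitary conjugate of a Hermitian traceless operator, so it is again Hermitian and traceless. Taking $\eta=1-M$ gives the source-normalized case.

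For the local identity \eqref{eq:genericMmovement}, and more generally for every $\eta$, I write $F=F_\sigma^{(\eta)}$ and expand the left-hand side as
\[
W_\sigma^\dagger F_B^\dagger\bigl[h_{BC}+F_BS_{\sigma,ABa}F_B^\dagger\bigr]F_BW_\sigma .
\]
The seam term collapses to $W_\sigma^\dagger S_{\sigma,ABa}W_\sigma$. For the bulk term I use \eqref{eq:bulktransport}, which gives $F_B^\dagger h_{BC}F_B=F_Ch_{BC}F_C^\dagger$. Because $W_{\sigma,Ba}$ does not act on $C$, the factors $F_C$ commute past it. Theorem~\ref{thm:localmovement} then yields $F_C\bigl(h_{AB}+S_{\sigma,BCa}\bigr)F_C^\dagger$. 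Since $F_C$ commutes with $h_{AB}$, this equals $h_{AB}+S^{(\eta)}_{\sigma,BCa}$, which is exactly the movement identity with the gauged seam now at $(B,C)$.

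For the ring statement \eqref{eq:globalmovement} I take $A=k-1$, $B=k$, $C=k+1$. The condition $L\ge3$ ensures that these three sites are distinct and that the remaining bonds are well defined. The gate $G^{(\eta)}_{\sigma,k}$ acts only on site $k$ and the endpoint. Every bond $h_{j,j+1}$ that touches neither $k$ nor the endpoint commutes with it, so these bonds are left unchanged. The terms that are not invariant are $h_{k,k+1}$ and the seam $S^{(\eta)}_{\sigma,k-1,k,a}$, and the local identity above sends their sum to $h_{k-1,k}+S^{(\eta)}_{\sigma,k,k+1,a}$. This is precisely the bond content of $H^\sigma_\eta(M;k+1)$: bond $(k-1,k)$ is restored and the seam now replaces bond $(k,k+1)$. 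Specializing to $\eta=0$ gives \eqref{eq:globalmovementWeyl}, and specializing to $\eta=1-M$ gives \eqref{eq:globalmovementVMY}.

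The main obstacle is verifying \eqref{eq:bulktransport} with the correct orientation for a general diagonal one-site unitary $F=\operatorname{diag}(a,b)$. I would check it directly. The $\sigma^z\sigma^z$ term commutes with both sides. For the hopping terms, $\sigma_B^+\sigma_C^-$ acquires the factor $\bar a b$ under conjugation by $F_B^\dagger\cdot F_B$. Under conjugation by $F_C\cdot F_C^\dagger$ it acquires $b\bar a$ from site $C$, so the two factors agree, and the Hermitian-conjugate term follows in the same way. A second point to keep straight is that $F_B$ in the seam definition sits on the site that the gate $W_{\sigma,Ba}$ outputs, so that it cancels against $W^{(\eta)\dagger}$ rather than doubling. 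I would confirm this against the source map \eqref{eq:sourcegauge}.
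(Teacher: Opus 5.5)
Your proposal is correct and follows essentially the same route as the paper: you insert $W_\sigma^{(\eta)}=F_\sigma^{(\eta)}W_\sigma$ and the conjugated seam into the unrestricted movement identity, then use the diagonal bond-transport identity to push the output gauge from $B$ to $C$, where it commutes past the gate and the conjugated seam term and re-gauges the translated seam. As in the paper, every spectator bond commutes with the embedded gate, $L\ge3$ guarantees three distinct sites, and Hermiticity and tracelessness follow by unitary similarity.
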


\begin{proof}[Proof sketch]
Insert $W_\sigma^{(\eta)}=F_\sigma^{(\eta)}W_\sigma$ and
$S_\sigma^{(\eta)}=F_{\sigma,B}^{(\eta)}S_\sigma
F_{\sigma,B}^{(\eta)\dagger}$ into Eq.~\eqref{eq:movement}; the diagonal
transport identity \eqref{eq:bulktransport} moves the output gauge to site
$C$.  All spectator bonds commute with the embedded gate.  This proves the
local and ring identities, including the three-distinct-site case $L=3$.
\end{proof}

Figure~\ref{fig:movement} depicts the one-step identity and the ordinary bond
restored behind the translated seam.
\begin{figure*}[t]
\centering
\includegraphics[width=\textwidth]{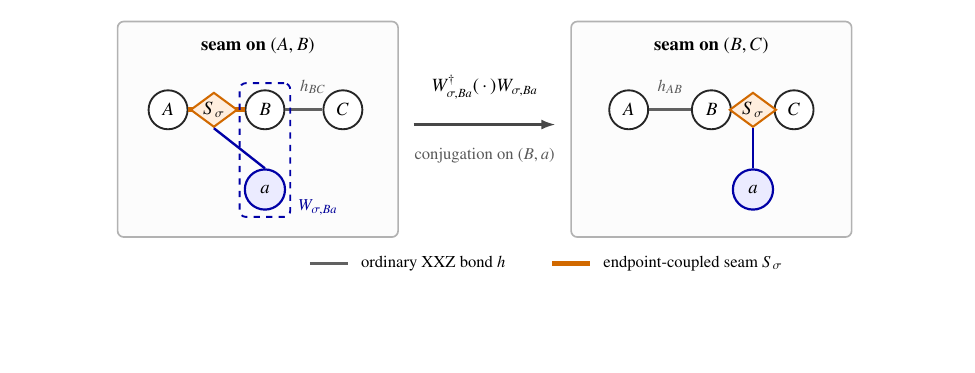}
\caption{One-step movement of the retained-endpoint seam.
The dashed box denotes only the support $B\otimes a$ of the gate
$W_{\sigma,Ba}$; it does not denote the full seam support or the identity on
site $C$.  Conjugation moves the seam from $ABa$ to $BCa$ while restoring
the ordinary bond on $AB$.}
\label{fig:movement}
\end{figure*}

\subsection{Cyclic charge and local spectrum}
\label{sec:symmetry}

Define the combined physical--endpoint charge
\begin{equation}
 g=\operatorname{diag}(1,\omega),\qquad
 Q=g^{\otimes L}\otimes Z_a^{-1}.
 \label{eq:basicQ}
\end{equation}
The selection rule of every seam monomial makes
\begin{equation}
 [H_\eta^\sigma(M;k),Q]=0
 \qquad(\eta\in\mathbb Z_{2N},\ \sigma=\pm).
 \label{eq:chargecomm}
\end{equation}
This is a single combined $\mathbb Z_N$ charge; it is not conservation of
physical $S^z$ alone.

\begin{theorem}[Local seam spectrum]
In the unrestricted Laurent--Weyl algebra,
\begin{equation}
 S_\sigma^4-\frac32S_\sigma^2+\frac{q+q^{-1}}2S_\sigma
 +\left[\frac1{16}-\frac1{16}(q+q^{-1})^2\right]I=0.
 \label{eq:seamquartic}
\end{equation}
In the standard $N$-dimensional realization, with
$\theta=\pi M/N$, all output-normalized seams have
\begin{equation}
 \operatorname{spec}S_\sigma^{(\eta)}
 =\left\{-\frac12\pm\cos\frac\theta2,\quad
          \frac12\pm\sin\frac\theta2\right\},
 \label{eq:seamspectrum}
\end{equation}
and each of the four distinct eigenvalues has multiplicity $N$.
\end{theorem}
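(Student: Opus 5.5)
The proof splits into the algebraic quartic identity and its finite-root consequences. For Eq.~\eqref{eq:seamquartic} I would work in the unrestricted Laurent--Weyl algebra, writing $S_\sigma=\tfrac12\sum_{r\ne c}E_{rc}\,\alpha_{rc}Y_{m_{rc},n_{rc}}$ exactly as in Eqs.~\eqref{eq:Splus}--\eqref{eq:Sminus}. I would then compute $S_\sigma^2$ block by block using $E_{rc}E_{c'd}=\delta_{cc'}E_{rd}$ and the product rule \eqref{eq:weylproduct}. The key structural fact is the charge selection rule behind Eq.~\eqref{eq:chargecomm}: each entry $(r,d)$ of $S_\sigma^2$ is a sum of monomials all carrying the same $(m,n)$, so it collapses to a single monomial times a polynomial in $q$.

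Next I would compute $S_\sigma^3=S_\sigma^2S_\sigma$ and $S_\sigma^4=(S_\sigma^2)^2$ the same way. Each block is again a scalar Laurent polynomial in $q$ times one Weyl monomial. I would check that in the combination \eqref{eq:seamquartic} every block's coefficient cancels identically in $q$. To organize the bookkeeping, I would use the fact that the diagonal blocks of $S_\sigma^2$ are scalars, because the monomial exponents around closed index cycles cancel. This is the main obstacle: it is a finite but error-prone normal-ordering computation with twelve terms. I would first verify it at a generic numerical $q$ with a large clock--shift realization to fix signs, then confirm it symbolically. For $\sigma=-$, I expect a branch-exchange relation (physical spin flip combined with $X\mapsto X^{-1}$ or $Z\mapsto Z^{-1}$) to map $S_+$ to $S_-$, which would halve the work.

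For the spectrum, note first that $S^{(\eta)}_\sigma=F S_\sigma F^\dagger$ is unitarily conjugate to $S_\sigma$, so all output normalizations share a spectrum and it suffices to treat $\eta=0$. The quartic is a polynomial identity, so every eigenvalue of the Hermitian evaluation is a root of
\[
p(x)=x^4-\tfrac32x^2+\tfrac{q+q^{-1}}2x+\tfrac1{16}-\tfrac1{16}(q+q^{-1})^2 .
\]
With $q+q^{-1}=2\cos\theta$, I would factor $p(x)=\bigl[(x+\tfrac12)^2-\cos^2\tfrac\theta2\bigr]\bigl[(x-\tfrac12)^2-\sin^2\tfrac\theta2\bigr]$ by expanding and using $\cos^2\tfrac\theta2-\sin^2\tfrac\theta2=\cos\theta$. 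This gives the four values in \eqref{eq:seamspectrum}. Since $0<\theta<\pi$, they are distinct: the two brackets give values in different ranges except at the excluded endpoints.

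For the multiplicities, I would use that $S_\sigma$ commutes with the local charge $g\otimes g\otimes Z^{-1}$, and more strongly with a shift-type operator cycling the $N$ charge sectors. A candidate is $X$ dressed by a physical phase, or $Z$ combined with an endpoint conjugation. This would make all $N$ charge sectors, each of dimension $4$, isospectral. It then suffices to show that a single $4\times4$ sector block has four distinct eigenvalues. That follows from tracelessness together with one further trace identity: $\Tr S_\sigma^2=\tfrac14\cdot12N$ fixes $\sum\lambda_i^2$. Combined with the minimal polynomial dividing $p$, these constraints force each root to appear. As a fallback, I would diagonalize the single $4\times4$ sector matrix explicitly and check that its characteristic polynomial equals $p$.
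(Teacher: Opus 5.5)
Your proposal is correct. The quartic part is the paper's argument. The paper also normal-orders the powers of $S_\sigma$ and checks the sixteen occupied physical--Laurent coefficients, which is exactly your observation that every $(r,d)$ block of $S_\sigma^k$ is one Weyl monomial times a Laurent polynomial in $q$.

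For the sector equivalence, your first candidate is the right one, and you should state it explicitly. The $Z$-exponent of every seam term is $\chi_\sigma[n_B(r)-n_B(c)]$. Hence the unitary $\Gamma_B(\omega^{\chi_\sigma})\otimes X_a$ commutes with $S_\sigma$ and multiplies $g\otimes g\otimes Z^{-1}$ by $\omega$, so it cycles all $N$ charge sectors. This is the basis-free form of the paper's diagonal gauges $D_t^\pm$ on the correlated basis $|r\rangle\otimes|d(r)-t\rangle$.

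The genuine difference is how the $4\times4$ block is pinned down.
\begin{itemize}
\item The paper writes $B_0^\sigma$ explicitly, computes $\operatorname{tr}B_0$, $\operatorname{tr}B_0^2$, $\operatorname{tr}B_0^3$ and $\det B_0$, and applies Newton's identities. It thus obtains the characteristic polynomial without using the quartic, and the agreement independently checks the unrestricted identity.
\item You use the quartic as an annihilating polynomial and need only two moments: $\operatorname{tr}B=0$ (no physical diagonal) and $\operatorname{tr}B^2=3$ (from $\Tr S_\sigma^2=3N$ and isospectrality). This is cheaper, but a transcription error in the quartic would go undetected.
\end{itemize}
Your claim that these two moments force each root to appear exactly once is true, but you should verify it. Write the multiplicities as $k_j$ and set $d_j=k_j-1$. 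The conditions $\sum d_j=\sum d_jr_j=\sum d_jr_j^2=0$ exclude the patterns $(2,1,1,0)$, $(2,2,0,0)$ and $(3,1,0,0)$, because each forces two roots to coincide. They exclude $(4,0,0,0)$ because it forces the repeated root to vanish, and then $\operatorname{tr}B^2=0\neq3$.

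That verification relies on distinctness of the four roots, and your stated reason for distinctness is wrong. The values $-\tfrac12+\cos\tfrac\theta2$ and $\tfrac12-\sin\tfrac\theta2$ both lie in $(-\tfrac12,\tfrac12)$, so the two brackets do not give disjoint ranges. The roots are nevertheless distinct: equality would require $\cos\tfrac\theta2+\sin\tfrac\theta2=1$, whereas $(\cos\tfrac\theta2+\sin\tfrac\theta2)^2=1+\sin\theta>1$ for $0<\theta<\pi$.

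Finally, if you use a spin-flip branch map to halve the quartic computation, it must be antilinear. Spin flip on both legs with $X\mapsto X^{-1}$, $Z\mapsto Z$, $q\mapsto\bar q$ (the paper's $\Theta$) sends the twelve terms of $S_+$ onto those of $S_-$. It preserves the quartic because its coefficients are invariant under $q\mapsto q^{-1}$. A linear $X\mapsto X^{-1}$ with $Z$ fixed does not respect $ZX=\omega XZ$.
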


\begin{proof}[Proof sketch]
Normal ordering verifies the quartic coefficient by coefficient.  In the
finite representation, the local restriction of $Q$ splits the $4N$-dimensional
space into $N$ four-dimensional blocks.  Diagonal gauges remove the charge
label, and the common block characteristic polynomial is
\[
 \lambda^4-\frac32\lambda^2+\cos\theta\,\lambda
 +\frac1{16}-\frac{\cos^2\theta}{4}.
\]
Its factorization gives Eq.~\eqref{eq:seamspectrum}.  Output gauges act by
unitary similarity.  The unrestricted Laurent verification and the finite
block matrices are given in \ref{app:localquartic} and
\ref{app:localblocks}, respectively.
\end{proof}

\subsection{Support-class uniqueness}
\label{sec:uniqueness}

The Laurent rectangle used above constructs a particular solution; uniqueness
is instead a statement in the complete local operator algebra.

\begin{theorem}[Support-class homogeneous kernel]
Let $K$ be one translationally identical bounded tensor on two neighboring
physical sites and the fixed endpoint representation.  If
\begin{equation}
 W_{\sigma,Ba}^\dagger K_{ABa}W_{\sigma,Ba}=K_{BCa},
 \label{eq:homogeneous}
\end{equation}
then
\begin{equation}
 K=I_A\otimes I_B\otimes r_a,\qquad r_a\in\{X,Z\}'.
 \label{eq:abstractkernel}
\end{equation}
Conversely every such $r_a$ is a homogeneous solution.  Hence the complete
affine family is
\begin{equation}
 S_{\sigma,\mathrm{general}}
 =S_\sigma+I_A\otimes I_B\otimes r_a,\qquad r_a\in\{X,Z\}'.
 \label{eq:abstractaffinefamily}
\end{equation}
For the irreducible clock--shift endpoint this reduces to
\begin{equation}
 \begin{aligned}
 S_{\sigma,\mathrm{general}}&=S_\sigma+cI_{4N},\\
 c&\in\mathbb R\quad\hbox{for Hermitian seams},\\
 c&=0\quad\hbox{after traceless normalization}.
 \end{aligned}
 \label{eq:affinefamily}
\end{equation}
The same statement holds for every output-normalized family by conjugation.
\end{theorem}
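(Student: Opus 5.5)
The plan is to read the homogeneous equation as an identity between operators on $A\otimes B\otimes C\otimes\cH_a$ and exploit supports. The left side $W_{\sigma,Ba}^\dagger K_{ABa}W_{\sigma,Ba}$ acts trivially on $C$, and the right side $K_{BCa}$ acts trivially on $A$. Hence both sides act trivially on $A$ and on $C$, i.e.\ both equal $I_A\otimes \kappa_{Ba}\otimes I_C$ for a single operator $\kappa$ on $B\otimes a$. I will make this rigorous by taking partial traces (or applying a normalized state on $A$ or $C$); this works for bounded operators once finite-dimensional physical factors are used, since $\Tr_A$ of an operator of the form $I_A\otimes Y$ recovers $2Y$. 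Reading off $K_{BCa}=I_B\otimes\kappa_{Ca}$ as a tensor identified with $K$ itself (translational identity) gives $K=I_{\rm first}\otimes\kappa$. Substituting into the left side gives $K_{ABa}=I_A\otimes\kappa_{Ba}$, so the equation becomes
\[
 W_{\sigma,Ba}^\dagger\,\kappa_{Ba}\,W_{\sigma,Ba}=I_B\otimes\kappa_{Ca}\quad\text{as operators on } A\otimes B\otimes C\otimes\cH_a .
\]
The left side is supported on $Ba$ and the right on $Ca$, so both must be of the form $I_B\otimes I_C\otimes r_a$. This yields $\kappa=I\otimes r_a$, hence $K=I_A\otimes I_B\otimes r_a$, together with $W_\sigma^\dagger(I\otimes r_a)W_\sigma=I\otimes r_a$, i.e.\ $[W_\sigma,I\otimes r_a]=0$ by unitarity from Theorem~\ref{thm:localmovement}.

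Next I translate commutation with $W_\sigma$ into $r_a\in\{X,Z\}'$. Writing $L_\sigma$ in $2\times2$ blocks as in Eq.~\eqref{eq:Ltensors}, commuting with $I\otimes r_a$ means that each block entry commutes with $r_a$. For $\sigma=+$ the entries are $I,\ \omega X^{-1},\ -qXZ,\ qZ$, so $r_a$ commutes with $X^{-1}$ and $Z$, hence with $X$ and $Z$. For $\sigma=-$ the entries $X$ and $qZ$ give the same conclusion. Conversely, if $r_a$ commutes with $X$ and $Z$, it commutes with every block entry (all are Weyl monomials), so $W_\sigma$ commutes with $I\otimes r_a$ and the equation holds trivially. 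The affine family \eqref{eq:abstractaffinefamily} then follows by linearity, since the difference of two solutions of Eq.~\eqref{eq:movement} solves Eq.~\eqref{eq:homogeneous}.

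For the clock--shift endpoint, irreducibility (the Proposition, or directly: commuting with $Z$ forces $r$ to be diagonal since $Z$ has distinct eigenvalues, and commuting with $X$ forces equal diagonal entries) gives $r_a=c\,I$. Hermiticity of $S_\sigma+cI$ together with Hermiticity of $S_\sigma$ forces $c\in\mathbb R$, and tracelessness of $S_\sigma$ (Theorem~\ref{thm:localmovement}) forces $c=0$. For the output-normalized family, conjugating by $F_{\sigma,B}^{(\eta)}$ and using the gauge transport as in Theorem~\ref{thm:genericMmovement} maps homogeneous solutions bijectively; since $F$ acts on $B$ only and commutes with $I\otimes I\otimes r_a$, the kernel is unchanged.

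The main obstacle is the first step: making the support argument precise. One must justify that ``acts trivially on $C$'' and ``acts trivially on $A$'' together force the operator into the form $I_A\otimes\kappa\otimes I_C$, and must handle the identification of the tensor $K$ at the two positions carefully. I would use the slice-map/partial-trace argument. Given an operator $T$ on $A\otimes\mathcal K$ with $T=I_A\otimes T'$, we have $T'=\tfrac12\Tr_A T$. Applying this to both sides successively, with all physical factors being $\mathbb C^2$, requires no boundedness subtleties beyond the boundedness of $K$.
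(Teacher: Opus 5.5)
Your overall route is the paper's. You kill the dependence on the first leg, translate, and use the intersection $[\mathcal B(Ba)\otimes I_C]\cap[I_B\otimes\mathcal B(Ca)]=I_B\otimes I_C\otimes\mathcal B(a)$. You then read $[W_\sigma,I_2\otimes r_a]=0$ off the four blocks ($X^{-1},Z$ for the plus branch, $X,Z$ for the minus branch). The clock--shift, Hermiticity, trace and gauge-transfer steps follow as in the paper.

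One step is misread, however. From the common value $I_A\otimes\kappa_{Ba}\otimes I_C$ you cannot ``read off'' $K_{BCa}=I_B\otimes\kappa_{Ca}$. Equating it with the right-hand side gives $K_{BCa}=\kappa_{Ba}\otimes I_C$, i.e.\ $K$ is trivial on its \emph{second} leg, not its first.

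The statement you actually need, $K=I_{\rm first}\otimes(\cdot)$, comes from the left-hand side. Since $W_{Ba}$ does not act on $A$, you get $K_{ABa}=W_{Ba}(I_A\otimes\kappa_{Ba})W_{Ba}^\dagger=I_A\otimes(W\kappa W^\dagger)_{Ba}$. So the operator you then call $\kappa$ is really $W\kappa W^\dagger$, not the $\kappa$ of the common value. This is exactly the paper's first step, which the paper phrases as: the traceless $A$-components satisfy $W^\dagger M_\mu W=0$ and therefore vanish by invertibility. After this relabelling, your substitution, second support argument and block analysis go through unchanged.

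Incidentally, you can combine the two correct readings: $K$ is trivial on its first leg (from the left-hand side) and on its second leg (from the right-hand side). The same intersection lemma then gives $K=I\otimes I\otimes r_a$ at once, with $[W_\sigma,I_2\otimes r_a]=0$ following by substitution. So your observation that the common value is trivial on both $A$ and $C$, used correctly, shortens the paper's two-stage support argument to one stage.
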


\begin{proof}[Proof sketch]
Expand $K$ in an orthonormal operator basis on site $A$.  Since
$W_{\sigma,Ba}$ does not act on $A$, every traceless $A$ coefficient must
vanish.  Translating the same tensor and using
\[
 [\mathcal B(Ba)\otimes I_C]\cap
 [I_B\otimes\mathcal B(Ca)]
 =I_B\otimes I_C\otimes\mathcal B(a)
\]
leaves an endpoint operator $r_a$.  Equation~\eqref{eq:homogeneous} then
requires $[r_a,X]=[r_a,Z]=0$; Schur's lemma makes it scalar in the irreducible
finite representation.  The full support-intersection proof is given in
\ref{app:uniqueness}.
\end{proof}

The theorem concerns the stated two-spin-plus-endpoint support, one translated
tensor, fixed representation and reshape.  Wider support, site dependence, or
a different endpoint representation defines a different classification
problem.

\section{Flux transmutation and output orbits}
\label{sec:flux}

Define the branch factors and diagonal output gauge by
\begin{equation}
 \chi_+=1,\qquad \chi_-=-1,\qquad
 \Gamma(t)=\operatorname{diag}(1,t),
 \label{eq:branchsigngauge}
\end{equation}
and, for a directed bond, set
\begin{equation}
 \begin{aligned}
 h_{j,j+1}^{[t]}&:=\Gamma_j(t)^\dagger h_{j,j+1}\Gamma_j(t)\\
 &=qt\,\sigma_j^+\sigma_{j+1}^-
  +q^{-1}t^{-1}\sigma_j^-\sigma_{j+1}^+\\
 &\quad+\frac{q+q^{-1}}4\sigma_j^z\sigma_{j+1}^z .
 \end{aligned}
 \label{eq:twistedbond}
\end{equation}

\begin{theorem}[Flux transmutation]
\label{thm:fluxreduction}
Let $L\ge3$ and choose two output labels
$\eta,\eta_0\in\mathbb Z_{2N}$.  Put
\begin{equation}
 r_{\eta\mid\eta_0}=\frac{f_\eta}{f_{\eta_0}},\qquad
 t_{\eta\mid\eta_0,\sigma}=r_{\eta\mid\eta_0}^{\chi_\sigma}.
 \label{eq:relativetwist}
\end{equation}
If the seam occupies $(k-1,k)$, then
\begin{equation}
 \Gamma_k(t)^\dagger H_\eta^\sigma(M;k)\Gamma_k(t)
 =H_{\eta_0}^\sigma(M;k)-h_{k,k+1}+h_{k,k+1}^{[t]},
 \qquad t=t_{\eta\mid\eta_0,\sigma}.
 \label{eq:fluxtransmutation}
\end{equation}
For $p=0,\ldots,L-2$, define
\begin{equation}
 U_p(t)=\prod_{\ell=0}^{p}\Gamma_{k+\ell}(t)^\dagger,
 \qquad b_p=k+p\pmod L .
 \label{eq:fluxstring}
\end{equation}
Then
\begin{equation}
 U_p(t)H_\eta^\sigma(M;k)U_p(t)^\dagger
 =H_{\eta_0}^\sigma(M;k)-h_{b_p,b_p+1}
  +h_{b_p,b_p+1}^{[t]}.
 \label{eq:fluxatbond}
\end{equation}
Thus the relative output gauge is exactly a single directed bond twist that
can be transported along the ordinary part of the ring.
\end{theorem}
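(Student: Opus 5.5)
The plan is to prove the single-bond statement \eqref{eq:fluxtransmutation} first, by rewriting the output gauge as a diagonal $\Gamma$ gauge, and then to obtain \eqref{eq:fluxatbond} by induction on $p$ using the bond-transport identity \eqref{eq:bulktransport}.

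\textbf{Step 1: output gauges as $\Gamma$ gauges.} For $\sigma=+$ we have $F_+^{(\eta)}=\Gamma(f_\eta)$ directly. For $\sigma=-$ we have $F_-^{(\eta)}=\operatorname{diag}(f_\eta,1)=f_\eta\,\Gamma(f_\eta^{-1})$. Since a global phase drops out of a conjugation,
\[
 S_\sigma^{(\eta)}=\Gamma_B(f_\eta^{\chi_\sigma})\,S_\sigma\,\Gamma_B(f_\eta^{\chi_\sigma})^\dagger
\]
for both branches. Here $B=k$ is the right site of the seam on $(k-1,k)$. The $\Gamma$'s form a group with $\Gamma(a)\Gamma(b)=\Gamma(ab)$ and $\Gamma(a)^\dagger=\Gamma(a^{-1})$ for $|a|=1$. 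Hence, with $t=r_{\eta\mid\eta_0}^{\chi_\sigma}$,
\[
 \Gamma_k(t)^\dagger\Gamma_k(f_\eta^{\chi_\sigma})=\Gamma_k(f_{\eta_0}^{\chi_\sigma}),
 \qquad\text{so}\qquad
 \Gamma_k(t)^\dagger S_\sigma^{(\eta)}\Gamma_k(t)=S_\sigma^{(\eta_0)}.
\]

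\textbf{Step 2: the remaining terms of $H_\eta^\sigma(M;k)$.} The only ordinary bonds containing site $k$ are $(k-1,k)$ and $(k,k+1)$. The bond $(k-1,k)$ is excluded from the sum, since the seam occupies it. Because $L\ge3$, the bond $(k,k+1)$ is a distinct ordinary bond. Conjugating it by $\Gamma_k(t)$ gives $h^{[t]}_{k,k+1}$ by definition \eqref{eq:twistedbond}. Every other bond commutes with $\Gamma_k$. This yields \eqref{eq:fluxtransmutation}, which is exactly the case $p=0$ of \eqref{eq:fluxatbond}, because $U_0(t)\,\cdot\,U_0(t)^\dagger=\Gamma_k^\dagger\,\cdot\,\Gamma_k$.

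\textbf{Step 3: induction on $p$.} Note that $U_{p}(t)\,\cdot\,U_{p}(t)^\dagger$ equals $U_{p-1}(t)\,\cdot\,U_{p-1}(t)^\dagger$ followed by $\Gamma_{b_p}(t)^\dagger\,\cdot\,\Gamma_{b_p}(t)$. Applying \eqref{eq:bulktransport} with $F=\Gamma(t)$ on the bond $(b_{p-1},b_p)$ gives
\[
 h^{[t]}_{b_{p-1},b_p}=\Gamma_{b_p}(t)\,h_{b_{p-1},b_p}\,\Gamma_{b_p}(t)^\dagger .
\]
Therefore the new conjugation restores $h_{b_{p-1},b_p}$. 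The same conjugation turns the next bond $h_{b_p,b_p+1}$ into $h^{[t]}_{b_p,b_p+1}$, again by \eqref{eq:twistedbond}. It commutes with all other terms.

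\textbf{Step 4: bookkeeping, the main point to check.} We must verify that the seam $S^{(\eta_0)}_{\sigma,k-1,k,a}$ is untouched by the later steps, and that every bond touched at step $p$ is an ordinary bond.
\begin{itemize}
 \item The gauged sites $k,k+1,\dots,k+p$ never include $k-1$ as long as $p\le L-2$. They include $k$ only at $p=0$, which was already handled in Steps 1 and 2.
 \item The bond $(b_p,b_p+1)$ is ordinary for $p\le L-2$. In the extreme case $p=L-2$ it is $(k-2,k-1)$, which is ordinary because the seam occupies $(k-1,k)$. The bond $(b_{p-1},b_p)$ is ordinary since $b_{p-1}\ne k-1$.
 \item The endpoint factor $a$ is never acted on by any $\Gamma$, so the seam is unaffected by the endpoint as well.
\end{itemize}
With these checks, \eqref{eq:fluxatbond} follows for all $p=0,\dots,L-2$. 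Hence the relative output gauge is a single directed twist that can be placed on any ordinary bond of the ring.
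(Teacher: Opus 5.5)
Your proposal is correct and follows the paper's own proof: you rewrite both output gauges as $\Gamma(f_\eta^{\chi_\sigma})$ up to a scalar that drops out, remove the relative factor by the one-site conjugation $\Gamma_k(t)$ so that only $h_{k,k+1}$ becomes twisted, and then induct on $p$ using the two-sided form $h^{[t]}_{j,j+1}=\Gamma_{j+1}(t)h_{j,j+1}\Gamma_{j+1}(t)^\dagger$ of the bulk-transport identity. Your bookkeeping check is also correct, and it is what the paper's ``induction'' leaves implicit: for $1\le p\le L-2$ the newly gauged site $b_p$ avoids the seam sites $k-1,k$, and the two bonds it touches are ordinary.
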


\begin{proof}[Outline of the proof]
For the two branches,
$F_+^{(\eta)}=\Gamma(f_\eta)$ and
$F_-^{(\eta)}=f_\eta\Gamma(f_\eta^{-1})$; the scalar in the second identity
drops out of conjugation.  Removing the relative output factor changes only
the adjacent ordinary bond, and repeated use of
Eq.~\eqref{eq:bulktransport} transports it.  \ref{app:vmyprovenance}
gives the complete termwise proof.
\end{proof}

\begin{corollary}[Endpoint conjugacy and relative holonomy]
\label{cor:outputorbits}
Let $m_{\rm inv}\in\{0,\ldots,N-1\}$ be the unique residue satisfying
$Mm_{\rm inv}\equiv1\pmod N$.  For both branches and every seam position,
\begin{equation}
 H_{\eta+2}^{\sigma}(M;k)
 =X_a^{-m_{\rm inv}}H_\eta^\sigma(M;k)X_a^{m_{\rm inv}}.
 \label{eq:etaplustwo}
\end{equation}
Relative to $\eta_{\rm VMY}=1-M$, the endpoint-Weyl orbits are labelled by
\begin{equation}
 \nu_\eta:=\left(\frac{f_\eta}{f_{\rm VMY}}\right)^N
 =(-1)^{\eta-(1-M)}\in\{+1,-1\}.
 \label{eq:relativeflux}
\end{equation}
The oriented twist of Eq.~\eqref{eq:relativetwist} obeys
$t^N=\nu_\eta$ when $\eta_0=\eta_{\rm VMY}$.
\end{corollary}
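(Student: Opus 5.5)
The proof has two parts: the conjugacy identity \eqref{eq:etaplustwo}, which is local, and the holonomy bookkeeping \eqref{eq:relativeflux}, which is arithmetic.

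\textbf{Conjugacy.} Since $X_a$ acts only on the endpoint, conjugation by $X_a^{m_{\rm inv}}$ leaves every ordinary bond $h_{j,j+1}$ unchanged. It therefore suffices to check the seam term
\[
X_a^{-m_{\rm inv}}S_{\sigma}^{(\eta)}X_a^{m_{\rm inv}}=S_{\sigma}^{(\eta+2)} .
\]
Two conjugation rules do the work. First, from Eq.~\eqref{eq:weyl}, $X^{-m}ZX^{m}=\omega^{m}Z$, while $X$ commutes with $X$. Hence each monomial transforms as
\[
X^{-m}\Y{a}{b}X^{m}=\omega^{mb}\Y{a}{b}.
\]
Second, the $n$-exponents in \eqref{eq:Splus}--\eqref{eq:Sminus} are tied to the physical matrix units by the charge selection rule behind Eq.~\eqref{eq:chargecomm}. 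For $E_{rc}$ the $Z$-power is $b=\pm(\text{number of up spins in } c-\text{number in } r)$, with the sign fixed by $\sigma$. I will confirm this termwise, e.g.\ $E_{01}\Y{-1}{-1}$ and $E_{32}\Y{1}{1}$ for $\sigma=+$.

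Consequently, conjugating by $X_a^{-m}$ has the same effect as conjugating by a physical diagonal gauge $D=\operatorname{diag}(1,\omega^{\mp m})^{\otimes 2}$ on $A,B$. That product gauge factorizes over the two sites, so I can apply the bond-transport identity \eqref{eq:bulktransport} one site at a time.

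The main obstacle is to show that this physical gauge reduces to a change of output gauge on $B$ alone, namely multiplication of $f_\eta$ by $\eps^2$. The $A$-part must act trivially, or be absorbed, in the presence of the particular structure of $S_\sigma$; transport would otherwise carry it onto bond $(k-2,k-1)$, which is not allowed. My plan is to use the movement theorem for this. By the uniqueness theorem, \eqref{eq:abstractaffinefamily} together with conjugation of the whole family, any Hermitian solution of the movement equation for the gate $F^{(\eta+2)}_\sigma W_\sigma$ with the given support is $S^{(\eta+2)}_\sigma$ plus a constant. So it suffices to verify two things:
\begin{itemize}
\item[(a)] $X_a^{m}$ conjugates the gate $W^{(\eta)}_\sigma$ to $W^{(\eta+2)}_\sigma$, up to a physical diagonal factor that can be transported away;
\item[(b)] conjugation preserves the trace, so the constant is zero.
\end{itemize}
For (a), $L_+$ has $X$-free diagonal entries $I$ and $qZ$. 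Conjugating the off-diagonal entries gives $\omega X^{-1}\mapsto\omega X^{-1}$ and $-qXZ\mapsto -q\omega^{m}XZ$. With $m=m_{\rm inv}$ we have $\omega^{m}=\eps^{2Mm}=\eps^{2}$, which is the desired factor $f\mapsto f\eps^2$ on the lower row. Rebalancing $\Gamma$ phases between the input and output legs handles the remaining entries. The $\sigma=-$ branch follows the same pattern with $F_-$.

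\textbf{Holonomy.} Write $f_{\rm VMY}=\eps^{1-M}$. Then
\[
\left(\frac{f_\eta}{f_{\rm VMY}}\right)^{N}=\eps^{N(\eta-1+M)}=(-1)^{\eta-(1-M)}.
\]
This is well defined on $\mathbb Z_{2N}$ because $\eps^{2N}=1$. The value is invariant under $\eta\mapsto\eta+2$ and flips under $\eta\mapsto\eta+1$, so by \eqref{eq:etaplustwo} it labels exactly the two endpoint-Weyl orbits.

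Finally, with $\eta_0=\eta_{\rm VMY}$ we have $t=r^{\chi_\sigma}$. Hence $t^N=\nu_\eta^{\pm1}=\nu_\eta$, since $\nu_\eta=\pm1$.
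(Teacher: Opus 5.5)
Your committed route, (a) plus (b) plus the support-class uniqueness theorem, is valid and genuinely different from the paper's. However, the detour that motivates it rests on a false claim.

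\textbf{The selection rule.} The rule you state (``up spins in $c$ minus up spins in $r$'') is the selection rule for the $X$-exponent, $m=n_\downarrow(r)-n_\downarrow(c)$. That is what $[H,Q]=0$ enforces, because conjugation by $Z_a^{-1}$ detects only $X$-powers. The $Z$-exponent instead obeys $n=\chi_\sigma\bigl(n_B(r)-n_B(c)\bigr)$ and depends only on leg $B$. Your version already fails for $E_{02}\Y{-1}{0}$ and $E_{12}\Y{0}{1}$ in $S_+$. With the correct rule, $X^{-m_{\rm inv}}E_{rc}\Y{u}{v}X^{m_{\rm inv}}=\eps^{2v}E_{rc}\Y{u}{v}$ is literally conjugation by $\Gamma_B(\eps^{2\chi_\sigma})$, i.e.\ $f_\eta\mapsto\eps^2 f_\eta$ on the seam leg. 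There is no site-$A$ part and no obstacle. That twelve-term inspection is the paper's whole proof.

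\textbf{Your fallback.} State (a) as an exact identity:
$X^{-m}L_+X^m=\operatorname{diag}(1,\omega^m)L_+$ and $X^{-m}L_-X^m=\operatorname{diag}(\omega^m,1)L_-$.
This holds because the diagonal entry $qZ$ is not invariant; it also picks up $\omega^m$. So no input/output rebalancing is needed.

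This exactness matters. A residual input-leg phase would not be harmless: absorbing it into the movement equation returns $S^{(\eta+2)}$ conjugated by a gauge on site $A$.

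With exact covariance $X^{-m}W^{(\eta)}_\sigma X^m=W^{(\eta+2)}_\sigma$, conjugating the $\eta$ movement equation shows that $X^{-m}S^{(\eta)}_\sigma X^m$ solves the $(\eta+2)$ equation. The homogeneous kernel is scalar by Schur's lemma, and tracelessness sets that scalar to zero.

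\textbf{Comparison.} The paper's argument is elementary and independent of the uniqueness theorem. It also produces the $B$-leg selection rule that is reused later for the $X^{\kappa_\eta}$ row phases of $\Theta_\eta$. Yours never inspects $S_\sigma$ and shows that the conjugacy is forced by covariance of the four gate blocks. The price is invoking irreducibility and the support-class uniqueness theorem. Your holonomy and $t^N=\nu_\eta$ computations match the paper's.
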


\begin{proof}[Outline of the proof]
The seam selection rule ties the $Z$ exponent in $Y_{u,v}=X^uZ^v$ to the
output-spin matrix-unit difference.  Since
\[
 X_a^{-m_{\rm inv}}Y_{u,v}X_a^{m_{\rm inv}}
 =(q^2)^{m_{\rm inv}v}Y_{u,v}=\varepsilon^{2v}Y_{u,v},
\]
endpoint conjugation supplies exactly $f_\eta\mapsto f_{\eta+2}$.
\ref{app:vmyprovenance} gives the full calculation.
\end{proof}

\begin{remark}[Charge labels under endpoint conjugacy]
\label{rem:chargeshift}
The equivalence in Eq.~\eqref{eq:etaplustwo} is not sector-label preserving.
With $Q=g^{\otimes L}\otimes Z_a^{-1}$,
\begin{equation}
 X_a^{-m_{\rm inv}}QX_a^{m_{\rm inv}}
 =\omega^{-m_{\rm inv}}Q .
 \label{eq:chargeshift}
\end{equation}
Thus the forward identification from $H_\eta^\sigma$ to
$H_{\eta+2}^\sigma$ shifts $m$ to $m+m_{\rm inv}$ modulo $N$.  This agrees
with $\kappa_{\eta+2}=\kappa_\eta-2m_{\rm inv}$ and
$c_{\eta+2}^\sigma=c_\eta^\sigma+2m_{\rm inv}$ modulo $N$.
\end{remark}

\begin{remark}[Portability of flux transmutation]
The transmutation criterion is portable: whenever a movable
seam has a diagonal physical-output freedom whose relative factor can be
removed by one physical-site conjugation, and the ordinary bond obeys the
corresponding transport identity, the output freedom becomes a one-bond ring
twist.  The number and protection of the resulting classes still depend on
the endpoint representation and on additional symmetry identities.
\end{remark}

\section{Global movement and antiunitary pairing}
\label{sec:globalpairing}

Local gauge equivalence does not determine the global spectrum.  The movement
gates generate a closed-cycle symmetry, while a branch-exchange map composed
with complex conjugation produces an antiunitary whose projective square
depends on the output-normalization class.  Its complete dependency chain is
proved in \ref{app:global}.

\subsection{Holonomy and co-moving translation}

Label sites by $\mathbb Z_L$, place the seam on $(L-1,0)$, and let $T$
translate states one site to the right.  For the output-normalized gate define
\begin{equation}
 \mathcal U_{\eta,\sigma}(k)=
 G_{\sigma,k}^{(\eta)}G_{\sigma,k+1}^{(\eta)}\cdots
 G_{\sigma,k+L-1}^{(\eta)},\qquad
 \mathcal M_{\eta,\sigma}=G_{\sigma,0}^{(\eta)}T ,
 \label{eq:holonomy}
\end{equation}
with indices modulo $L$ and rightmost factors acting first.

\begin{proposition}[Movement holonomy]
For every coprime $(M,N)$, $L\ge3$, output exponent $\eta$, and branch
$\sigma$,
\begin{equation}
 \begin{aligned}
 \relax[\mathcal U_{\eta,\sigma}(k),H_\eta^\sigma(M;k)]
 &=[\mathcal U_{\eta,\sigma}(k),Q]=0,\\
 [\mathcal M_{\eta,\sigma},H_\eta^\sigma(M;0)]
 &=[\mathcal M_{\eta,\sigma},Q]=0,\\
 \mathcal M_{\eta,\sigma}^{L}&=\mathcal U_{\eta,\sigma}(0).
 \end{aligned}
 \label{eq:holonomycomm}
\end{equation}
\end{proposition}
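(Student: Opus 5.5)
The plan rests on three ingredients: the ring movement identity \eqref{eq:globalmovement} of Theorem~\ref{thm:genericMmovement}, which holds for every output normalization; charge conservation \eqref{eq:chargecomm}; and the translation covariance of the seam Hamiltonian. The proof splits into the three displayed lines of \eqref{eq:holonomycomm}.

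\emph{Holonomy commutes with $H$.} By \eqref{eq:globalmovement}, each gate obeys
\[
(G_{\sigma,j}^{(\eta)})^\dagger H_\eta^\sigma(M;j)G_{\sigma,j}^{(\eta)}=H_\eta^\sigma(M;j+1),
\]
with indices taken modulo $L$. Conjugating $H_\eta^\sigma(M;k)$ by $\mathcal U_{\eta,\sigma}(k)$, the leftmost factor $G_{\sigma,k}$ acts first in the adjoint. Applying the identity $L$ times gives
\[
\mathcal U^\dagger H(k)\,\mathcal U=H(k+L)=H(k),
\]
because the seam position is defined modulo $L$. Hence $[\mathcal U,H]=0$.

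\emph{Charge commutation.} Each gate $W_\sigma^{(\eta)}$ acts on $B\otimes a$. I check that the local charge $g_B\otimes Z_a^{-1}$ commutes with it, using the block form \eqref{eq:Ltensors}. Each matrix entry $(s,s')$ carries a monomial $X^mZ^n$ with $m=s'-s$ up to sign conventions, for example $X^{-1}$ at $(0,1)$ and $X$ at $(1,0)$. Conjugation by $Z^{-1}$ multiplies $X^{m}$ by $\omega^{\mp m}$, and this cancels the phase produced by $g$ on the spin indices. The output gauge $F^{(\eta)}$ is diagonal, so it does not spoil this. The remaining factors of $Q$ act on spectator sites and commute trivially. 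So $[G,Q]=0$ for each gate, hence $[\mathcal U,Q]=0$. Translation $T$ commutes with $g^{\otimes L}$ and does not touch $Z_a$, so $[T,Q]=0$ and therefore $[\mathcal M,Q]=0$.

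\emph{Co-moving translation.} I use
\[
T^\dagger H_\eta^\sigma(M;k)\,T=H_\eta^\sigma(M;k\mp1),
\]
with the sign fixed by the convention that $T$ shifts right. This holds because the bulk density is uniform in the distributed-twist presentation, and the endpoint, being off-chain, is untouched by $T$. Orienting the convention so that $T$ carries the seam from $(0,1)$ back to $(L-1,0)$, I get
\[
\mathcal M^\dagger H(0)\,\mathcal M=T^\dagger G_0^\dagger H(0)G_0T=T^\dagger H(1)T=H(0).
\]
For the power, I use $T\,G_{\sigma,j}\,T^\dagger=G_{\sigma,j+1}$, again with the matching orientation. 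Writing $(G_0T)^L$ and pushing all $T$'s to the right gives
\[
\mathcal M^L=G_0\,G_1\cdots G_{L-1}\,T^L,
\]
and $T^L=I$. The product has the correct order, with the rightmost factor acting first as in \eqref{eq:holonomy}, so $\mathcal M^L=\mathcal U_{\eta,\sigma}(0)$.

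The main obstacle is bookkeeping of orientations. The direction of $T$ relative to the movement direction must be chosen so that both $T^\dagger H(1)T=H(0)$ and $TG_jT^\dagger=G_{j+1}$ hold consistently; if these conflict, the product ordering in $\mathcal M^L$ reverses. I would fix the convention from the requirement $\mathcal M^\dagger H\mathcal M=H$ and then verify the index shift in $TG_jT^\dagger$ explicitly on a basis state. A second subtlety is that wrap-around at $L\ge3$ keeps $A$, $B$, $C$ distinct, which is already guaranteed by Theorem~\ref{thm:genericMmovement}.
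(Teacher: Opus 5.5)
Your proposal is correct and follows the same route as the paper's sketch. You iterate the ring movement identity around the full cycle, use the block selection rule to get $[G_{\sigma,j}^{(\eta)},Q]=0$, and use translation covariance to expand $(G_{\sigma,0}^{(\eta)}T)^L$ into the ordered cycle. The selection rule is $m=s-s'$ (row minus column), not $s'-s$ as you wrote, though your examples $X^{-1}$ at $(0,1)$ and $X$ at $(1,0)$ are the right ones. The orientation conflict you worry about cannot arise, because $T^\dagger H_\eta^\sigma(M;1)T=H_\eta^\sigma(M;0)$ and $TG_{\sigma,j}^{(\eta)}T^\dagger=G_{\sigma,j+1}^{(\eta)}$ both follow from the single convention $TO_jT^\dagger=O_{j+1}$.
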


\begin{proof}[Proof sketch]
Iterating Eq.~\eqref{eq:globalmovement} returns the seam to its starting bond.
Every embedded gate respects the charge selection rule, and translation
covariance expands the $L$th power of $\mathcal M_{\eta,\sigma}$ into the
ordered cycle.  No finite-order claim for the holonomy is required.
\end{proof}

\subsection{Relative-flux protection and the antiunitary}

We now give the complete antiunitary construction used in the theorem.  Let
$J|n\rangle=|-n\pmod N\rangle$, so
$JXJ=X^{-1}$ and $JZJ=Z^{-1}$, and let $\mathcal K_{z,a}$ be coefficientwise
complex conjugation in the physical $\sigma^z$ basis and the endpoint basis:
\begin{equation}
 \mathcal K_{z,a}
 \left(\sum_{\boldsymbol \xi,n}c_{\boldsymbol \xi,n}
 |\boldsymbol \xi;n\rangle\right)
 =\sum_{\boldsymbol \xi,n}c_{\boldsymbol \xi,n}^*
 |\boldsymbol \xi;n\rangle .
 \label{eq:Kza}
\end{equation}
For each $\eta$, choose the compatible modular lift
\begin{equation}
 \begin{aligned}
 M\kappa_\eta&\equiv-\eta\pmod N,\qquad 0\le\kappa_\eta<N,\\
 \tau_\eta&=-\frac{M\kappa_\eta+\eta}{N}\in\mathbb Z .
 \end{aligned}
 \label{eq:etakappatau}
\end{equation}
Put $\delta=N\bmod2$.  The diagonal endpoint shear $P_N$ used below is fixed
by
\begin{equation}
 P_NXP_N^\dagger=q^{1-\delta}XZ,\qquad
 P_NZP_N^\dagger=Z,\qquad P_NJP_N^*J=Z^{-\delta}.
 \label{eq:PNrelations}
\end{equation}
Its cyclic construction, including $N=2$, is given in
\ref{app:global}.  At the reference seam define
\begin{align}
 E_0&=X^{-L-1}P_N,\nonumber\\
 V_0&=\bigotimes_{j=0}^{L-1}
 q^{(2L-\delta-2j)\sigma_j^z/2},\nonumber\\
 C_0&=(V_0\otimes E_0)R,\nonumber\\
 \mathcal C_\eta
 &=\bigl((F_+^{(\eta)})^{\otimes L}\otimes I_a\bigr)C_0,\nonumber\\
 \mathcal A_\eta&=\mathcal C_\eta\Theta_\eta,\nonumber\\
 \Theta_\eta
 &=\bigl[(\sigma^x)^{\otimes L}\otimes
 X^{\kappa_\eta}J_a\bigr]\mathcal K_{z,a},\nonumber\\
 \mathcal B_{\eta,r}^{+}
 &=\mathcal M_{\eta,+}^{\,r}\mathcal A_\eta,
 \qquad r\in\mathbb Z,\nonumber\\
 \mathcal B_{\eta,r}^{-}
 &=\mathcal C_\eta^\dagger\!\mathcal B_{\eta,r}^{+}\!\mathcal C_\eta,
 \label{eq:etaoperators}
\end{align}
where $R O_jR=O_{L-1-j}$.  Thus the order of the antiunitary action and its
complex-conjugation basis are explicit.

\begin{theorem}[Charge-sector antiunitary involution]
\label{thm:sectorinvolution}
For every output label $\eta$, $\mathcal B_{\eta,r}^{\sigma}$ is a
same-branch antiunitary symmetry.  If the $Q$ sector is labelled by
$Q=\omega^m$, it is mapped to
\begin{equation}
 m\longmapsto c_\eta^\sigma-m\pmod N,
 \qquad
 c_\eta^+=2L+1-\kappa_\eta,\qquad
 c_\eta^-=-1-\kappa_\eta .
 \label{eq:etasectormap}
\end{equation}
Consequently every nonfixed pair of sectors has exactly the same energy
spectrum, including multiplicities.
\end{theorem}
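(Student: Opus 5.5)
The plan is to reduce everything to a single identity at the reference seam position $k=0$ (seam on $(L-1,0)$) for the $+$ branch. Once $\mathcal A_\eta$ is shown to be a symmetry there, the other claims follow by composition. We then proceed in four steps.

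\textbf{Step 1: $\Theta_\eta$ as a branch exchange.} We compute how $\Theta_\eta$ acts on the building blocks.
\begin{itemize}
\item Complex conjugation $\mathcal K_{z,a}$ fixes $X$ and $\sigma^\pm$, sends $Z\mapsto Z^{-1}$ and $q\mapsto q^{-1}$.
\item Conjugation by $J$ inverts $X$ and $Z$; conjugation by $X^{\kappa_\eta}$ multiplies $Z$ by $\omega^{\mp\kappa_\eta}$.
\item The global spin flip $(\sigma^x)^{\otimes L}$ exchanges $\sigma^+\leftrightarrow\sigma^-$ and flips $\sigma^z$.
\end{itemize}
On the bulk density this gives $h_{j,j+1}(q)\mapsto q^{-1}\sigma^-_j\sigma^+_{j+1}+q\,\sigma^+_j\sigma^-_{j+1}+\cdots=h_{j,j+1}(q)$, so the ordinary bonds are invariant. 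On the seam, entry by entry in \eqref{eq:Splus}--\eqref{eq:Sminus}, the matrix unit $E_{rc}$ goes to $E_{3-r,3-c}$, the coefficient is conjugated, and $Y_{m,n}\mapsto \omega^{\kappa_\eta n}q^{2mn}Y_{-m,n}$ up to the ordering phase from \eqref{eq:weylproduct}. I expect this to turn $S_+^{(\eta)}$ into a gauge-dressed, reflected copy of $S_-$. The $\kappa_\eta$ shift is chosen through \eqref{eq:etakappatau}, with $M\kappa_\eta\equiv-\eta$, precisely to cancel the conjugated output phase $f_\eta^{*}=\eps^{-\eta}$ against $\omega^{\kappa_\eta}=\eps^{2M\kappa_\eta}$. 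The integer $\tau_\eta$ absorbs the residual $\eps^{N\tau_\eta}$ sign.

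\textbf{Step 2: the unitary $\mathcal C_\eta$ as the inverse exchange.} Here we check that $\mathcal C_\eta$ maps the result back to $H_\eta^+(M;0)$. The reflection $R$ sends the seam on $(L-1,0)$ to $(0,L-1)$, that is, it reverses the bond orientation. The site-dependent phases $V_0$ are a linear-in-$j$ $\sigma^z$ gauge. By \eqref{eq:bulktransport} they redistribute the twist so that each reflected bulk bond becomes $h(q)$ again, with the endpoint mismatch concentrated at the seam. The endpoint factor $E_0=X^{-L-1}P_N$ then uses \eqref{eq:PNrelations} to reorder $X^{-m}Z^n$ into the monomials of $S_+$. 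The $X^{-L-1}$ absorbs the charge offset generated by $V_0$, and $q^{1-\delta}$ together with $Z^{-\delta}$ handles the parity of $N$.

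This matching, twelve monomials with branch-dependent phases, is the main obstacle. I would organise it with the selection rule: the $Z$-exponent is tied to $\sigma^z$ transfer, so all phases reduce to one linear function of $(m,n)$ and need only be checked on generators. The global factor $(F_+^{(\eta)})^{\otimes L}$ is then fixed by requiring the output gauge to reappear, again via \eqref{eq:bulktransport} and Theorem~\ref{thm:fluxreduction}.

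\textbf{Step 3: extending to all $r$ and to the $-$ branch.} Once $[\mathcal A_\eta,H_\eta^+(M;0)]=0$ holds, the holonomy proposition gives $[\mathcal M_{\eta,+},H]=0$, so $\mathcal B^+_{\eta,r}$ is also a symmetry. For the $-$ branch, $\mathcal B^-_{\eta,r}$ is the conjugate of $\mathcal B^+_{\eta,r}$ by the unitary $\mathcal C_\eta$. Step 2 shows that $\mathcal C_\eta$ intertwines $H^-$ and $H^+$, so $\mathcal B^-_{\eta,r}$ is a same-branch symmetry of $H^-_\eta$.

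\textbf{Step 4: the sector map.} We compute $\Theta_\eta Q\Theta_\eta^{-1}$.
\begin{itemize}
\item The factor $\mathcal K_{z,a}$ conjugates $\omega$, and the spin flip sends $g\mapsto \omega g^{-1}$.
\item $J$ inverts $Z_a$, while $X^{\kappa_\eta}$ multiplies $Z_a^{-1}$ by $\omega^{\kappa_\eta}$.
\item The factors $V_0$ and $R$ commute with $Q$; $E_0$ contributes $\omega^{L+1}$ from $X^{-L-1}$, and $P_N$ commutes with $Z$; $\mathcal M$ commutes with $Q$.
\end{itemize}
Collecting these gives $\mathcal B Q\mathcal B^{-1}=\omega^{c}Q^{-1}$ with $c=c_\eta^+$, and the same computation with the extra $\mathcal C_\eta$ gives $c_\eta^-$. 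Since $\mathcal B$ is an antiunitary symmetry, an eigenvector with $Q=\omega^m$ is mapped to one with $Q=\omega^{c-m}$, with the same (real) energy. Bijectivity then gives equal spectra with multiplicities for every pair $m\neq c-m$.
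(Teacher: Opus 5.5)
Steps 1--3 follow the paper's route: $\Theta_\eta$ takes $H^+_\eta(M;0)$ to $H^-_\eta(M;0)$, and $\mathcal C_\eta$ takes it back via the twelve-row comparison plus the uniform $(F_+^{(\eta)})^{\otimes L}$. Composing with $\mathcal M_{\eta,+}^r$ and conjugating by $\mathcal C_\eta$ then gives both $\mathcal B^\sigma_{\eta,r}$.

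\textbf{The gap is in Step 4, which is the actual content of the theorem.} The identity you reach, $\mathcal B Q\mathcal B^{-1}=\omega^{c}Q^{-1}$, is false, and it does not give your conclusion. Apply it to $\mathcal B|\psi_m\rangle$ and use antilinearity, $\mathcal B(\omega^m|\psi_m\rangle)=\omega^{-m}\mathcal B|\psi_m\rangle$. The result is $Q\mathcal B|\psi_m\rangle=\omega^{m+c}\mathcal B|\psi_m\rangle$: a fixed-point-free shift, not an involution. You get $m\mapsto c-m$ only by tacitly treating $\mathcal B$ as linear, so two errors cancel.

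The inversion sits in the wrong place. $\mathcal K_{z,a}$ sends the diagonal $Q$ to $\bar Q=Q^{-1}$. The spin flip, acting on $\bar g=g^{-1}$, gives $\omega^{-1}g$, and $J$ together with $X^{\kappa_\eta}$ turns $Z$ back into $\omega^{\kappa_\eta}Z^{-1}$. Hence $\Theta_\eta Q\Theta_\eta^{-1}=\omega^{\kappa_\eta-L}Q$, proportional to $Q$, not $Q^{-1}$.

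Likewise, the endpoint factor of $\mathcal C_\eta$ gives $X^{-L-1}Z^{-1}X^{L+1}=\omega^{-L-1}Z^{-1}$. Your $\omega^{L+1}$ is its action on $Z$, not on $Z^{-1}$. So $\mathcal C_\eta Q\mathcal C_\eta^\dagger=\omega^{-L-1}Q$, and $\mathcal A_\eta Q\mathcal A_\eta^{-1}=\omega^{-c^+_\eta}Q$ with $c^+_\eta=2L+1-\kappa_\eta$. The reflection $m\mapsto c^+_\eta-m$ then comes entirely from antilinearity acting on the eigenvalue.

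The same mechanism produces $c^-_\eta$, which you leave as ``the same computation.'' $\mathcal C_\eta$ shifts charge labels by $+(L+1)$, $\mathcal B^+_{\eta,r}$ reflects, and $\mathcal C_\eta^\dagger$ shifts by $-(L+1)$. The net map is $m\mapsto c^+_\eta-m-2(L+1)$. At operator level, the scalar $\omega^{-L-1}$ is complex-conjugated as it passes through the antilinear $\mathcal B^+_{\eta,r}$ rather than cancelling. This is the relabelling the paper calls essential.

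There are also smaller slips in Step 1:
\begin{itemize}
\item The endpoint action is $\Theta_\eta Y_{m,n}\Theta_\eta^{-1}=\omega^{-\kappa_\eta n}Y_{-m,n}$, with no ordering phase $q^{2mn}$. The sign matters: with $\omega^{+\kappa_\eta n}$, the factor $X^{\kappa_\eta}$ would push the conjugated gauge $F_-^{(\eta)\dagger}S_-F_-^{(\eta)}$ further away instead of converting it into $F_-^{(\eta)}S_-F_-^{(\eta)\dagger}$.
\item In the ungauged case $\Theta$ sends $S_+$ onto $S_-$ exactly, term by term; for example $-E_{01}Y_{-1,-1}\mapsto -E_{32}Y_{1,-1}$. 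There is no spatial reflection at this stage; it lives only in $\mathcal C_\eta$.
\item $\tau_\eta$ plays no role in the branch exchange, because $\omega^{\kappa_\eta}=f_\eta^{-2}$ exactly. It enters only the projective square.
\end{itemize}
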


\begin{proof}[Proof sketch]
The explicit branch maps obey
\begin{equation*}
 \mathcal C_\eta H_\eta^-\mathcal C_\eta^\dagger=H_\eta^+,
 \qquad
 \Theta_\eta H_\eta^+\Theta_\eta^{-1}=H_\eta^-.
\end{equation*}
Their composition with the co-moving translation gives the plus-branch antiunitary in
Eq.~\eqref{eq:etaoperators}.  Its charge action yields $c_\eta^+$.  Branch
exchange shifts the charge label by $L+1$, so conjugating the plus map by
$\mathcal C_\eta$ gives
$c_\eta^-=c_\eta^+-2(L+1)=-1-\kappa_\eta$ modulo $N$.  This relabelling is
essential when the two branches are compared sector by sector.
\end{proof}

The protected relative-holonomy orbit can be written equivalently as
\begin{equation}
 \begin{aligned}
 \nu_\eta=+1
 &\quad\Longleftrightarrow\quad
 \eta\equiv M-1\pmod2\\
 &\quad\Longleftrightarrow\quad
 f_\eta f_{\rm VMY}^{-1}\in\langle\eps^2\rangle=\mu_N,
 \qquad f_{\rm VMY}=\eps^{1-M}.
 \end{aligned}
 \label{eq:protectedphasecoset}
\end{equation}

\begin{corollary}[Even $N$: inter-sector pairing]
\label{cor:evenintersector}
If $N$ is even and $\nu_\eta=+1$, both $c_\eta^+$ and $c_\eta^-$ are odd.
The involution in Eq.~\eqref{eq:etasectormap} therefore has no fixed sector.
Every definite-$Q$ energy eigenstate has a linearly independent partner of
the same energy in a \emph{different} $Q$ sector.  This conclusion is
inter-sector spectral pairing, not a Kramers degeneracy inside each sector.
\end{corollary}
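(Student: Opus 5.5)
We need to show that $c_\eta^\pm$ are odd when $N$ is even and $\nu_\eta=+1$. We then deduce the absence of fixed sectors and the inter-sector pairing from Theorem~\ref{thm:sectorinvolution}.

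\textbf{Parity of $\kappa_\eta$.} For even $N$, coprimality forces $M$ odd. By Eq.~\eqref{eq:protectedphasecoset}, $\nu_\eta=+1$ means $\eta\equiv M-1\equiv 0\pmod 2$. The lift in Eq.~\eqref{eq:etakappatau} satisfies $M\kappa_\eta+\eta=-N\tau_\eta$, and the right-hand side is even because $N$ is even. Hence $M\kappa_\eta$ is even, and since $M$ is odd, $\kappa_\eta$ is even. One point needs care. The label $\eta$ lives in $\mathbb Z_{2N}$, while $\kappa_\eta$ is defined modulo $N$ with $0\le\kappa_\eta<N$. Its parity is nonetheless well defined, because $N$ is even, so reduction modulo $N$ preserves parity. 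Likewise the parity of $\eta$ is well defined modulo $2N$.

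\textbf{Parity of $c_\eta^\pm$.} From Eq.~\eqref{eq:etasectormap}, $c_\eta^+=2L+1-\kappa_\eta$ and $c_\eta^-=-1-\kappa_\eta$ are both odd integers. Their parities survive reduction modulo the even number $N$, so $c_\eta^\pm$ are odd as residues.

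\textbf{No fixed sector.} A fixed sector of $m\mapsto c-m \pmod N$ would require $2m\equiv c\pmod N$. Since $N$ is even, $2m-c$ would then be even, so $c$ would be even, which is a contradiction.

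\textbf{Pairing.} Take a definite-$Q$ eigenstate $\psi$ with $Q\psi=\omega^m\psi$ and energy $E$. Because $\mathcal B^\sigma_{\eta,r}$ is an antiunitary symmetry of $H^\sigma_\eta$ and $E$ is real, $\mathcal B\psi$ has the same energy $E$. By Theorem~\ref{thm:sectorinvolution}, $\mathcal B\psi$ lies in the sector $c-m\neq m$. Vectors in distinct $Q$ eigenspaces are orthogonal, so $\mathcal B\psi$ is linearly independent of $\psi$. Since this holds in every sector, the partner is never in the same sector, which gives inter-sector pairing rather than a Kramers pairing within a sector.

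The only mild obstacle is the first step: making sure the parity statements are independent of the representatives chosen in $\mathbb Z_{2N}$ and $\mathbb Z_N$.
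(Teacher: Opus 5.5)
Your argument is correct and follows the paper's own route. The paper likewise uses coprimality to force $M$ odd, so that $\kappa_\eta\equiv\eta\pmod 2$ and the protected condition makes $\kappa_\eta$ even. Hence $c_\eta^+=2L+1-\kappa_\eta$ and $c_\eta^-=-1-\kappa_\eta$ are odd, $2m\equiv c\pmod N$ has no solution, and the pairing follows from the antiunitary sector map together with orthogonality of distinct $Q$ eigenspaces.
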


\begin{theorem}[Odd $N$: fixed-sector Kramers pairing]
\label{thm:oddkramers}
If $N$ is odd and $\nu_\eta=+1$, each branch has a unique fixed sector under
Eq.~\eqref{eq:etasectormap}.  With $r_0=(L+1)\bmod2$,
\begin{equation}
 \left.(\mathcal B_{\eta,r_0}^{\sigma})^{2}\right|_{m_*^\sigma}=-I .
 \label{eq:fixedsquare}
\end{equation}
Every eigenstate in that fixed sector therefore has a linearly independent
same-energy Kramers partner within the sector.
\end{theorem}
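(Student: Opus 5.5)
We argue in three steps: existence and uniqueness of the fixed sector, reduction of the square to a unitary, and evaluation of that unitary on the fixed sector.

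\textbf{Step 1: the fixed sector.} By Theorem~\ref{thm:sectorinvolution}, a sector $m$ is fixed exactly when $2m\equiv c_\eta^\sigma\pmod N$. For odd $N$, $2$ is invertible modulo $N$, so this congruence has the unique solution
\[
m_*^\sigma\equiv 2^{-1}c_\eta^\sigma \pmod N .
\]
This holds for both branches. Since $\mathcal B_{\eta,r_0}^{\sigma}$ is an antiunitary symmetry that preserves the sector $m_*^\sigma$, its square is a linear unitary commuting with $H_\eta^\sigma$ and $Q$. Kramers' argument then gives the final claim once Eq.~\eqref{eq:fixedsquare} holds. Indeed, if $\mathcal B\psi=\lambda\psi$ with $|\lambda|=1$, then
\[
\mathcal B^2\psi=\mathcal B(\lambda\psi)=\bar\lambda\lambda\psi=\psi,
\]
which contradicts $\mathcal B^2=-I$. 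So $\mathcal B\psi$ is a linearly independent eigenvector of the same energy. It also lies in $m_*^\sigma$.

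\textbf{Step 2: reduce to the plus branch and simplify the square.} Because $\mathcal B^-_{\eta,r}=\mathcal C_\eta^\dagger\mathcal B^+_{\eta,r}\mathcal C_\eta$, we have $(\mathcal B^-)^2=\mathcal C_\eta^\dagger(\mathcal B^+)^2\mathcal C_\eta$. Since $\mathcal C_\eta$ maps the fixed sector of one branch to that of the other, it suffices to treat $\sigma=+$. We write
\[
(\mathcal B^+_{\eta,r})^2=\mathcal M^r\mathcal A_\eta\mathcal M^r\mathcal A_\eta .
\]
The first task is a commutation relation $\mathcal A_\eta\mathcal M_{\eta,+}\mathcal A_\eta^{-1}=\mathcal M_{\eta,+}^{-1}$, up to a possible phase or charge factor. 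We derive it from the branch-map relations in the proof of Theorem~\ref{thm:sectorinvolution}: $\Theta_\eta$ exchanges branches, $R$ reverses the direction of movement, and $\mathcal C_\eta$ returns to the plus branch. These turn the forward gate $G^{(\eta)}_{+,0}$ into an inverse gate, and turn $T$ into $T^{-1}$.

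With this relation the $\mathcal M$ factors cancel, up to $r$-dependent residues. The square reduces to $\mathcal A_\eta^2$ times a known function of $Q$. The parity choice $r_0=(L+1)\bmod 2$ is made precisely to cancel the $L$-dependent phase coming from $V_0$ and from $X^{-L-1}$ in $E_0$.

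\textbf{Step 3: compute $\mathcal A_\eta^2$ explicitly.} This factor is a product of physical and endpoint pieces.

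\emph{Physical part.} We have $((\sigma^x)^{\otimes L}V_0 R)^{*}$-type terms. Complex conjugation inverts the $q$-phases in $V_0$, and conjugation by $\sigma^x$ flips $\sigma^z$. Together these produce a residual diagonal operator $\prod_j q^{\pm(\ldots)\sigma^z_j}$, which is a function of the total magnetization. The $F_+^{(\eta)}$ factors contribute $f_\eta$ raised to the number of down spins.

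\emph{Endpoint part.} Here we obtain $E_0X^{\kappa_\eta}J\,E_0^*X^{\kappa_\eta}J$. We simplify it with $JXJ=X^{-1}$ and $P_NJP_N^*J=Z^{-\delta}=Z^{-1}$ (using $\delta=1$), together with Eq.~\eqref{eq:PNrelations}. The result is a monomial $c\,X^{a}Z^{b}$. Combined with the physical magnetization factor, it becomes a scalar times a power of $Q$, possibly multiplied by $X^{a}$.

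We must show that $a\equiv 0$, which is where the compatible lift $M\kappa_\eta\equiv-\eta$ enters. Once $a\equiv0$, the square is a phase times $Q^{s}$. On the sector $m_*^\sigma$ it therefore equals the phase times $\omega^{s m_*}$.

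\textbf{Main obstacle.} The hard step is showing that this total phase equals $-1$ exactly when $\nu_\eta=+1$. The phase is built from $f_\eta^N$-type contributions, $q^N=(-1)^M$, and $\tau_\eta$. We expect it to take the form $(-1)^{\tau_\eta+M\cdot(\ldots)}$. We would then use $N\tau_\eta=-(M\kappa_\eta+\eta)$, with $N$ odd, together with the parity condition $\eta\equiv M-1$, to obtain $-1$. This parity bookkeeping, including the half-integer exponents in $V_0$, is the delicate part, and we would check it against small cases such as $N=3$.
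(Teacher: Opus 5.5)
Your Step 1 is correct: $m_*^\sigma$ is unique because $2$ is invertible modulo odd $N$, and the Kramers argument is right. The reduction to $\sigma=+$ through $\mathcal C_\eta$ is also correct. Your overall architecture matches the paper's: write $(\mathcal B^+_{\eta,r})^2$ as an $r$-dependent factor times $\mathcal A_\eta^2$, show that $\mathcal A_\eta^2$ is a phase times a power of $Q$, and evaluate on $m_*$.

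The gap is that everything that actually fixes the sign is left as ``up to a possible phase or charge factor'' or ``we expect''. Start with the reflected-movement relation. What you need is not $\mathcal A_\eta\mathcal M\mathcal A_\eta^{-1}\propto\mathcal M^{-1}$, but
\[
 \mathcal A_\eta\mathcal M_{\eta,+}\mathcal A_\eta^{-1}
 =\alpha_\eta Q^{-1}\mathcal M_{\eta,+}^{-1},\qquad
 \alpha_\eta=f_\eta q^{2L+1},
\]
which gives $(\mathcal B^+_{\eta,r})^2=\alpha_\eta^{r}Q^{-r}\mathcal A_\eta^2$. The factor $\alpha_\eta Q^{-1}$ is the \emph{only} source of $r$-dependence; without it the choice of $r_0$ would be meaningless. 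It cannot be read off from the Hamiltonian-level maps $\mathcal C_\eta H^-_\eta\mathcal C_\eta^\dagger=H^+_\eta$ and $\Theta_\eta H^+_\eta\Theta_\eta^{-1}=H^-_\eta$. Those only say that $\mathcal A_\eta\mathcal M\mathcal A_\eta^{-1}$ is \emph{some} symmetry. Nor does $C_0$ simply send $T$ to $T^{-1}$: the twists in $V_0$ are site dependent, so conjugating the translation leaves a diagonal mismatch. You need two gate-level identities:
\begin{itemize}
\item $\Theta_\eta W^{(\eta)}_+\Theta_\eta^{-1}=\omega^{-1}W^{(\eta)}_-$;
\item the reflected four-block relation $[D_q(2-\delta)\otimes E_0]\,W_-\,[D_q(-(2L-\delta))\otimes E_0^\dagger]=q^{L+4}(g^{-1}\otimes Z)W_+^\dagger$.
\end{itemize}
In the second identity, the unequal left and right twists absorb that mismatch. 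Its charge factor $g^{-1}\otimes Z$, together with the spectator twists, produces $Q^{-1}$. The output gauges supply the extra $f_\eta$.

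Next, $\mathcal A_\eta^2$ must actually be computed, and your account of it is partly off. The endpoint $X$-powers cancel identically for any integer $\kappa$. With $u=L+1+\kappa$ and $p=u-\delta$,
\[
 X^{-L-1}P_NX^{\kappa}J\,X^{-L-1}P_N^*X^{\kappa}J
 =\xi^u\omega^{u(u-1)/2-\kappa p}Z^{p},
\]
so the compatible lift is not what gives ``$a\equiv0$''. It is needed on the physical side. Reversal pairs the site phases into $\eps^{k_t+k_{L-1-t}}=\eps^{2\eta-2Md}$ with $d=L+1-\delta$. The congruence $M\kappa_\eta\equiv-\eta$ turns this into $\omega^{-p_\eta}$, where $p_\eta=L+1-\delta+\kappa_\eta$. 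That makes $(g^{\otimes L})^{-p_\eta}$ carry the same exponent as $Z^{p_\eta}$, so the product is a power of $Q$. The integer lift $\tau_\eta$ then enters the scalar, giving $\mathcal A_\eta^2=(-1)^{L\tau_\eta}q^{p_\eta c_\eta}Q^{-p_\eta}$ with $c_\eta=2L+1-\kappa_\eta$. The half-integer exponents in $V_0$ are harmless: they form a global unit scalar of $U_A$ and cancel in $U_AU_A^*$.

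Finally, the sign itself. Write $c_\eta-2m_*=N\rho$. The restriction to the fixed sector is $(-1)^{E}$ with
\[
 E=L\tau_\eta+Mp_\eta\rho+r(M\rho-\tau_\eta).
\]
For odd $N$, modulo $2$ one has $\rho\equiv1-\kappa_\eta$, $\tau_\eta\equiv M\kappa_\eta+\eta$, and $p_\eta\rho\equiv L(1-\kappa_\eta)$. Hence $E\equiv(M+\eta)(L+r)$. The protected condition makes $M+\eta$ odd, and $r_0$ makes $L+r_0$ odd.

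Your guessed form $(-1)^{\tau_\eta+M(\ldots)}$ misses both the $Q$-eigenvalue contribution through $\rho$ and the $L\tau_\eta$ term. A check at $N=3$ cannot replace this all-$N$ reduction. As written, the proposal is a plan for computing the sign rather than a proof of Eq.~(\ref{eq:fixedsquare}).
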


\begin{proof}[Proof sketch]
For odd $N$, multiplication by two is invertible modulo $N$, so the sector
involution has one fixed point.  On the plus branch, restriction of the exact
full-space square to that sector gives the exponent
\begin{equation}
 E_{\eta,r}\equiv(M+\eta)(L+r)\pmod2 .
 \label{eq:etafixedparity}
\end{equation}
The protected condition and $r=r_0$ make both factors odd.  Equation
\eqref{eq:fixedsquare} follows; the minus-branch statement is its unitary
image under $\mathcal C_\eta$.  Scalar phases are conjugated at every
antiunitary step.  The complete full-space square is given in
\ref{app:global}.
\end{proof}

\begin{corollary}[Full-space evenness in the protected orbit]
\label{thm:outputZ2}
\label{thm:evenmultiplicity}
For every coprime $(M,N)$, $L\ge3$, both branches, and every seam position,
\begin{equation}
 \begin{aligned}
 \nu_\eta=+1\quad\Longrightarrow\quad
 &\dim E_\lambda\!\left(H_\eta^\sigma(M;k)\right)\in2\mathbb Z\\
 &\text{for every eigenvalue }\lambda .
 \end{aligned}
 \label{eq:protectedcoset}
\end{equation}
For even $N$ this follows entirely from distinct-sector pairing.  For odd
$N$ the nonfixed sectors pair and the unique fixed sector is Kramers doubled.
Higher even multiplicities are allowed.
\end{corollary}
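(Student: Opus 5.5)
The plan is to decompose the Hilbert space $(\mathbb C^2)^{\otimes L}\otimes\mathbb C^N_a$ into the $N$ eigenspaces $\mathcal H_m$ of $Q$ (with $Q=\omega^m$). By Eq.~\eqref{eq:chargecomm} each eigenspace $E_\lambda(H_\eta^\sigma(M;k))$ splits as $\bigoplus_m E_\lambda\cap\mathcal H_m$. I will show that $\sum_m\dim(E_\lambda\cap\mathcal H_m)$ is even. First I reduce to the reference seam position $k=0$, the bond $(L-1,0)$ used in Eq.~\eqref{eq:etaoperators}. The global movement identity \eqref{eq:globalmovement} says $H_\eta^\sigma(M;k)$ is unitarily equivalent to $H_\eta^\sigma(M;0)$ by a product of embedded gates, and unitary similarity preserves eigenspace dimensions. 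So it suffices to treat $k=0$ for both branches. Recall also that $\nu_\eta=+1$ is equivalent to $\eta\equiv M-1\pmod2$ by Eq.~\eqref{eq:protectedphasecoset}.

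\textbf{Even $N$.} By Theorem~\ref{thm:sectorinvolution}, the antiunitary $\mathcal B^\sigma_{\eta,r}$ commutes with $H_\eta^\sigma$ and maps $\mathcal H_m$ bijectively onto $\mathcal H_{c_\eta^\sigma-m}$. Being antilinear and injective, it preserves real dimension and hence complex dimension. It therefore gives $\dim(E_\lambda\cap\mathcal H_m)=\dim(E_\lambda\cap\mathcal H_{c-m})$. By Corollary~\ref{cor:evenintersector}, $c_\eta^\sigma$ is odd when $\nu_\eta=+1$. Since $N$ is even, $2m\equiv c\pmod N$ has no solution, so the involution $m\mapsto c-m$ on $\mathbb Z_N$ has no fixed points. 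The sectors thus fall into disjoint two-element orbits, each contributing $2\dim(E_\lambda\cap\mathcal H_m)$, and the total is even.

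\textbf{Odd $N$.} The same pairing handles all nonfixed sectors. By Theorem~\ref{thm:oddkramers}, there is a unique fixed sector $m_*^\sigma$, and $\mathcal B:=\mathcal B^\sigma_{\eta,r_0}$ restricted to it satisfies $\mathcal B^2=-I$. It remains to show that $E_\lambda\cap\mathcal H_{m_*}$ has even dimension. This is the standard Kramers argument. $\mathcal B$ preserves $V:=E_\lambda\cap\mathcal H_{m_*}$ because it commutes with $H$ and fixes the sector. Any antiunitary $\mathcal B$ with $\mathcal B^2=-I$ on $V$ turns $V$ into a quaternionic vector space by letting $j$ act as $\mathcal B$. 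Concretely, $v$ and $\mathcal B v$ are linearly independent: if $\mathcal Bv=\alpha v$, then $-v=\mathcal B^2v=|\alpha|^2v$, a contradiction. Moreover $\mathrm{span}\{v,\mathcal Bv\}$ is $\mathcal B$-invariant, and its orthogonal complement in $V$ is also invariant, since $\langle \mathcal Bw,\mathcal Bv\rangle=\overline{\langle w,v\rangle}$ for antiunitaries. Induction on dimension then gives $\dim V\in2\mathbb Z$.

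\textbf{Main obstacle.} The step needing the most care is the odd-$N$ fixed sector. One must check that the particular power $r_0$ used in Theorem~\ref{thm:oddkramers} gives a genuine symmetry, i.e.\ that $\mathcal M_{\eta,\sigma}$ commutes with $H_\eta^\sigma(M;0)$ and $Q$. This is what Eq.~\eqref{eq:holonomycomm} provides. One must also check that the minus-branch statement is transported correctly through $\mathcal C_\eta$ together with its sector relabelling $c^-_\eta=c^+_\eta-2(L+1)$. Since both facts are supplied by earlier results, the corollary reduces to the parity bookkeeping above. Higher even multiplicities remain possible because nothing prevents accidental degeneracies between pairs.
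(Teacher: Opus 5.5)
Your proposal is correct and follows the paper's own route. You reduce to the reference seam by the global movement identity, pair the nonfixed charge sectors with the antiunitary $\mathcal B_{\eta,r}^\sigma$ (for even $N$ no fixed sector exists because $c_\eta^\sigma$ is odd), and Kramers-double the unique odd-$N$ fixed sector using $(\mathcal B_{\eta,r_0}^\sigma)^2=-I$ there. Your orthogonal-complement induction fills in the step the paper abbreviates as ``Kramers' argument.'' Working with $\mathcal B_{\eta,r}^-=\mathcal C_\eta^\dagger\mathcal B_{\eta,r}^+\mathcal C_\eta$ directly is equivalent to the paper's transfer of the plus-branch result by the unitary $\mathcal C_\eta$.
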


The source value $\eta=1-M$ is protected for every coprime root.  The
ungauged value $\eta=0$ is protected exactly when $M$ is odd.  In the opposite
holonomy orbit the same construction does not provide full-space evenness;
exact and finite-size spectral controls are given next.

\section{Exact physical controls}
\label{sec:controls}

The opposite output class provides a direct test of the distinction between
local and global equivalence.  At $(N,M,L)=(3,2,3)$, the ungauged choice
$\eta=0$ lies outside the protected class.  For either branch,
\begin{equation}
 \begin{aligned}
 \det(\lambda I-H_{\rm Weyl}^{\sigma})
 &=p_{\rm fixed}(\lambda)\,p_{\rm pair}(\lambda)^2,\\
 \deg p_{\rm fixed}&=\deg p_{\rm pair}=8,
 \end{aligned}
 \label{eq:exactcounterfactor}
\end{equation}
where $p_{\rm fixed}$ and $p_{\rm pair}$ are square-free and mutually
 coprime.  The spectrum therefore contains eight simple and eight double
 levels.  Exact polynomial coefficients are given in
\ref{app:controls}.

\begin{corollary}[Flux-orbit separation]
\label{cor:twofluxseparation}
At $(N,M,L)=(3,2,3)$, the ungauged representative has $\nu=-1$ and the VMY
representative has $\nu=+1$.  For arbitrary branches
$\sigma,\tau\in\{+,-\}$,
\begin{equation}
 H_{\rm Weyl}^{\sigma}\not\simeq_{\rm U}H_{\rm VMY}^{\tau}.
 \label{eq:twofluxseparation}
\end{equation}
The former has eight simple and eight double levels; the latter has twelve
double levels.
\end{corollary}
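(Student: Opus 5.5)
The plan is to separate the two Hamiltonians by a unitary invariant: the multiset of eigenvalue multiplicities. Unitary similarity preserves the characteristic polynomial, so it is enough to show that these polynomials differ for every choice of branches $\sigma,\tau$.

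The first step is to confirm the holonomy labels. By Eq.~\eqref{eq:relativeflux} we have $\nu_\eta=(-1)^{\eta-(1-M)}$. With $M=2$:
\begin{itemize}
\item the ungauged label $\eta=0$ gives $\nu=(-1)^{1}=-1$;
\item the source label $\eta_{\rm VMY}=1-M$ gives $\nu=+1$.
\end{itemize}
This is the content of Eq.~\eqref{eq:protectedphasecoset}.

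The second step treats the VMY side. Corollary~\ref{thm:evenmultiplicity} applies to $H_{\rm VMY}^\tau$ for either branch, since $\nu=+1$. Because $N=3$ is odd, the pairing comes from two sources. The nonfixed charge sectors are exchanged isospectrally by Theorem~\ref{thm:sectorinvolution}. The fixed sector is Kramers doubled by Theorem~\ref{thm:oddkramers}. Hence every eigenvalue of $H_{\rm VMY}^\tau$ has even multiplicity, and this holds for both $\tau$. To upgrade "even" to the sharper count of twelve double levels, I would factor the exact characteristic polynomial. The total dimension is $2^3\cdot3=24$, so I expect a form $p(\lambda)^2$ with $p$ square-free of degree $12$, recorded alongside the controls in \ref{app:controls}. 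Only evenness is needed for the separation itself.

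The third step treats the Weyl side. For either $\sigma$, Eq.~\eqref{eq:exactcounterfactor} gives $\det(\lambda I-H_{\rm Weyl}^\sigma)=p_{\rm fixed}\,p_{\rm pair}^2$, with $p_{\rm fixed}$ square-free and coprime to $p_{\rm pair}$. Each of the eight roots of $p_{\rm fixed}$ is therefore a simple eigenvalue, and each of the eight roots of $p_{\rm pair}$ is a double eigenvalue. Since $H$ is Hermitian, algebraic and geometric multiplicities coincide.

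The comparison is then immediate. Any unitary $U$ with $U H_{\rm Weyl}^\sigma U^\dagger=H_{\rm VMY}^\tau$ would carry a simple eigenvalue to a simple eigenvalue. But $H_{\rm VMY}^\tau$ has none, so no such $U$ exists, giving Eq.~\eqref{eq:twofluxseparation}.

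The main obstacle is the exact-arithmetic certification in Eq.~\eqref{eq:exactcounterfactor}. One has to show that $p_{\rm fixed}$ and $p_{\rm pair}$ are square-free and mutually coprime over $\mathbb Q(\eps)$. I would do this sector by sector using the $Q$ decomposition: each of the three sectors is $8$-dimensional. The sector involution of Theorem~\ref{thm:sectorinvolution} still pairs the two nonfixed sectors isospectrally even when $\nu=-1$, which produces $p_{\rm pair}^2$. The fixed sector supplies $p_{\rm fixed}$. I would then check $\gcd(p_{\rm fixed},p_{\rm fixed}')=\gcd(p_{\rm pair},p_{\rm pair}')=\gcd(p_{\rm fixed},p_{\rm pair})=1$ by exact resultant or discriminant computation over $\mathbb Q(\eps)$.
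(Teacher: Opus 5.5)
Your proposal is correct and is essentially the paper's argument: the multiplicity pattern is the unitary invariant, and the exact Weyl factorization $p_{\rm fixed}\,p_{\rm pair}^2$ (square-free, coprime) supplies the eight simple levels. The one difference is that you take VMY evenness from the protected-orbit corollary rather than from the exact factorization $\chi_{\rm VMY}^{\pm}=C^2$, which makes the non-equivalence independent of the VMY computation. The clause ``twelve double levels'' still needs the paper's exact check $\gcd(C,C')=1$, as you acknowledge.
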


\begin{proof}[Proof sketch]
Exact arithmetic over $\mathbb Q(q)/(q^2+q+1)$ gives
\begin{equation}
 \chi_{\rm VMY}^{+}(\lambda)=\chi_{\rm VMY}^{-}(\lambda)=C(\lambda)^2,
 \label{eq:protectedfactor}
\end{equation}
where $C$ is square-free of degree $12$.  Together with
Eq.~\eqref{eq:exactcounterfactor}, the two multiplicity patterns are distinct,
and unitary conjugacy preserves them.  The exact factorizations and
coprimality calculations are reproduced in \ref{app:controls}.
\end{proof}

A sector-resolved scan on a fixed parameter grid tests one representative of
each holonomy orbit for $2\le N\le12$, every coprime $M$, $L=3,4,5$, and both
branches.  The preregistered canonical clustering tolerance was $10^{-9}$; the
final audit reports the stricter value $10^{-10}$ as its primary tolerance.
All $270$
protected Hamiltonians have even full-space multiplicities, whereas every one
of the $270$ unprotected Hamiltonians has at least one simple level.  Every
partition is unchanged from $10^{-8}$ through $10^{-11}$; the older
$10^{-7}$ value is retained only as a successful stress test.  The smallest
inter-cluster gap is
$6.06334\times10^{-7}$ at $(N,M,L,\sigma,\eta)=(10,1,5,-,0)$.
This tight case was checked using 80-digit working arithmetic, with a Jacobi
off-diagonal Frobenius residual below $6.6\times10^{-62}$; the full-precision
value and diagnostics are available with the supporting data.  Protected
cases have maximum multiplicity two in 268 instances
and four in two instances; unprotected cases have maximum multiplicity two in
268 instances and six in two instances.  Thus ``unprotected'' does not mean
nondegenerate.  Table~\ref{tab:scan-summary} summarizes these finite data.

\begin{table*}[t]
\centering
\caption{Sector-resolved finite scan at primary clustering tolerance
$10^{-10}$.   Maximum
multiplicity is a diagnostic, not part of the analytic theorem.}
\label{tab:scan-summary}
\begin{tabular}{p{0.15\textwidth}p{0.07\textwidth}p{0.54\textwidth}p{0.08\textwidth}}
\hline
Class & Cases & Sector-resolved result & Max. mult. \\
\hline
Protected, even $N$
& 102
& no fixed sector; full spectrum even; in $100/102$ cases every sector is internally simple
& 4 \\
Protected, odd $N$
& 168
& one internally even fixed sector; nonfixed sectors internally simple; full spectrum even
& 2 \\
Unprotected
& 270
& a full-space simple level observed in every scanned case
& 6 \\
\hline
\end{tabular}
\end{table*}

The same data expose the mechanism split: protected nonfixed sectors can
contain simple levels internally even though the full spectrum is paired,
while the unique fixed sector for odd $N$ is internally even.  A second exact
orbit separation occurs at $(N,M,L)=(2,1,4)$: the protected representative
has histogram $2{:}12;4{:}2$, whereas the opposite representative contains
eight simple levels.  Full scan specifications are given in
\ref{app:controls}; the case-resolved records are available with the
supporting data.

The transmutation theorem also permits a continuous twist.  At
$(N,M,L)=(3,2,3)$ define
$K_\sigma(t)=H_0^\sigma-h_{01}+h_{01}^{[t]}$ for
$t=\exp(i\phi)\in U(1)$.  At the six output-normalization points, the
spectra alternate between eight simple plus eight double levels when
$t^3=+1$ and twelve double levels when $t^3=-1$.  On a separate 3600-point
midpoint mesh avoiding those values, every sampled spectrum has 24 simple
levels.  This finite-grid observation does not analytically exclude isolated
nonspecial crossings.  The flow is shown in
Fig.~\ref{fig:continuous-flux}.

\begin{figure*}[t]
\centering
\includegraphics[width=\textwidth]{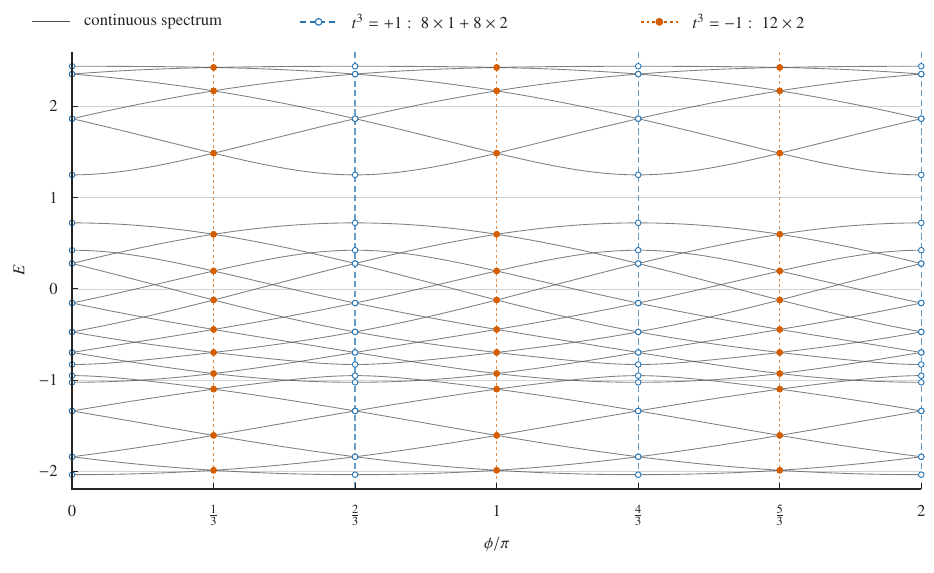}
\caption{Continuous relative-flux spectrum of
$K_+(t)=H_0^+-h_{01}+h_{01}^{[t]}$ at $(N,M,L)=(3,2,3)$.  Blue open markers
have $t^3=+1$ and multiplicity histogram $1{:}8;2{:}8$; orange filled
markers have $t^3=-1$ and histogram $2{:}12$.  The minus-branch spectrum
agrees on the plotted grid to $8.9\times10^{-15}$.  Curves between the six
exact output points are numerical.  Energies are shown in the Hamiltonian
normalization of Eq.~\eqref{eq:density}.}
\label{fig:continuous-flux}
\end{figure*}

\section{Discussion and limitations}
\label{sec:limitations}

Equation~\eqref{eq:fluxtransmutation} gives the natural global variable that
was hidden in the local output gauge.  Gauge transport to a bond twist is a
standard ring phenomenon.  What is model specific is the next step: the
relative holonomy selects an explicit charge-sector antiunitary, including its
sector involution and projective square.  Neither a generic Aharonov--Bohm
argument nor local seam mobility alone supplies those operator identities.

The even- and odd-$N$ conclusions should also be kept conceptually distinct.
For even $N$, the symmetry enforces equality of spectra in different charge
sectors; an individual sector can contain simple levels.  For odd $N$, the
unique fixed sector carries an additional square-minus-one antiunitary and is
Kramers doubled internally.  Full-space evenness is a useful common corollary,
but it is not the whole mechanism.  The multiplicity-four levels at
$(2,1,4)$ further show that pairing does not mean exact twofold degeneracy.

The retained endpoint is a dynamical $N$-state degree of freedom: it carries
the Weyl algebra, enters the combined charge, participates in the movement
gates, and implements the exact equivalence in Eq.~\eqref{eq:etaplustwo}.
Accordingly, $H_\eta^\sigma$ is an exactly characterized, exactly movable
impurity Hamiltonian with a proved symmetry classification; no closed-form
solution of its many-body spectrum is claimed.  Its operator domain is an
explicit root-of-unity XXZ realization of the broader impurity construction
in which an MPO virtual space remains as a dangling dynamical Hilbert factor
\citep{Ueda2026Perfect}.  Relating it to the endpoint-traced temporal operator
would require a temporal/spatial endpoint map and junction data.

Sequentially movable impurities and dangling virtual spaces have direct
precedents, most notably \citep{Ueda2026Perfect}; antiunitary pairing and
closed defect-motion cycles also occur in other operator settings
\citep{Liang2026Physical,Aasen2020Dualities,Sato2026Invariants}.  Root-of-unity
XXZ chains additionally host loop-algebra multiplets, twisted sectors,
Fabricius--McCoy strings, $Q$-operator descendants, and hidden sectors
\citep{DeguchiFabriciusMcCoy2001Loop,Deguchi2004Twisted,Korff2004Twisted,Miao2021QOperator,Hu2026Hidden}.
The familiar twist pattern is that generic twists lift sector-internal
degeneracies while special root-of-unity twists can restore enhanced
symmetries \citep{Deguchi2004Twisted,Korff2004Twisted}.  Here the movable seam
turns its discrete output freedom into two relative holonomy orbits and makes
the protected special orbit explicit at the operator level.  The present
mechanism combines this flux reduction with a retained endpoint and a
charge-sector antiunitary acting on every energy.  A source-by-source
comparison is provided in \ref{app:prior-boundaries}.

We do not derive this pairing from an affine Temperley--Lieb or loop-algebra
module decomposition.  Whether the retained-endpoint Hilbert space admits
such an organization remains an open problem.

\ref{app:controls} records two boundary diagnostics.  An exact
second moment separates the direct seam from a uniform chain tensored with an
endpoint identity.  In addition, one undressed cyclic transfer member does
not commute with the retained-endpoint seam.  This is consistent with VMY's
emphasis on noncommuting transfer structures and noncommuting conserved
quantities.  Constructing an endpoint-bearing or seam-dressed commuting
family remains open.

Other open problems include an endpoint lift, transport criteria, and a
thermodynamic or continuum interpretation, each requiring new analysis.

\section{Conclusion}

The output-normalized seam family has a simple global organization.  A local
gauge can be moved to one ordinary bond, and endpoint conjugation collapses
the $2N$ labels into two relative $\mathbb Z_2$ holonomy orbits.  The
independently derived condition $\eta\equiv M-1\pmod2$ selects the protected
orbit, so the source value $\eta_{\rm VMY}=1-M$ is protected at every coprime
root.  There the projective antiunitary
produces inter-sector spectral pairing for even $N$ and fixed-sector Kramers
pairing for odd $N$; together these imply even-dimensional energy eigenspaces.

The local theorem is more general than this finite cyclic conclusion: any
unitary $q$-Weyl pair gives two-sided gates, explicit Hermitian seams, exact
movement, and a universal quartic in the unrestricted Laurent--Weyl algebra.
At finite roots, the exact unprotected example and the broader scan show that
the holonomy distinction has spectral content rather than being a redundant
label.  The analytic converse remains open, as does the separate problem of
relating this retained-endpoint Hamiltonian to the endpoint-traced temporal
defect.

The technical material is organized as follows.  \ref{app:unitarity}
establishes the Weyl-algebra statements; \ref{app:vmyprovenance} records the
source map and flux transport; \ref{app:ansatz} gives the seam construction
and local spectrum; \ref{app:uniqueness} proves support-qualified uniqueness;
\ref{app:global} derives the projective antiunitary; and \ref{app:controls}
collects exact controls, diagnostics, and reproduction conventions.

\section*{Data availability}

The data and code used to reproduce the computational results and figures
reported in this article are publicly available on Zenodo \cite{Yu2026XXZSeamData} at
\href{https://doi.org/10.5281/zenodo.22069216}{doi:10.5281/zenodo.22069216}.

% \section*{CRediT authorship contribution statement}
% [Author contributions to be confirmed by all authors before submission.]

% \section*{Declaration of competing interest}
% [Competing-interest declaration to be confirmed by all authors before submission.]

\section*{Acknowledgements}
This work was supported in part by the National Natural Science Foundation of
China (NSFC) under Grant No. 12347147.

%% The Appendices part is started with the command \appendix;
%% appendix sections are then done as normal sections
\appendix
\section{Weyl algebra and finite cyclic specialization}
\label{app:unitarity}

Let $q\in U(1)$ and let $X,Z$ be bounded unitaries on an endpoint Hilbert space $\cH_a$ satisfying
\begin{equation}
\omega=q^2,\qquad ZX=\omega XZ.
\end{equation}
Because $X$ and $Z$ are unitary, $X^{-1}=X^\dagger$ and
$Z^{-1}=Z^\dagger$.  We use inverse notation in the defining blocks and
dagger notation when displaying adjoint products.  No root quotient, finite
dimension, or irreducibility is assumed.  The identities needed in the four
block products are
\begin{align}
Z^\dagger XZ&=\omega^{-1}X,&
Z^\dagger X^\dagger Z&=\omega X^\dagger,\nonumber\\
Z^\dagger X^\dagger&=\omega X^\dagger Z^\dagger,&
ZX^\dagger&=\omega^{-1}X^\dagger Z,&
XZ^\dagger&=\omega Z^\dagger X.
\label{eq:suppweylids}
\end{align}
All follow from the single Weyl relation, with no finite-root interpolation.

\begin{samepage}
For the plus branch, take
\begin{equation}
L_+=\begin{pmatrix}I&\omega X^{-1}\\-qXZ&qZ\end{pmatrix},
\end{equation}
\end{samepage}
the four blocks of $L_+^\dagger L_+$ are
\begin{equation}
\begin{pmatrix}
I+Z^\dagger X^\dagger XZ & \omega X^\dagger-Z^\dagger X^\dagger Z\\
\omega^{-1}X-Z^\dagger XZ & XX^\dagger+Z^\dagger Z
\end{pmatrix}
=\begin{pmatrix}2I&0\\0&2I\end{pmatrix}.
\label{eq:suppLpdagLp}
\end{equation}
The reverse product is
\begin{equation}
\begin{aligned}
L_+L_+^\dagger
&=\begin{pmatrix}
I+X^\dagger X & q^{-1}(-Z^\dagger X^\dagger+\omega X^\dagger Z^\dagger)\\
q(-XZ+\omega^{-1}ZX) & XX^\dagger+ZZ^\dagger
\end{pmatrix}\\
&=\begin{pmatrix}2I&0\\0&2I\end{pmatrix}.
\end{aligned}
\label{eq:suppLpLpdag}
\end{equation}

For the minus branch, take
\begin{equation}
L_-=\begin{pmatrix}qZ&-qX^{-1}Z\\X&\omega I\end{pmatrix},
\end{equation}
the first order gives
\begin{equation}
L_-^\dagger L_-=
\begin{pmatrix}
Z^\dagger Z+X^\dagger X & -Z^\dagger X^\dagger Z+\omega X^\dagger\\
-Z^\dagger XZ+\omega^{-1}X & Z^\dagger XX^\dagger Z+I
\end{pmatrix}
=\begin{pmatrix}2I&0\\0&2I\end{pmatrix},
\label{eq:suppLmdagLm}
\end{equation}
and the reverse order is
\begin{equation}
L_-L_-^\dagger=
\begin{pmatrix}
ZZ^\dagger+X^\dagger ZZ^\dagger X & q(ZX^\dagger-\omega^{-1}X^\dagger Z)\\
q^{-1}(XZ^\dagger-\omega Z^\dagger X) & XX^\dagger+I
\end{pmatrix}
=\begin{pmatrix}2I&0\\0&2I\end{pmatrix}.
\label{eq:suppLmLmdag}
\end{equation}
Equations~\eqref{eq:suppLpdagLp}--\eqref{eq:suppLmLmdag} establish both multiplication orders separately.  The normalized gates $W_\pm=L_\pm/\sqrt2$ are therefore ordinary unitaries in every unitary $q$-Weyl representation.

\subsection{Finite dimension forces a root}

If $X,Z$ are invertible $d\times d$ matrices, taking determinants of $ZX=q^2XZ$ gives
\begin{equation}
 \det Z\,\det X=q^{2d}\det X\,\det Z.
\end{equation}
Thus
\begin{equation}
 (q^2)^d=1.
 \label{eq:suppdetroot}
\end{equation}
If $q^2$ has primitive order $N$, every finite representation decomposes into orbit subspaces whose dimensions are multiples of $N$.  In an irreducible unitary representation, an eigenvector $Z|0\rangle=z_0|0\rangle$ generates
$|n\rangle=X^n|0\rangle$ with pairwise distinct $Z$ eigenvalues $z_0q^{2n}$ for $n=0,\ldots,N-1$; their span is invariant under both generators.  Irreducibility makes it the whole space.  After scalar rephasings of $X$ and $Z$ and a diagonal basis rephasing, this is the standard $N$-dimensional clock--shift pair.  Reducible finite representations need not have scalar Weyl commutant.

\section{VMY source map and exact flux transport}
\label{app:vmyprovenance}

VMY begin with the cyclic Lax tensor \citep{Vernier2026Lattice}
\begin{equation}
\mathcal L(u,v,s)=
\begin{pmatrix}
\ee^{s-u}+\ee^{u-s}Z
&
\omega(\ee^{s-v}-\ee^{v-s}Z)X^{-1}
\\
X(\ee^{v+s}-\ee^{-v-s}Z)
&
\omega\ee^{u+s}+\ee^{-u-s}Z
\end{pmatrix}.
\label{eq:suppvmyLax}
\end{equation}
Their endpoint-traced duality operators arise from the $s=0$ singular limits;
the limiting physical-row factors are
\begin{equation}
 \eps^{L/2-S^z}=\bigotimes_j\operatorname{diag}(1,\eps)_j,
 \qquad
 \eps^{L/2+S^z}=\bigotimes_j\operatorname{diag}(\eps,1)_j.
\label{eq:supprowfactors}
\end{equation}
In our physical basis the resulting local blocks are
\begin{equation}
\mathcal L^+_{\rm VMY}=
\begin{pmatrix}I&\omega X^{-1}\\-\eps XZ&\eps Z\end{pmatrix},
\qquad
\mathcal L^-_{\rm VMY}=
\begin{pmatrix}\eps Z&-\eps X^{-1}Z\\X&\omega I\end{pmatrix},
\label{eq:suppvmyBlocks}
\end{equation}
where the finite cyclic parameters are
\begin{equation}
 \begin{aligned}
 \eps&=\ee^{\ii\pi/N},\qquad q=\eps^M,\qquad \omega=q^2,\\
 &1\leq M<N,\qquad \gcd(M,N)=1.
 \end{aligned}
 \label{eq:suppgenericroot}
\end{equation}
In the printed v2 preprint, the local duality blocks occur in
Eqs.~(SD.5)--(SD.6), while $q=\eps^M$ is introduced separately in
Eq.~(SD.16).  The physical rows contain the literal principal phase $\eps$;
this distinction is part of the source formula.
The principal phase $\eps$ in the physical rows and the Weyl phase $\omega$
must both be retained.  Put
\begin{equation}
 f=\frac{\eps}{q},\qquad
 F_+=\operatorname{diag}(1,f),\qquad
 F_-=\operatorname{diag}(f,1).
 \label{eq:suppFpm}
\end{equation}
Direct multiplication of the output rows gives the exact branch-dependent
source map
\begin{equation}
 \begin{aligned}
 \mathcal L^+_{\rm VMY}&=F_+L_+(q),\qquad
 \mathcal L^-_{\rm VMY}=F_-L_-(q),\\
 W_\sigma^{\rm VMY}&=F_\sigma W_\sigma.
 \end{aligned}
 \label{eq:suppsourcegauge}
\end{equation}
Here and below $\sigma\in\{+,-\}$ labels the seam branch.  These are left factors, not similarities of the local gates.  Since
$F_\sigma$ is unitary, Eq.~\eqref{eq:suppsourcegauge} and
Eqs.~\eqref{eq:suppLpdagLp}--\eqref{eq:suppLmLmdag} prove both ordinary
unitarity products for every coprime $(M,N)$.

The bulk density transports an arbitrary diagonal unitary from one end of
an oriented bond to the other:
\begin{equation}
 F_B^\dagger h_{BC}(q)F_B=F_Ch_{BC}(q)F_C^\dagger.
 \label{eq:suppbulktransport}
\end{equation}
Indeed, if $F=\operatorname{diag}(d_0,d_1)$ and
$\chi=d_0^*d_1$, the two sides multiply
$\sigma_B^+\sigma_C^-$ by the same $\chi$, its adjoint by $\chi^*$,
and leave $\sigma_B^z\sigma_C^z$ fixed.  If $S_\sigma(q)$ denotes the
twelve-term unrestricted seam constructed below, define
\begin{equation}
 S_{\sigma,ABa}^{\rm VMY}(M)=F_{\sigma,B}S_{\sigma,ABa}(q)F_{\sigma,B}^\dagger.
 \label{eq:suppSVMY}
\end{equation}
The unrestricted movement identity and Eq.~\eqref{eq:suppbulktransport}
then give, without interpolation in $M$,
\begin{equation}
 (W_{\sigma,Ba}^{\rm VMY})^\dagger
 \bigl[h_{BC}+S_{\sigma,ABa}^{\rm VMY}\bigr]W_{\sigma,Ba}^{\rm VMY}
 =h_{AB}+S_{\sigma,BCa}^{\rm VMY}.
 \label{eq:suppVMYmovement}
\end{equation}

For clarity, the finite discussion contains three different objects:
\begin{align}
 H_{\rm Weyl}^\sigma(q;k)
 &=\sum_{j\neq k-1}h_{j,j+1}(q)+S_{\sigma,k-1,k,a}(q),
 \label{eq:suppHWeyl}\\
 H_{\rm VMY}^\sigma(M;k)
 &=\sum_{j\neq k-1}h_{j,j+1}(q)+S_{\sigma,k-1,k,a}^{\rm VMY}(M),
 \label{eq:suppHVMY}
\end{align}
More generally, for $\eta\in\mathbb Z_{2N}$ set
\begin{align}
 f_\eta&=\eps^\eta,\nonumber\\[-0.4ex]
 F_+^{(\eta)}&=\operatorname{diag}(1,f_\eta),&
 F_-^{(\eta)}&=\operatorname{diag}(f_\eta,1),
 \label{eq:suppoutputgaugefamily}\\
 W_\sigma^{(\eta)}&=F_\sigma^{(\eta)}W_\sigma,&
 S_\sigma^{(\eta)}&=F_{\sigma,B}^{(\eta)}S_\sigma
 F_{\sigma,B}^{(\eta)\dagger},\nonumber\\
 H_\eta^\sigma(M;k)&=\sum_{r\neq k-1}h_{r,r+1}(q)
                   +S_{\sigma,k-1,k,a}^{(\eta)}.
 \label{eq:suppHeta}
\end{align}
Then $H_0^\sigma=H_{\rm Weyl}^\sigma$ and
$H_{1-M}^\sigma=H_{\rm VMY}^\sigma$, with $\eta$ modulo $2N$.  The diagonal bond
transport proves the same local unitarity, movement, spectrum, and affine
kernel statements for every member.  \ref{app:output-z2} will show
how the resulting relative $\mathbb Z_2$ holonomy controls the global
antiunitary protection.

\subsection{Exact flux transmutation and output-orbit reduction}

Put $\Gamma(t)=\operatorname{diag}(1,t)$ and
$\chi_+=1$, $\chi_-=-1$.  Up to a scalar
that drops out of conjugation,
$F_\sigma^{(\eta)}=\Gamma(f_\eta^{\chi_\sigma})$.  For $|t|=1$ define
\begin{align}
 h_{j,j+1}^{[t]}
 &:=\Gamma_j(t)^\dagger h_{j,j+1}\Gamma_j(t)
 =\Gamma_{j+1}(t)h_{j,j+1}\Gamma_{j+1}(t)^\dagger,
 \label{eq:supptwistedbonddef}\\
 &=qt\,\sigma_j^+\sigma_{j+1}^-
 +q^{-1}t^{-1}\,\sigma_j^-\sigma_{j+1}^+
 +\frac{q+q^{-1}}4\sigma_j^z\sigma_{j+1}^z .
 \label{eq:supptwistedbond}
\end{align}
Choose a reference exponent $\eta_0$ and set
\begin{equation}
 r_{\eta|\eta_0}=\frac{f_\eta}{f_{\eta_0}},\qquad
 t_{\sigma,\eta|\eta_0}=r_{\eta|\eta_0}^{\chi_\sigma}.
 \label{eq:supprelativetwist}
\end{equation}
If the seam occupies $(k-1,k)$, direct one-site conjugation gives
\begin{equation}
 \Gamma_k(t)^\dagger H_\eta^\sigma(M;k)\Gamma_k(t)
 =H_{\eta_0}^\sigma(M;k)-h_{k,k+1}+h_{k,k+1}^{[t]},
 \qquad t=t_{\sigma,\eta|\eta_0}.
 \label{eq:suppfluxtransmutation}
\end{equation}
Indeed the conjugation changes the seam gauge from
$\Gamma(f_\eta^{\chi_\sigma})$ to
$\Gamma(f_{\eta_0}^{\chi_\sigma})$ and acts on no ordinary
density except $h_{k,k+1}$.  For $p=0,\ldots,L-2$ define
\begin{equation}
 U_p(t)=\prod_{\ell=0}^{p}\Gamma_{k+\ell}(t)^\dagger,\qquad
 b_p=k+p\pmod L .
 \label{eq:suppfluxstring}
\end{equation}
Induction with Eq.~\eqref{eq:supptwistedbonddef} then yields
\begin{equation}
 U_p(t)H_\eta^\sigma(M;k)U_p(t)^\dagger
 =H_{\eta_0}^\sigma(M;k)-h_{b_p,b_p+1}
  +h_{b_p,b_p+1}^{[t]}.
 \label{eq:suppfluxatbond}
\end{equation}
Thus the relative output gauge is exactly one directed twist, and its
location on the ordinary part of the ring is a gauge choice.

There is a second exact equivalence.  Let
$m_{\rm inv}\in\{0,\ldots,N-1\}$ be the unique residue satisfying
\begin{equation}
 Mm_{\rm inv}\equiv1\pmod N.
 \label{eq:suppminv}
\end{equation}
For a seam monomial $E_{rc}X^aZ^b$, let
$n_B(r):=r\bmod2$ denote the down-spin occupation on the second physical
leg.  Inspection of the twelve nonzero terms gives the branch-uniform
selection rule
\begin{equation}
 b=\chi_\sigma\bigl(n_B(r)-n_B(c)\bigr).
 \label{eq:suppoutputselection}
\end{equation}
Writing $\zeta=\eps^2$ and using $q^2=\zeta^M$, endpoint conjugation gives
\begin{equation}
 X^{-m_{\rm inv}}(X^aZ^b)X^{m_{\rm inv}}
 =(q^2)^{m_{\rm inv}b}X^aZ^b=\zeta^bX^aZ^b.
 \label{eq:suppendpointphase}
\end{equation}
This is precisely the output-row phase generated by
$f_{\eta+2}=\zeta f_\eta$, and hence
\begin{equation}
 H_{\eta+2}^{\sigma}(M;k)
 =X_a^{-m_{\rm inv}}H_\eta^\sigma(M;k)X_a^{m_{\rm inv}}
 \label{eq:suppetaplustwo}
\end{equation}
for both branches and every seam position.  Each parity coset of
$\eta\in\mathbb Z_{2N}$ is therefore one endpoint-Weyl-conjugacy orbit.  With
$Q=g^{\otimes L}\otimes Z_a^{-1}$, the same endpoint conjugation satisfies
\begin{equation}
 X_a^{-m_{\rm inv}}QX_a^{m_{\rm inv}}
 =\omega^{-m_{\rm inv}}Q .
 \label{eq:suppchargeshift}
\end{equation}
Therefore the forward unitary identification from $H_\eta^\sigma$ to
$H_{\eta+2}^\sigma$ sends charge label $m$ to $m+m_{\rm inv}$ modulo $N$.
In the modular lift used in \ref{app:global}, this is equivalent to
\begin{equation}
 \kappa_{\eta+2}=\kappa_\eta-2m_{\rm inv},\qquad
 c_{\eta+2}^\sigma=c_\eta^\sigma+2m_{\rm inv}\pmod N.
 \label{eq:suppchargeshiftconsistency}
\end{equation}
With
$\eta_{\rm VMY}=1-M$, the quotient character is
\begin{equation}
 \nu_\eta=
 \left(\frac{f_\eta}{f_{\rm VMY}}\right)^N
 =(-1)^{\eta-(1-M)}\in\{+1,-1\},
 \qquad t_{\sigma,\eta|\eta_{\rm VMY}}^N=\nu_\eta .
 \label{eq:supprelativeflux}
\end{equation}
Thus $\mu_{2N}/\mu_N\cong\mathbb Z_2$ supplies exactly two relative-holonomy
labels.  Equation~\eqref{eq:suppetaplustwo} implies at most two classes under
arbitrary global unitary equivalence; an all-parameter proof that the two
orbits are always inequivalent is not assumed.

The notation $H_{\rm VMY}^\sigma$ denotes the retained-endpoint member
constructed from the VMY source rows with the fixed reshape and traceless
convention.  It acts on
$(\mathbb C^2)^{\otimes L}\otimes\mathbb C_a^N$, whereas
$D_\pm=\Tr_a\prod_j\mathcal L_{a,j}^\pm$ acts on
$(\mathbb C^2)^{\otimes L}$.  Equation~\eqref{eq:suppSVMY} is a local
one-leg conjugation; on the ring its residual action is the bond twist in
Eq.~\eqref{eq:suppfluxtransmutation}.

VMY prove transfer-matrix conservation for their finite cyclic,
endpoint-traced representation \citep{Vernier2026Lattice}.  The unrestricted $q$-Weyl seam
theorem is a local Laurent--Weyl identity and
does not presuppose that a trace-class or otherwise well-defined commuting
transfer matrix exists in an arbitrary representation.

There is also an exact, distinct degeneration that produces the generalized Weyl family.  Put $u=v=t$ and choose $\ee^{-2s}=q$.  Directly from Eq.~\eqref{eq:suppvmyLax},
\begin{equation}
\lim_{t\to-\infty}\ee^{t-s}\mathcal L(t,t,s)=L_+(q),
\qquad
\lim_{t\to+\infty}\ee^{-t-s}\mathcal L(t,t,s)=L_-(q).
\label{eq:suppnonzerolimits}
\end{equation}
In the finite cyclic representation, its endpoint-traced normalization lies
in the conserved VMY family, but it is not the generic-$M$ duality definition.
In a general
$q$-Weyl representation Eq.~\eqref{eq:suppnonzerolimits} identifies only
local algebraic blocks.  We make no direct-seam Onsager-exchange,
Tambara--Yamagami fusion, endpoint-trace, or transfer-family inference from
this limit.

\section{Construction and local spectrum of the movable seam}
\label{app:ansatz}

\subsection{Complete finite-dimensional space}

In the standard irreducible $N$-dimensional clock--shift realization, use the ordered two-spin basis $|0\rangle=|00\rangle$, $|1\rangle=|01\rangle$, $|2\rangle=|10\rangle$, $|3\rangle=|11\rangle$.  We write $Y_{m,n}=X^mZ^n$.  The operators
\begin{equation}
E_{rc}^{AB}\otimes Y_{m,n},\qquad r,c=0,1,2,3,\qquad m,n=0,\ldots,N-1,
\end{equation}
form a basis of the full algebra on $\mathbb C^2_A\otimes\mathbb C^2_B\otimes\mathbb C^N_a$.  The complete fixed-$N$ ansatz therefore has $16N^2$ complex coefficients.  Hermiticity, tracelessness, and sparsity are not assumed.

The endpoint exponents in the four $L_+$ blocks are
\begin{equation}
\mathcal G_+=\{(0,0),(-1,0),(1,1),(0,1)\},
\end{equation}
and those in $L_-$ are
\begin{equation}
\mathcal G_-=\{(0,1),(-1,1),(1,0),(0,0)\}.
\end{equation}
One conjugation generates differences in the rectangle $-2\leq m\leq2$, $-1\leq n\leq1$.  The universal source-generated lift is thus
\begin{equation}
\widetilde S=\sum_{r,c=0}^{3}\sum_{m=-2}^{2}\sum_{n=-1}^{1}
c_{rc}(m,n)E_{rc}\Y{m}{n},
\label{eq:supplift}
\end{equation}
with 240 independent variables.  All 240 appear in the generated equations.  This Laurent rectangle is a source-generated subspace used to construct a particular solution uniformly in $N$; it is not the complete $16N^2$-variable space at fixed $N$.  Completeness of the final classification comes instead from the full-algebra homogeneous-kernel proof in \ref{app:uniqueness}.

\subsection{Normal-ordering map}

Let $g_{rs}=\alpha_{rs}\Y{u_{rs}}{v_{rs}}$ and $g_{\ell w}=\beta_{\ell w}\Y{c_{\ell w}}{d_{\ell w}}$ be gate blocks.  The physical matrix-unit product is
$E_{sr}^{B}E_{j\ell}^{B}E_{\ell w}^{B}=\delta_{rj}E_{sw}^{B}$.  A generic conjugated ansatz term, including its zero value for $r\neq j$, is therefore
\begin{align}
&\frac12(E_{rs}^{B}g_{rs})^\dagger
(E_{j\ell}^{B}\Y{m}{n})(E_{\ell w}^{B}g_{\ell w})\nonumber\\
&\quad=\frac12\delta_{rj}\alpha_{rs}^{*}\beta_{\ell w}
q^{2u_{rs}v_{rs}-2v_{rs}m+2(n-v_{rs})c_{\ell w}}
E_{sw}^{B}\Y{m-u_{rs}+c_{\ell w}}{n-v_{rs}+d_{\ell w}}.
\label{eq:suppcoeffmap}
\end{align}
This formula and the matrix-unit expansion of the six nonzero entries of $h$ generate the coefficient of each
$E_{ik}^{A}E_{j\ell}^{B}E_{\mu\nu}^{C}\Y{m}{n}$ in the movement residual.

\subsection{Representative coefficient equations}

Normal-ordering every term with Eq.~\eqref{eq:suppcoeffmap} produces a finite linear system over $\mathbb C[q,q^{-1}]$.  All 240 coefficients in Eq.~\eqref{eq:supplift} occur.  The following rows illustrate the isolated and coupled parts of that system.

For the plus sign, three representative single-variable rows are
\begin{equation}
\begin{aligned}
-\frac12-c_{01}(-1,-1)&=0,
&-\frac12q^4-c_{03}(-2,-1)&=0,\\
\frac12q^2-c_{23}(-1,-1)&=0.&&
\end{aligned}
\end{equation}
The coupled rows for $c_{02}$ and $c_{13}$ reduce to
\begin{equation}
c_{02}+q^2c_{13}=0,\qquad q^{-2}c_{02}-c_{13}=q,
\end{equation}
whose solution is $c_{02}=q^3/2$, $c_{13}=-q/2$.  The paired lower entries follow from
\begin{equation}
c_{20}+q^{-2}c_{31}=0,\qquad
q^{-2}c_{20}-q^{-4}c_{31}=q^{-5},
\end{equation}
giving $c_{20}=q^{-3}/2$, $c_{31}=-q^{-1}/2$.  The accompanying exact coefficient data contain the full system in machine-readable form.

\subsection{Closed coefficients}

The nonzero coefficient tuples $(r,c,m,n,2c_{rc}(m,n))$ are
\begin{align}
\mathcal C_+=\{& (0,1,-1,-1,-1),(0,2,-1,0,q^3),(0,3,-2,-1,-q^4),\nonumber\\
&(1,0,1,1,-q^2),(1,2,0,1,q^2),(1,3,-1,0,-q),\nonumber\\
&(2,0,1,0,q^{-3}),(2,1,0,-1,q^{-2}),(2,3,-1,-1,q^2),\nonumber\\
&(3,0,2,1,-1),(3,1,1,0,-q^{-1}),(3,2,1,1,1)\},
\label{eq:Cplus}
\end{align}
and
\begin{align}
\mathcal C_-=\{& (0,1,-1,1,1),(0,2,-1,0,-q),(0,3,-2,1,-1),\nonumber\\
&(1,0,1,-1,q^{-2}),(1,2,0,-1,q^2),(1,3,-1,0,q^3),\nonumber\\
&(2,0,1,0,-q^{-1}),(2,1,0,1,q^{-2}),(2,3,-1,1,-q^{-2}),\nonumber\\
&(3,0,2,-1,-q^{-4}),(3,1,1,0,q^{-3}),(3,2,1,-1,-1)\}.
\label{eq:Cminus}
\end{align}
The factor $1/2$ is extracted in these lists.  Expanding the matrix units into Pauli operators gives 32 generic Pauli--Weyl coefficients per branch.

\subsection{Generic-\texorpdfstring{$M$}{M} source-normalized coefficients}

Equation~\eqref{eq:suppSVMY} gives an explicit coefficient rule without
re-solving the 240-variable system.  Let
\begin{equation}
 n_B(0),n_B(1),n_B(2),n_B(3)=(0,1,0,1),\qquad
 \chi_+=1,\quad \chi_-=-1,
 \label{eq:suppnBchi}
\end{equation}
where $n_B(r)$ is the down-spin number on the second physical leg of the
ordered two-spin basis.  If one row of $2S_\sigma(q)$ is
\begin{equation}
 \varsigma_{rc}q^{p_{rc}}E_{rc}Y_{m,n},\qquad
 \varsigma_{rc}\in\{+1,-1\},
\end{equation}
then the corresponding row of $2S_\sigma^{\rm VMY}(M)$ is
\begin{align}
 &\varsigma_{rc}q^{p_{rc}}
 f^{\chi_\sigma[n_B(r)-n_B(c)]}E_{rc}Y_{m,n}\nonumber\\
 &\qquad=\varsigma_{rc}
 q^{p_{rc}-\chi_\sigma[n_B(r)-n_B(c)]}
 \eps^{\chi_\sigma[n_B(r)-n_B(c)]}E_{rc}Y_{m,n}.
 \label{eq:suppVMYcoefficientrule}
\end{align}
No coefficient vanishes and no matrix-unit or Weyl exponent changes, so the
source-normalized seam has exactly the same twelve formal terms.  Unitary
conjugation proves Hermiticity and trace preservation.  The gauge map is a
linear bijection of the stated movement solution spaces, and
\begin{equation}
 [F_\sigma W_\sigma,I_2\otimes r]=F_\sigma[W_\sigma,I_2\otimes r].
 \label{eq:suppgaugecommutant}
\end{equation}
Consequently the full support-class kernel and affine family proved in
\ref{app:uniqueness} transfer unchanged.  This coefficient rule is a
local statement: because $F_+\neq F_-$ and only the second seam leg is
conjugated, it supplies neither a common full-chain conjugacy nor the global
branch and antiunitary formulas, which are derived separately in
\ref{app:global}.

\subsection{Complete movement-residual organization}
\label{app:movement}

For
\begin{equation}
\begin{aligned}
R_\sigma={}&W_{\sigma,Ba}^\dagger h_{BC}W_{\sigma,Ba}
+W_{\sigma,Ba}^\dagger S_{\sigma,ABa}W_{\sigma,Ba}\\
&-h_{AB}-S_{\sigma,BCa},
\end{aligned}
\end{equation}
substitution of Eqs.~\eqref{eq:Cplus}--\eqref{eq:Cminus}, followed by the normal-ordering rule in Eq.~\eqref{eq:suppcoeffmap}, leaves 36 occupied physical--Laurent basis elements for each branch before cancellation.  Collecting separately the contributions from $W^\dagger hW$, $W^\dagger SW$, $-h$, and $-S$ gives zero for every basis element.  The cancellations use only $ZX=q^2XZ$ and unitarity of $X$ and $Z$; no root quotient or numerical tolerance enters.  Together with the representative cancellations displayed in the main text, this establishes $R_\sigma=0$ in the unrestricted Laurent--Weyl algebra.  A machine-readable decomposition of the four contributions accompanies the reproducibility package.

For each integer $N\geq2$, evaluation under the quotient $X^N=Z^N=I_N$ and then in the standard $N$-dimensional Weyl representation preserves the zero residual.  Thus the identity holds in the stated endpoint representation, including the small-$N$ cases.

\subsection{Small-N alias handling}
\label{app:aliases}

The matrix-unit formula always has twelve distinct physical keys.  In the Pauli--Weyl expansion, endpoint aliases can cause coefficients attached to the same physical Pauli pair to combine.  Table~\ref{tab:aliases} reports aggregation after reducing endpoint exponents modulo $N$.

\begin{table*}[t]
\centering
\caption{Alias counts are identical for the two signs.  \emph{Zero} counts reduced Pauli--Weyl keys whose aggregated coefficient vanishes.}
\label{tab:aliases}
\begin{tabular}{rrrrrr}
\hline
$N$ & generic keys & reduced keys & aliases & zeros & nonzero reduced\\
\hline
2 & 32 & 12 & 20 & 9 & 3\\
\cline{1-6}
3 & 32 & 32 & 0 & 0 & 32\\
4 & 32 & 32 & 0 & 0 & 32\\
5 & 32 & 32 & 0 & 0 & 32\\
6 & 32 & 32 & 0 & 0 & 32\\
\hline
\end{tabular}
\end{table*}

The $N=2$ collapse is handled by aggregation, never by treating coincident Weyl monomials as independent.  It does not alter the formal movement proof.

\subsection{Universal quartic identity and finite local spectrum}
\label{app:localspectrum}

\subsubsection{Unrestricted Laurent--Weyl polynomial}
\label{app:localquartic}

For either branch, exact normal ordering of the closed twelve-term seam gives
\begin{equation}
 S_\sigma^4-\frac32S_\sigma^2+\frac{q+q^{-1}}2S_\sigma
 +\left[\frac1{16}-\frac{(q+q^{-1})^2}{16}\right]I=0 .
 \label{eq:suppquartic}
\end{equation}
This is an identity in the unrestricted Laurent--Weyl algebra.  Expanding each power before imposing any root quotient and collecting the sixteen occupied physical--Laurent basis elements gives zero coefficient by coefficient.  The accompanying exact data retain the separate $S^4$, $S^2$, $S$, and identity contributions as integer Laurent polynomials.
For the finite source-normalized seam,
$S_\sigma^{\rm VMY}=F_{\sigma,B}S_\sigma F_{\sigma,B}^\dagger$ gives the same quartic by
unitary similarity; this transfer does not assert an unrestricted VMY
object because $F_\sigma$ uses the finite phases $\eps$ and $q=\eps^M$.

\subsubsection{Finite charge blocks and local spectrum}
\label{app:localblocks}

Now let $q^2=\omega$ have primitive order $N$ and use the irreducible clock--shift endpoint.  Define
\begin{equation}
 g=\operatorname{diag}(1,\omega),\qquad
 Q_{\rm loc}=g\otimes g\otimes Z^{-1}.
 \label{eq:supplocalcharge}
\end{equation}
Every matrix-unit--Weyl monomial $E_{rc}Y_{m,n}$ in either seam satisfies the selection rule
\begin{equation}
 m=n_\downarrow(r)-n_\downarrow(c),
 \label{eq:suppselection}
\end{equation}
and therefore commutes with $Q_{\rm loc}$.  If $d(r)=n_\downarrow(r)$ for
$r=00,01,10,11$, the eigenvalue-$\omega^t$ sector has the basis
\begin{equation}
 |r;t\rangle=|r\rangle\otimes|d(r)-t\rangle,
 \qquad t\in\mathbb Z_N.
 \label{eq:supplocalbasis}
\end{equation}
Each sector is four dimensional.  For the plus seam the $t$ dependence is removed by
\begin{equation}
 D_t^+=\operatorname{diag}(1,q^{-2t},1,q^{-2t}).
\end{equation}
For the minus seam in the same $Q_{\rm loc}$ labeling it is removed by
\begin{equation}
 D_t^-=\operatorname{diag}(1,q^{2t},1,q^{2t}),
\end{equation}
followed by a fixed, $t$-independent gauge.  Hence the $N$ charge blocks of a fixed branch are unitarily equivalent.

To compute the reference blocks symmetrically, one may equivalently label the minus sectors by
$Q_{{\rm loc},-}=g_-\otimes g_-\otimes Z$, where
$g_-=\operatorname{diag}(\omega,1)$.  Since
$Q_{{\rm loc},-}=\omega^2Q_{\rm loc}^{-1}$, this only permutes the sectors of
the same cyclic symmetry.  Use
\begin{equation}
 |r;t\rangle_+=|r\rangle\otimes|d(r)-t\rangle,\qquad
 |r;t\rangle_-=|r\rangle\otimes|t-2+d(r)\rangle.
\label{eq:suppbranchbases}
\end{equation}
In these branch-adapted bases both families are gauged by
\begin{equation}
 D_t=\operatorname{diag}(1,q^{-2t},1,q^{-2t}).
\end{equation}
Direct substitution at $t=0$ gives
\begin{equation}
2B_0^+=
\begin{pmatrix}
0&-q^{-2}&q^3&-1\\
-q^2&0&q^4&-q\\
q^{-3}&q^{-4}&0&q^{-2}\\
-1&-q^{-1}&q^2&0
\end{pmatrix},
\label{eq:suppBplus}
\end{equation}
\begin{equation}
2B_0^-=
\begin{pmatrix}
0&q^{-2}&-q&-1\\
q^2&0&q^4&q^3\\
-q^{-1}&q^{-4}&0&-q^{-2}\\
-1&q^{-3}&-q^2&0
\end{pmatrix}.
\label{eq:suppBminus}
\end{equation}
For either branch, exact multiplication gives
\begin{equation}
 \operatorname{tr}B_0^\sigma=0,\qquad
 \operatorname{tr}(B_0^\sigma)^2=3,\qquad
 \operatorname{tr}(B_0^\sigma)^3=-\frac32(q+q^{-1}),
\label{eq:suppblocktraces}
\end{equation}
and
\begin{equation}
 \det B_0^\sigma=-\frac1{16}(q^2+1+q^{-2}).
\label{eq:suppblockdet}
\end{equation}
Newton's identities now give the characteristic polynomial, independently of merely knowing an annihilating polynomial:
\begin{align}
 p_q(\lambda)
 &=\lambda^4-\frac32\lambda^2+\frac{q+q^{-1}}2\lambda
 +\frac1{16}-\frac{(q+q^{-1})^2}{16}.
 \label{eq:suppcharpoly}
\end{align}
It agrees with the unrestricted annihilating polynomial
\eqref{eq:suppquartic}.
Writing $q=\ee^{\ii\theta}$,
\begin{equation}
 p_q(\lambda)=
 \left[(\lambda+\tfrac12)^2-\cos^2(\tfrac\theta2)\right]
 \left[(\lambda-\tfrac12)^2-\sin^2(\tfrac\theta2)\right].
 \label{eq:suppcharfactor}
\end{equation}
For every finite root in Eq.~\eqref{eq:suppgenericroot},
$\theta=\pi M/N\in(0,\pi)$ and the four roots are distinct:
\begin{equation}
 -\frac12\pm\cos\frac{\pi M}{2N},\qquad
 \frac12\pm\sin\frac{\pi M}{2N},
 \label{eq:supplocalroots}
\end{equation}
and each occurs once in every charge block, hence with total multiplicity $N$.
Because Eq.~\eqref{eq:suppSVMY} is a unitary similarity of the local seam,
the same four values and multiplicities hold for $S_\sigma^{\rm VMY}(M)$.  This
is a local spectrum statement and does not factorize either full-chain
Hamiltonian.

Finite clock--shift evaluations for coprime roots through $N=12$ provide an independent check of quotient aliases and transcription.  They are not used in place of the unrestricted proof of Eq.~\eqref{eq:suppquartic} or the analytic gauge transfer.

The basis in Eq.~\eqref{eq:supplocalbasis} correlates physical and endpoint labels.  It is not a tensor-product intertwiner compatible with the neighboring bulk bonds.  The local $N$-fold repetition therefore does not prove endpoint decoupling.

\section{Support-qualified uniqueness}
\label{app:uniqueness}

Let one coefficient tensor $K$ be copied translationally to $ABa$ and $BCa$ and obey
\begin{equation}
W_{Ba,\sigma}^\dagger K_{ABa}W_{Ba,\sigma}=K_{BCa}.
\label{eq:supphomogeneous}
\end{equation}
Choose $O_0=I/\sqrt2$ and three traceless Hilbert--Schmidt orthonormal operators on $A$.  The complete expansion
\begin{equation}
K_{ABa}=\sum_{\mu=0}^{3}O_\mu^A\otimes M_\mu^{Ba}
\end{equation}
has no structural restriction.  Since the right side of Eq.~\eqref{eq:supphomogeneous} is the identity on $A$, every traceless coefficient satisfies $W^\dagger M_\mu W=0$.  Invertibility gives
\begin{equation}
K_{ABa}=I_A\otimes R_{Ba}.
\label{eq:suppfirst}
\end{equation}
Translation of the same tensor gives $K_{BCa}=I_B\otimes R_{Ca}$, so
\begin{equation}
[W_{Ba}^\dagger R_{Ba}W_{Ba}]\otimes I_C=I_B\otimes R_{Ca}.
\end{equation}
The intersection
\begin{equation}
[\cB(Ba)\otimes I_C]\cap[I_B\otimes\cB(Ca)]
=I_B\otimes I_C\otimes\cB(a)
\end{equation}
forces $R_{Ba}=I_B\otimes r_a$.  This translated second-site step is logically distinct from Eq.~\eqref{eq:suppfirst}.  The residual equation is
\begin{equation}
[W_\sigma,I_2\otimes r_a]=0.
\end{equation}
The plus blocks require $[r_a,X^\dagger]=[r_a,Z]=0$; the minus blocks require $[r_a,X]=[r_a,Z]=0$.  Thus, in an arbitrary evaluated endpoint representation, the exact homogeneous kernel is
\begin{equation}
 K=I_A\otimes I_B\otimes r_a,\qquad r_a\in\{X,Z\}'.
 \label{eq:suppabstractkernel}
\end{equation}
Conversely every joint-commutant element solves the homogeneous equation.  Since the closed $S_\sigma$ is one particular solution, the abstract affine family in this support class is
\begin{equation}
 S_{\sigma,\mathrm{general}}
 =S_\sigma+I_A\otimes I_B\otimes r_a,\qquad r_a\in\{X,Z\}'.
 \label{eq:suppabstractaffine}
\end{equation}
For the standard irreducible $N$-dimensional clock--shift endpoint, distinct eigenvalues of $Z$ make $r_a$ diagonal, and the cyclic shift makes all diagonal entries equal.  The finite affine family therefore reduces to
\begin{equation}
 S_{\sigma,\mathrm{general}}=S_\sigma+cI_{4N},\qquad c\in\mathbb C.
 \label{eq:suppaffine}
\end{equation}
Hermiticity restricts $c$ to $\mathbb R$, and imposing the traceless
normalization selects $c=0$.

Conjugation by $F_{\sigma,B}$ is a bijection of the same support-class movement
solution spaces and Eq.~\eqref{eq:suppgaugecommutant} leaves the endpoint
commutant unchanged.  Therefore, for the source-normalized finite object,
\begin{equation}
 S_{\sigma,\mathrm{general}}^{\rm VMY}
 =S_\sigma^{\rm VMY}+cI_{4N},\qquad c\in\mathbb C,
 \label{eq:suppaffineVMY}
\end{equation}
with the same Hermitian and traceless-normalization qualifications.

This theorem is confined to the exact translated copy and support intersection above.  Wider supports, site-dependent tensors, changed endpoint representations, junction counterterms, and nonlocal deformations define different classification problems.  For reducible endpoints the affine freedom remains the full joint commutant in Eq.~\eqref{eq:suppabstractaffine}, rather than a scalar.

\paragraph{Finite-dimensional cross-check}
The full complex-linear homogeneous map was also assembled in the matrix-unit basis for $2\leq N\leq6$.  In each irreducible clock--shift representation its nullity is one, spanned by the identity, in agreement with Eq.~\eqref{eq:suppaffine}.  Independent dense evaluations check two-sided unitarity, Hermiticity, and local movement to machine precision.  These calculations validate transcription and small-$N$ aliases; the Laurent identities and the support theorem establish the all-$N$ statements.

\section{Branch exchange and projective antiunitary}
\label{app:global}

This section gives the complete finite-chain derivation behind the global theorem in the main text.  Here
\begin{equation}
 \begin{aligned}
 \eps&=\ee^{\ii\pi/N},&q&=\eps^M,&\omega&=q^2,\\
 \gcd(M,N)&=1,&N&\geq2,&L&\geq3.
 \end{aligned}
\end{equation}
and all site indices are modulo $L$.  Hats denote the source-normalized VMY
direct object; unhatted $W_\sigma,S_\sigma,H_{\rm Weyl}^\sigma$ denote the ungauged direct
Weyl object.  Neither notation denotes the endpoint-traced temporal MPO.

\subsection{Unified charge and ordered movement cycle}

Let
\begin{equation}
 Q=g^{\otimes L}\otimes Z^{-1},\qquad
 g=\operatorname{diag}(1,\omega).
\label{eq:suppglobalQ}
\end{equation}
The alternative orientation convention
$g_-=\operatorname{diag}(\omega,1)=\omega g^{-1}$ gives
\begin{equation}
 Q_-=g_-^{\otimes L}\otimes Z=\omega^LQ^{-1}.
\label{eq:suppQminus}
\end{equation}
The local selection rule \eqref{eq:suppselection} holds for both signs, so
\begin{equation}
 [H_{\rm Weyl}^\sigma(q;k),Q]=[H_{\rm VMY}^\sigma(M;k),Q]=0,
 \qquad \sigma=\pm.
\label{eq:suppbothcharge}
\end{equation}
Thus Eq.~\eqref{eq:suppQminus} is a second convention for one cyclic symmetry, not an independent charge.

Let $\widehat G_{\sigma,j}$ embed $W_\sigma^{\rm VMY}$ on physical site $j$ and the
retained endpoint.  We fix the product
\begin{equation}
 \widehat{\mathcal U}_\sigma(k)=
 \widehat G_{\sigma,k}\widehat G_{\sigma,k+1}\cdots\widehat G_{\sigma,k+L-1},
\label{eq:suppholonomy}
\end{equation}
where the rightmost factor acts first.  Repeated use of the local movement identity gives
\begin{equation}
 \widehat{\mathcal U}_\sigma(k)^\dagger H_{\rm VMY}^\sigma(M;k)
 \widehat{\mathcal U}_\sigma(k)
 =H_{\rm VMY}^\sigma(M;k+L)=H_{\rm VMY}^\sigma(M;k),
\end{equation}
and hence
\begin{equation}
 [\widehat{\mathcal U}_\sigma(k),H_{\rm VMY}^\sigma(M;k)]
 =[\widehat{\mathcal U}_\sigma(k),Q]=0.
\label{eq:suppholonomycomm}
\end{equation}
Adjacent starting points satisfy
\begin{equation}
 \widehat{\mathcal U}_\sigma(k+1)=
 \widehat G_{\sigma,k}^\dagger\widehat{\mathcal U}_\sigma(k)\widehat G_{\sigma,k}.
\end{equation}
The gauge-adjusted branch antiunitary $\widehat\Theta_L$ constructed below
obeys
\begin{equation}
 \widehat\Theta_L\widehat G_{\sigma,j}\widehat\Theta_L^{-1}
 =\omega^{-1}\widehat G_{-\sigma,j},
\end{equation}
and antiunitary conjugation does not reverse the multiplication order.  Therefore
\begin{equation}
 \widehat\Theta_L\widehat{\mathcal U}_\sigma(k)\widehat\Theta_L^{-1}
 =\omega^{-L}\widehat{\mathcal U}_{-\sigma}(k).
\label{eq:suppholonomytheta}
\end{equation}

Let $T$ translate states one site to the right,
$TO_jT^\dagger=O_{j+1}$ and $T^L=I$.  For a seam on $(L-1,0)$ define
\begin{equation}
 \widehat{\mathcal M}_\sigma=\widehat G_{\sigma,0}T.
\label{eq:suppM}
\end{equation}
Movement and translation covariance give
\begin{equation}
 [\widehat{\mathcal M}_\sigma,H_{\rm VMY}^\sigma(M;0)]
 =[\widehat{\mathcal M}_\sigma,Q]=0,\qquad
 \widehat{\mathcal M}_\sigma^L=\widehat{\mathcal U}_\sigma(0).
\label{eq:suppMpowers}
\end{equation}
No universal finite order of $\widehat{\mathcal U}_\sigma$ is asserted.  Removing
hats and output gauges gives the analogous same-sign holonomy for
$H_{\rm Weyl}^\sigma$, but not the all-root protected pairing result below.

\subsection{Endpoint Clifford shear and unitary branch exchange}

Let $R$ reverse the physical sites and fix the endpoint,
$RO_jR=O_{L-1-j}$.  Define
\begin{equation}
 D_q(t)=q^{t\sigma^z/2}
 =\operatorname{diag}\!\left(\ee^{\ii\pi Mt/(2N)},
 \ee^{-\ii\pi Mt/(2N)}\right),
\end{equation}
\begin{equation}
 \delta=N\bmod2,\qquad \xi=q^{1-\delta}.
\end{equation}
Let the endpoint inversion be
\begin{equation}
 J|n\rangle=|-n\!\!\pmod N\rangle,
 \label{eq:suppJ}
\end{equation}
with residue zero represented by $|N\rangle$.
The endpoint shear is diagonal:
\begin{equation}
 \begin{aligned}
 P_N|n\rangle&=p_n|n\rangle,\qquad p_1=1,\\
 p_{n+1}&=\xi\omega^np_n,\qquad
 n=1,\ldots,N,\qquad p_{N+1}=p_1.
 \end{aligned}
\label{eq:suppP}
\end{equation}
Its recursion closes because
\begin{equation}
 \xi^N\omega^{N(N+1)/2}=q^{N(N+2-\delta)}=1,
\end{equation}
where $N+2-\delta$ is even.  Thus the shear is valid for every coprime
$(M,N)$, including both values of $q^N=(-1)^M$.
Directly on the one-based endpoint basis,
\begin{equation}
 P_NXP_N^\dagger=\xi XZ,\qquad
 P_NZP_N^\dagger=Z,\qquad
 P_NJP_N^*J=Z^{-\delta}.
\label{eq:suppPrelations}
\end{equation}
These formulas include $N=2$ and do not treat aliased monomials as independent.

For any integer $b$, put
\begin{equation}
 a_0=-L-1,\qquad c_b=2L-\delta+2b,\qquad
 E_b=X^{a_0}Z^bP_N,
\label{eq:suppE}
\end{equation}
\begin{equation}
 V_b=\bigotimes_{j=0}^{L-1}D_q(c_b-2j),\qquad
 C_b=(V_b\otimes E_b)R.
\label{eq:suppC}
\end{equation}
Reflection reverses the orientation of every ordinary bond:
$Rh_{j,j+1}R=h^*$.  The adjacent twist exponents differ by two, and
\begin{equation}
 [D_q(t)\otimes D_q(t-2)]h^*
 [D_q(t)\otimes D_q(t-2)]^\dagger=h.
\label{eq:suppbulkrestore}
\end{equation}
For an endpoint monomial,
\begin{equation}
 E_bX^uZ^vE_b^\dagger
 =\xi^u\omega^{u(u-1)/2+bu-a_0(u+v)}
 X^uZ^{u+v}.
\label{eq:suppEmonomial}
\end{equation}
At the seam, the physical action is
\begin{equation}
 [D_q(2-\delta+2b)\otimes D_q(2L-\delta+2b)]P_{12}.
\end{equation}
Write $r=(r_1,r_2)$ and $c=(c_1,c_2)$ for the two binary spin strings labeling $E_{rc}$.  The net $q$ exponent multiplying a source row
$E_{rc}X^uZ^v$ after the physical swap, shear, and twists is
\begin{align}
 \kappa_b(r,c;u,v)
 ={}&(1-\delta)u+u(u-1)+2bu-2a_0(u+v)\nonumber\\
 &+(2-\delta+2b)(c_2-r_2)
 +(2L-\delta+2b)(c_1-r_1).
 \label{eq:suppkappa}
\end{align}
The $L$, $b$, and $\delta$ dependence cancels row by row against the selection-rule data.  Table~\ref{tab:branchexchange} records all twelve source rows.  We write the source and target coefficients as $2c_-=\varsigma_-q^{p_-}$ and $2c_+=\varsigma_+q^{p_+}$, with $\varsigma_\pm\in\{+1,-1\}$.  The column $\kappa$ is Eq.~\eqref{eq:suppkappa}, and the displayed exponent residual is
\begin{equation}
 \Delta p:=p_-+\kappa_b-p_+.
 \label{eq:suppbranchresidual}
\end{equation}
Thus every zero in the last column is a transparent rowwise equality of the source exponent plus the conjugation exponent with the target exponent; the signs also agree directly, $\varsigma_-=\varsigma_+$.

\begin{table*}[t]
\centering
\caption{Termwise phase comparison for $C_bS_-C_b^\dagger=S_+$.  Coefficients are decomposed as $2c_\pm=\varsigma_\pm q^{p_\pm}$, and the last column is explicitly $\Delta p=p_-+\kappa_b-p_+$.}
\label{tab:branchexchange}
\begin{tabular}{cccccccccc}
\hline
\multicolumn{4}{c}{branch $-$: source}&
\multicolumn{4}{c}{branch $+$: target}&
\multicolumn{2}{c}{residual}\\
\cline{1-4}\cline{5-8}\cline{9-10}
source & $(u,v)$ & $\varsigma_-$ & $p_-$ & target & $(u,u+v)$ & $\varsigma_+$ & $p_+$ & $\kappa_b$ & $\Delta p$\\
\hline
$E_{01}$ & $(-1,1)$ & $+1$ & $0$ & $E_{02}$ & $(-1,0)$ & $+1$ & $3$ & $3$ & $0$\\
$E_{02}$ & $(-1,0)$ & $-1$ & $1$ & $E_{01}$ & $(-1,-1)$ & $-1$ & $0$ & $-1$ & $0$\\
$E_{03}$ & $(-2,1)$ & $-1$ & $0$ & $E_{03}$ & $(-2,-1)$ & $-1$ & $4$ & $4$ & $0$\\
\cline{1-10}
$E_{10}$ & $(1,-1)$ & $+1$ & $-2$ & $E_{20}$ & $(1,0)$ & $+1$ & $-3$ & $-1$ & $0$\\
$E_{12}$ & $(0,-1)$ & $+1$ & $2$ & $E_{21}$ & $(0,-1)$ & $+1$ & $-2$ & $-4$ & $0$\\
$E_{13}$ & $(-1,0)$ & $+1$ & $3$ & $E_{23}$ & $(-1,-1)$ & $+1$ & $2$ & $-1$ & $0$\\
\cline{1-10}
$E_{20}$ & $(1,0)$ & $-1$ & $-1$ & $E_{10}$ & $(1,1)$ & $-1$ & $2$ & $3$ & $0$\\
$E_{21}$ & $(0,1)$ & $+1$ & $-2$ & $E_{12}$ & $(0,1)$ & $+1$ & $2$ & $4$ & $0$\\
$E_{23}$ & $(-1,1)$ & $-1$ & $-2$ & $E_{13}$ & $(-1,0)$ & $-1$ & $1$ & $3$ & $0$\\
\cline{1-10}
$E_{30}$ & $(2,-1)$ & $-1$ & $-4$ & $E_{30}$ & $(2,1)$ & $-1$ & $0$ & $4$ & $0$\\
$E_{31}$ & $(1,0)$ & $+1$ & $-3$ & $E_{32}$ & $(1,1)$ & $+1$ & $0$ & $3$ & $0$\\
$E_{32}$ & $(1,-1)$ & $-1$ & $0$ & $E_{31}$ & $(1,0)$ & $-1$ & $-1$ & $-1$ & $0$\\
\hline
\end{tabular}
\end{table*}

Equations~\eqref{eq:suppbulkrestore}--\eqref{eq:suppEmonomial} and the complete termwise comparison prove
\begin{equation}
 C_bS_-C_b^\dagger=S_+,\qquad
 C_bH_{\rm Weyl}^-(q;0)C_b^\dagger=H_{\rm Weyl}^+(q;0).
\label{eq:suppbranchunitary}
\end{equation}
For $b=0$, since the physical twists and reflection preserve
$g^{\otimes L}$ while $X^{a_0}Z^{-1}X^{-a_0}=\omega^{a_0}Z^{-1}$,
\begin{equation}
 C_0QC_0^\dagger=\omega^{-L-1}Q.
\label{eq:suppCcharge}
\end{equation}

The source-normalized branch unitary is not $C_0$ alone.  Define
\begin{equation}
 \widehat C_0=(F_+^{\otimes L}\otimes I_a)C_0.
 \label{eq:suppChat}
\end{equation}
Reflection moves the right-leg $F_-$ to the left seam site.  The uniform
$F_+^{\otimes L}$ preserves every ordinary bond, cancels that factor because
$F_+F_-=fI_2$, and leaves $F_+$ on the target right leg.  Hence
\begin{equation}
 \widehat C_0H_{\rm VMY}^-(M;0)\widehat C_0^\dagger
 =H_{\rm VMY}^+(M;0),\qquad
 \widehat C_0Q\widehat C_0^\dagger=\omega^{-L-1}Q.
 \label{eq:suppbranchunitaryVMY}
\end{equation}

\subsection{Gauge-adjusted same-branch antiunitary}

Let $\mathcal K_{z,a}$ be entrywise conjugation in the physical
$\sigma^z$ basis and the one-based endpoint basis, and use the endpoint
inversion in Eq.~\eqref{eq:suppJ}.  Choose the unique residue
and its compatible integer lift
\begin{equation}
 \kappa=(1-m_{\rm inv})\bmod N,\quad 0\leq\kappa<N,
 \qquad
 \tau=\frac{M(1-\kappa)-1}{N}\in\mathbb Z,
 \label{eq:suppkappatau}
\end{equation}
where $m_{\rm inv}=M^{-1}\bmod N$ is the residue defined in
Eq.~\eqref{eq:suppminv}.  Thus
\begin{equation}
 \begin{aligned}
 M(1-\kappa)&=1+N\tau,
 &\omega^\kappa&=(q/\eps)^2,
 &F_+^{-2}&=g^\kappa,\\
 F_+F_-&=fI_2,
 &\sigma^xF_+^*\sigma^x&=F_-^\dagger.&&
 \end{aligned}
 \label{eq:suppkappaphases}
\end{equation}
The compatible quotient in Eq.~\eqref{eq:suppkappatau} must not be replaced
by $(Mm_{\rm inv}-1)/N$ while retaining the displayed residue $\kappa$.

The complete antiunitary operators are
\begin{equation}
 \Theta_L^{\rm W}=[(\sigma^x)^{\otimes L}\otimes J]\mathcal K_{z,a},
 \qquad
 \widehat\Theta_L=
 [(\sigma^x)^{\otimes L}\otimes X^\kappa J]\mathcal K_{z,a}.
 \label{eq:suppTheta}
\end{equation}
Termwise conjugation gives the complete ungauged branch relations
\begin{equation}
 \begin{aligned}
 \Theta_L^{\rm W}H_{\rm Weyl}^+(q;0)(\Theta_L^{\rm W})^{-1}
 &=H_{\rm Weyl}^-(q;0),\\
 (\Theta_L^{\rm W})^2&=I,\\
 \Theta_L^{\rm W}Q(\Theta_L^{\rm W})^{-1}&=\omega^{-L}Q.
 \end{aligned}
 \label{eq:suppThetaWrelations}
\end{equation}
For the hatted operator, conjugation of
the output gauges initially produces $F_-^\dagger S_-F_-$.  Every row
$E_{rc}X^uZ^v$ of $S_-$ obeys
$v=-[n_B(r)-n_B(c)]$, and $X^\kappa$ contributes
\begin{equation}
 \omega^{-\kappa v}
 =(q/\eps)^{2[n_B(r)-n_B(c)]},
 \label{eq:suppThetaRowPhase}
\end{equation}
exactly the physical-row action of $F_-^2$.  Therefore
\begin{align}
 \widehat\Theta_LH_{\rm VMY}^+(M;0)\widehat\Theta_L^{-1}
 &=H_{\rm VMY}^-(M;0),\nonumber\\
 \widehat\Theta_LW_+^{\rm VMY}\widehat\Theta_L^{-1}
 &=\omega^{-1}W_-^{\rm VMY},\nonumber\\
 \widehat\Theta_L^2&=I,\qquad
 \widehat\Theta_LQ\widehat\Theta_L^{-1}=\omega^{\kappa-L}Q.
 \label{eq:suppThetaHatRelations}
\end{align}
The same twelve-row phase comparison verifies
Eq.~\eqref{eq:suppThetaRowPhase}; the gate relation follows by the same
four-block substitution.

Fix $b=0$ and define
\begin{equation}
 d=L+1-\delta,\qquad p=d+\kappa,\qquad c=2L+1-\kappa,
 \qquad
 \widehat{\mathcal A}=\widehat C_0\widehat\Theta_L.
 \label{eq:suppA}
\end{equation}
Then $\widehat{\mathcal A}$ is an antiunitary symmetry of
$H_{\rm VMY}^+(M;0)$ and
\begin{equation}
 \widehat{\mathcal A}Q\widehat{\mathcal A}^{-1}=\omega^{-c}Q.
 \label{eq:suppAcharge}
\end{equation}

We now multiply the complete antiunitary square.  Introduce
\begin{equation}
 \begin{aligned}
 R_\eps(t)&=\operatorname{diag}(1,\eps^t),\qquad k_j=k_0+2Mj,\\
 k_0&=1-M-M(2L-\delta).
 \end{aligned}
 \label{eq:suppkj}
\end{equation}
Up to an irrelevant unit scalar, the unitary part of
$\widehat{\mathcal A}=U_A\mathcal K_{z,a}$ is
\begin{equation}
 U_A=
 \left[\left(\bigotimes_{j=0}^{L-1}R_\eps(k_j)\right)
 (\sigma^x)^{\otimes L}R\right]
 \otimes[X^{-L-1}P_NX^\kappa J].
 \label{eq:suppUA}
\end{equation}
Reversal pairs $k_j$ with $k_{L-1-j}$.  Since
$\sigma^xR_\eps(-t)\sigma^x=\eps^{-t}R_\eps(t)$,
the physical factor is
\begin{equation}
 (U_AU_A^*)_{\rm phys}
 =\eps^{-\sum_jk_j}(g^{\otimes L})^{-p},\qquad
 -\sum_jk_j=ML(L+2-\delta)-L.
 \label{eq:suppA2physical}
\end{equation}
Here Eq.~\eqref{eq:suppkappatau} gives
$k_j+k_{L-1-j}=-2Mp-2N\tau$.  Put
$u=L+1+\kappa=p+\delta$.  The endpoint factor obtained from
Eq.~\eqref{eq:suppPrelations} is
\begin{equation}
 (U_AU_A^*)_a
 =\xi^u\omega^{u(u-1)/2-\kappa p}Z^p.
 \label{eq:suppA2endpoint}
\end{equation}
The complete scalar phase reduces exactly as
\begin{equation}
 \eps^{-\sum_jk_j}\xi^u
 \omega^{u(u-1)/2-\kappa p}
 =\eps^{Mpc+NL\tau}=(-1)^{L\tau}q^{pc}.
 \label{eq:suppA2phase}
\end{equation}
Consequently the full-space operator identity is
\begin{equation}
 \widehat{\mathcal A}^{2}=(-1)^{L\tau}q^{pc}Q^{-p}.
 \label{eq:suppAsquare}
\end{equation}

Although Eq.~\eqref{eq:suppkappatau} fixes a unique representative, this
formula is intrinsic.  Under $\kappa\mapsto\kappa+N$ one has
$\tau\mapsto\tau-M$, $p\mapsto p+N$, and $c\mapsto c-N$.  The ratio of
the transformed right side of Eq.~\eqref{eq:suppAsquare} to the original is
\begin{equation}
 (-1)^{-LM}q^{N(c-p-N)}
 =(-1)^{M(c-p-N-L)}=1,
 \label{eq:supprepresentative}
\end{equation}
because $c-p-N-L=\delta-N-2\kappa$ is even; also $Q^N=I$ and
$X^{\kappa+N}=X^\kappa$.  Thus the generators, sector map, and square are
representative independent.  The $\widehat{\mathcal B}^{2}$ formula below
is the right side of Eq.~\eqref{eq:suppAsquare} multiplied by
$(\eps\omega^L)^{r_0} Q^{-r_0}$, which contains neither $p$ nor $c$, so it
inherits the same representative invariance.

The co-moving calculation starts from the ungauged four-block identity,
which is algebraic at every coprime root:
\begin{align}
 &[D_q(2-\delta)\otimes E_0]\,W_-\,
 [D_q(-(2L-\delta))\otimes E_0^\dagger]\nonumber\\
 &\hspace{22mm}
 =q^{L+4}(g^{-1}\otimes Z)W_+^\dagger .
 \label{eq:suppfourblock}
\end{align}
The two $D_q$ factors act on the physical leg and $E_0$ on the endpoint.
Direct substitution of all four blocks of $W_-$ cancels the endpoint $X$
powers and gives, block by block,
\begin{equation}
 q^{L+4}(g^{-1}\otimes Z)W_+^\dagger
 =\frac{q^{L+4}}{\sqrt2}
 \begin{pmatrix}
 Z&-q^{-1}X^{-1}\\
 \omega^{-1}XZ&q^{-1}\omega^{-1}I
 \end{pmatrix}.
 \label{eq:suppfourblockexpanded}
\end{equation}
This explicit matrix records all four endpoint-valued blocks, rather than
only their common scalar.  Inserting the source gauges
multiplies the scalar by $f=\eps/q$: on the left $F_+F_-=fI_2$, while
on the right $W_+^\dagger F_+^\dagger=(W_+^{\rm VMY})^\dagger$.  Restoring
the reflected translation and every spectator twist therefore gives the
full-space identity

\begin{equation}
 \widehat{\mathcal A}\widehat{\mathcal M}_+
 \widehat{\mathcal A}^{-1}
 =(\eps\omega^L)Q^{-1}\widehat{\mathcal M}_+^{-1}.
\label{eq:suppAMA}
\end{equation}
The scalar is $fq^{2L+1}=\eps q^{2L}=\eps\omega^L$.  This is an operator
multiplication, not an inference from eigenvalues.

Choose
\begin{equation}
 r_0=\begin{cases}0,&L\ {\rm odd},\\1,&L\ {\rm even},\end{cases}
 \qquad
 \widehat{\mathcal B}=\widehat{\mathcal M}_+^{r_0}
 \widehat{\mathcal A}.
\label{eq:suppB}
\end{equation}
Using $[\widehat{\mathcal M}_+,Q]=0$ in
Eqs.~\eqref{eq:suppAsquare} and
\eqref{eq:suppAMA} gives
\begin{align}
 \widehat{\mathcal B}H_{\rm VMY}^+(M;0)
 \widehat{\mathcal B}^{-1}&=H_{\rm VMY}^+(M;0),\nonumber\\
 \widehat{\mathcal B}Q\widehat{\mathcal B}^{-1}
 &=\omega^{-c}Q,\nonumber\\
 \widehat{\mathcal B}^{2}
 &=(-1)^{L\tau}q^{pc}(\eps\omega^L)^{r_0} Q^{-(p+r_0)}.
\label{eq:suppBrelations}
\end{align}

\subsection{Anti-linearity, sector involution, and even multiplicity}

Let $Q|\psi_m\rangle=\omega^m|\psi_m\rangle$.  Because
$\widehat{\mathcal B}$ is anti-linear,
\begin{equation}
 \widehat{\mathcal B}Q|\psi_m\rangle
 =\omega^{-m}\widehat{\mathcal B}|\psi_m\rangle.
\end{equation}
On the other hand Eq.~\eqref{eq:suppBrelations} gives
$\widehat{\mathcal B}Q=\omega^{-c}Q\widehat{\mathcal B}$.  Therefore
\begin{equation}
 Q(\widehat{\mathcal B}|\psi_m\rangle)
 =\omega^{c-m}\widehat{\mathcal B}|\psi_m\rangle,
\end{equation}
so the exact involution is
\begin{equation}
 m\longmapsto c-m=2L+1-\kappa-m\pmod N.
\label{eq:suppsectormap}
\end{equation}

If $N$ is even, coprimality makes $M$ and $m_{\rm inv}$ odd.  Hence
$\kappa$ is even, $c$ is odd, and $2m=c\pmod N$ has no solution.  Every
sector is paired with a distinct orthogonal sector.

If $N$ is odd, there is exactly one fixed sector $m_*$.  Choose
$0\leq m_*<N$ and write
\begin{equation}
 c-2m_*=N\rho .
 \label{eq:supprho}
\end{equation}
Here $\delta=1$, $p=L+\kappa$, and parity gives
\begin{equation}
 \rho\equiv1-\kappa\pmod2,
 \qquad
 \tau\equiv M(1-\kappa)+1\pmod2.
 \label{eq:suppoddparities}
\end{equation}
Restricting the last line of Eq.~\eqref{eq:suppBrelations} to
$Q=\omega^{m_*}$ gives a sign $(-1)^E$ with
\begin{equation}
 E=L\tau+Mp\rho+r_0(M\rho-\tau).
 \label{eq:suppfixedexponent}
\end{equation}
Using Eq.~\eqref{eq:suppoddparities}, the first two terms reduce to $L$
modulo two and the coefficient of $r_0$ reduces to one.  Since
$r_0=(L+1)\bmod2$, $E$ is odd and
\begin{equation}
 \left.\widehat{\mathcal B}^{2}\right|_{m_*}=-I.
\label{eq:suppKramers}
\end{equation}
Kramers' argument supplies a linearly independent same-energy partner in
the fixed sector.  Distinct-sector pairing handles all other sectors.
Hence every eigenvalue of $H_{\rm VMY}^+(M;0)$ has even multiplicity.
Equation~\eqref{eq:suppbranchunitaryVMY} transfers the result to the minus
sign and Eq.~\eqref{eq:suppVMYmovement} to every seam position.  The theorem
states even multiplicity, not exact multiplicity two.

Because the Hamiltonian is finite dimensional and Hermitian, this even-multiplicity conclusion is equivalent to the existence of some nonunique spectral-basis unitary that writes it abstractly as $H_{\rm eff}\otimes I_2$.  Neither that linear-algebraic observation nor the proof above identifies the second factor with a fixed subspace of the retained endpoint, makes the factorization local, or makes it compatible with the movement gates and neighboring bonds.  The nontrivial statement here is the explicit charge-sector antiunitary with the square in Eq.~\eqref{eq:suppBrelations}.

\subsection{Proof of the relative-flux protection theorem}
\label{app:output-z2}

For the family in Eqs.~\eqref{eq:suppoutputgaugefamily}--
\eqref{eq:suppHeta}, choose the unique residue and its compatible lift
\begin{equation}
 \begin{aligned}
 M\kappa_\eta&\equiv-\eta\pmod N,\qquad 0\le\kappa_\eta<N,\\
 \tau_\eta&=-\frac{M\kappa_\eta+\eta}{N}\in\mathbb Z.
 \end{aligned}
 \label{eq:suppetakappatau}
\end{equation}
Then
\begin{equation}
 \begin{aligned}
 \omega^{\kappa_\eta}&=f_\eta^{-2},
 &(F_+^{(\eta)})^{-2}&=g^{\kappa_\eta},\\
 F_+^{(\eta)}F_-^{(\eta)}&=f_\eta I_2,
 &\sigma^x(F_+^{(\eta)})^*\sigma^x&=(F_-^{(\eta)})^\dagger.
 \end{aligned}
 \label{eq:suppetaphases}
\end{equation}
Define the branch operators at the reference seam by
\begin{equation}
 \mathcal C_\eta=
 \bigl((F_+^{(\eta)})^{\otimes L}\otimes I_a\bigr)C_0,
 \qquad
 \Theta_\eta=
 \bigl[(\sigma^x)^{\otimes L}\otimes X^{\kappa_\eta}J\bigr]
 \mathcal K_{z,a}.
 \label{eq:suppetabranchops}
\end{equation}
Reflection in $C_0$ moves the right-leg $F_-^{(\eta)}$ to the other seam
site.  The uniform $(F_+^{(\eta)})^{\otimes L}$ preserves all ordinary bonds,
cancels that factor, and leaves the target $F_+^{(\eta)}$.  Hence
\begin{equation}
 \mathcal C_\eta H_\eta^-(M;0)\mathcal C_\eta^\dagger=H_\eta^+(M;0).
 \label{eq:suppetabranchunitary}
\end{equation}
For every seam monomial $E_{rc}X^uZ^v$ of $S_-$, the twelve-term expansion
obeys $v=-\Delta n_B$, where
$\Delta n_B=n_B(r)-n_B(c)$.  Conjugation by $X^{\kappa_\eta}$ supplies
$\omega^{-\kappa_\eta v}=f_\eta^{-2\Delta n_B}$, exactly converting the
$F_-^{(\eta)\dagger}S_-F_-^{(\eta)}$ produced by the ungauged
antiunitary into
$F_-^{(\eta)}S_-F_-^{(\eta)\dagger}$.  Therefore
\begin{align}
 \Theta_\eta H_\eta^+(M;0)\Theta_\eta^{-1}
 &=H_\eta^-(M;0),\nonumber\\
 \Theta_\eta W_+^{(\eta)}\Theta_\eta^{-1}
 &=\omega^{-1}W_-^{(\eta)},\qquad \Theta_\eta^2=I.
 \label{eq:suppetaantiunitary}
\end{align}
The charge actions are
\begin{equation}
 \mathcal C_\eta Q\mathcal C_\eta^\dagger=\omega^{-L-1}Q,
 \qquad
 \Theta_\eta Q\Theta_\eta^{-1}=\omega^{\kappa_\eta-L}Q.
 \label{eq:suppetabranchcharge}
\end{equation}

Put
\begin{equation}
 \begin{aligned}
 \mathcal A_\eta&=\mathcal C_\eta\Theta_\eta,\qquad
 p_\eta=L+1-\delta+\kappa_\eta,\\
 c_\eta^+&=2L+1-\kappa_\eta,\qquad c_\eta:=c_\eta^+.
 \end{aligned}
 \label{eq:suppetaApc}
\end{equation}
It follows that $\mathcal A_\eta$ is a plus-branch antiunitary and
\begin{equation}
 \mathcal A_\eta Q\mathcal A_\eta^{-1}=\omega^{-c_\eta}Q,
 \qquad m\longmapsto c_\eta^+-m\pmod N.
 \label{eq:suppetasectormap}
\end{equation}

We next derive its square without a parity assumption.  Write
$\mathcal A_\eta=U_A\mathcal K_{z,a}$ and set
$d=L+1-\delta$ and $u=L+1+\kappa_\eta=p_\eta+\delta$.  Up to a unit scalar,
the physical diagonal factor in the unitary part of $\mathcal A_\eta$ is
\begin{equation}
 \mathcal V_\eta=\bigotimes_{t=0}^{L-1}
 \operatorname{diag}(1,\eps^{k_t}),\qquad
 k_t=\eta-M(2L-\delta)+2Mt.
 \label{eq:suppetaphysicaltwists}
\end{equation}
Reversal gives
\begin{equation}
 k_t+k_{L-1-t}=-2Mp_\eta-2N\tau_\eta,\qquad
 -\sum_{t=0}^{L-1}k_t=L(Md-\eta).
 \label{eq:suppetareversedtwists}
\end{equation}
Thus
\begin{equation}
 \begin{aligned}
 (U_AU_A^*)_{\rm phys}
 &=\eps^{L(Md-\eta)}(g^{\otimes L})^{-p_\eta},\\
 (U_AU_A^*)_a
 &=\xi^u\omega^{u(u-1)/2-\kappa_\eta p_\eta}Z^{p_\eta}.
 \end{aligned}
 \label{eq:suppetaA2factors}
\end{equation}
The scalar exponent reduces exactly:
\begin{align}
 &L(Md-\eta)
 +M\bigl[(1-\delta)u+u(u-1)-2\kappa_\eta p_\eta\bigr]
 \nonumber\\
 &\qquad=Mp_\eta c_\eta+NL\tau_\eta.
 \label{eq:suppetaA2reduction}
\end{align}
Since $\eps^N=-1$,
\begin{equation}
 \mathcal A_\eta^2
 =(-1)^{L\tau_\eta}q^{p_\eta c_\eta}Q^{-p_\eta}.
 \label{eq:suppetaAsquare}
\end{equation}

Let $\mathcal M_\eta=G_{+,\eta;0}T$, where $G_{+,\eta;0}$ embeds
$W_+^{(\eta)}$.  Exact movement gives
$[\mathcal M_\eta,H_\eta^+(M;0)]=[\mathcal M_\eta,Q]=0$.
Inserting the output factors into the four-block reflected-gate identity
multiplies the ungauged scalar by $f_\eta$ and changes no operator factor:
\begin{equation}
 \mathcal A_\eta\mathcal M_\eta\mathcal A_\eta^{-1}
 =\alpha_\eta Q^{-1}\mathcal M_\eta^{-1},\qquad
 \alpha_\eta=f_\eta q^{2L+1}.
 \label{eq:suppetareflectedmovement}
\end{equation}
For every integer $r$ define the plus-branch symmetry
\begin{equation}
 \mathcal B_{\eta,r}^+=\mathcal M_\eta^r\mathcal A_\eta,\qquad
 (\mathcal B_{\eta,r}^+)^2
 =(-1)^{L\tau_\eta}q^{p_\eta c_\eta}
  \alpha_\eta^rQ^{-(p_\eta+r)}.
 \label{eq:suppetaBsquare}
\end{equation}

If $m$ is fixed by Eq.~\eqref{eq:suppetasectormap}, choose
$\rho\in\mathbb Z$ so that $c_\eta-2m=N\rho$.  Restriction of
Eq.~\eqref{eq:suppetaBsquare} gives
\begin{equation}
 \left.(\mathcal B_{\eta,r}^+)^2\right|_m=(-1)^{E_{\eta,r}}I,\qquad
 E_{\eta,r}=L\tau_\eta+Mp_\eta\rho+r(M\rho-\tau_\eta).
 \label{eq:suppetafixedsquare}
\end{equation}
For even $N$, $M$ is odd and
$\kappa_\eta\equiv\eta\pmod2$.  If
$\eta\equiv M-1\pmod2$, then $\eta$ and $\kappa_\eta$ are even, so
$c_\eta$ is odd and no fixed sector exists.  For odd $N$ the fixed sector
is unique, and exact parity reduction gives
\begin{equation}
 E_{\eta,r}\equiv(M+\eta)(L+r)\pmod2.
 \label{eq:suppetaoddparity}
\end{equation}
The protected condition makes $M+\eta$ odd; choosing
$r=r_0=(L+1)\bmod2$ therefore gives square $-I$.  This proves the
protected implication for the plus branch.

For the original $Q$ labels of the minus branch, define
\begin{equation}
 \mathcal B_{\eta,r}^-=
 \mathcal C_\eta^\dagger\mathcal B_{\eta,r}^+\mathcal C_\eta,
 \qquad
 c_\eta^-=c_\eta^+-2(L+1)=-1-\kappa_\eta\pmod N .
 \label{eq:suppetaminusmap}
\end{equation}
Equation~\eqref{eq:suppetabranchcharge} shows that $\mathcal C_\eta$ shifts a
charge label by $L+1$; hence $\mathcal B_{\eta,r}^-$ maps
$m\mapsto c_\eta^--m$.  Unitary conjugation transfers the fixed-sector square
and its multiplicity conclusion to this relabelled minus-branch sector.

The same formulas sharply delimit this mechanism in the other coset.  At
odd $N$, Eq.~\eqref{eq:suppetaoddparity} is even for every $r$.  At even
$N$ there are two fixed sectors; their exponents differ by
$M(p_\eta+r)$ modulo two, so the two squares are either both $+I$ or have
opposite signs.  At least one fixed sector is therefore not Kramers
protected by any $\mathcal B_{\eta,r}^+$; branch conjugation gives the same
conclusion for $\mathcal B_{\eta,r}^-$.  Hence the analytic antiunitary
construction does not protect every fixed sector in the other coset.
\ref{app:fluxscan} tests the corresponding spectral pattern on a
fixed finite-size grid.

Finally, the formulas do not depend on auxiliary integer representatives.
The replacement
\begin{equation}
 \begin{aligned}
 \kappa_\eta&\mapsto\kappa_\eta+N,
 &\tau_\eta&\mapsto\tau_\eta-M,\\
 p_\eta&\mapsto p_\eta+N,
 &c_\eta&\mapsto c_\eta-N
 \end{aligned}
 \label{eq:suppetarepresentative}
\end{equation}
does not change a generator because $X^N=I$.  The ratio of the transformed
right-hand side of Eq.~\eqref{eq:suppetaAsquare} to the original is
\begin{equation}
 (-1)^{-LM}q^{N(c_\eta-p_\eta-N)}
 =(-1)^{M(c_\eta-p_\eta-N-L)}=1,
\end{equation}
because $c_\eta-p_\eta-N-L=\delta-N-2\kappa_\eta$ is even; the same
ratio controls Eq.~\eqref{eq:suppetaBsquare}.  Also
$\eta\mapsto\eta+2N$ leaves the physical factors and $\kappa_\eta$
unchanged and sends $\tau_\eta\mapsto\tau_\eta-2$.  Hence the result is
intrinsic to $\eta\in\mathbb Z_{2N}$.

\section{Exact finite controls and boundary diagnostics}
\label{app:controls}

\subsection{Ungauged protection boundary and exact counterexample}

The ungauged control uses fully specified operators.  Let
$G_{+,j}^{\rm W}=W_{+,ja}$ and define
\begin{equation}
 \mathcal M_+^{\rm W}=G_{+,0}^{\rm W}T,\qquad
 \mathcal A_{\rm W}=C_0\Theta_L^{\rm W},\qquad
 \mathcal B_{\rm W}=(\mathcal M_+^{\rm W})^{r_0}\mathcal A_{\rm W},
 \label{eq:suppBWdefinition}
\end{equation}
with the same $r_0$ as Eq.~\eqref{eq:suppB}.  The ungauged movement,
branch-unitary, and branch-antiunitary relations make
$\mathcal B_{\rm W}$ an antiunitary symmetry of
$H_{\rm Weyl}^+(q;0)$.  Directly specializing the preceding multiplication
by removing the output gauges gives
\begin{align}
 \mathcal A_{\rm W}Q\mathcal A_{\rm W}^{-1}
 &=\omega^{-(2L+1)}Q,\nonumber\\
 \mathcal A_{\rm W}^{2}
 &=q^{d(2L+1)}Q^{-d},\nonumber\\
 \mathcal A_{\rm W}\mathcal M_+^{\rm W}\mathcal A_{\rm W}^{-1}
 &=q^{2L+1}Q^{-1}(\mathcal M_+^{\rm W})^{-1},\nonumber\\
 \mathcal B_{\rm W}^{2}
 &=q^{(d+r_0)(2L+1)}Q^{-(d+r_0)}.
 \label{eq:suppBWfullspace}
\end{align}
Anti-linearity and the first line give the sector map
\begin{equation}
 m\longmapsto2L+1-m\pmod N.
 \label{eq:suppWeylsectormap}
\end{equation}
Even $N$ has no fixed sector because $2L+1$ is odd.  For odd $N$, let
$2L+1-2m_*=N\rho_{\rm W}$.  Then both $\rho_{\rm W}$ and
$d+r_0$ are odd, so the last line of Eq.~\eqref{eq:suppBWfullspace}
restricts to
\begin{equation}
 \left.\mathcal B_{\rm W}^{2}\right|_{m_*}
 =q^{N(d+r_0)\rho_{\rm W}}I=(-1)^M I.
 \label{eq:suppWeylfixedsquare}
\end{equation}
Thus this explicit mechanism protects the $q^N=-1$ branch (odd $M$), while
on the $q^N=+1$ branch (even $M$) it gives no Kramers partner.  The latter
statement concerns failure of the mechanism, not an assertion that every
even-$M$ finite size has a simple level.

The all-parameter ungauged extension is nevertheless false exactly.  Set
\begin{equation}
 (N,M,L)=(3,2,3),\qquad q^2+q+1=0.
\end{equation}
For either branch, exact arithmetic over $\mathbb Q[q]/(q^2+q+1)$ gives
\begin{equation}
 \det(\lambda I-H_{\rm Weyl}^{\pm})=A(\lambda)B(\lambda)^2,
 \label{eq:suppcounterfactor}
\end{equation}
where
\begin{align}
256A(\lambda)={}&256\lambda^8-1920\lambda^6-512\lambda^5
+3504\lambda^4+1152\lambda^3\nonumber\\
&-2072\lambda^2-480\lambda+393,
\label{eq:suppcounterA}\\[1mm]
256B(\lambda)={}&256\lambda^8-1920\lambda^6-896\lambda^5
+3504\lambda^4+3168\lambda^3\nonumber\\
&+328\lambda^2-264\lambda-39.
\label{eq:suppcounterB}
\end{align}
The square-free and Euclidean calculations give
\begin{equation}
 \gcd(A,A')=\gcd(B,B')=\gcd(A,B)=1.
 \label{eq:suppcountergcd}
\end{equation}
Thus the degree-eight $A$ factor contributes eight simple real energy
levels, whereas the degree-eight $B$ factor contributes eight double
levels.  Hermiticity makes all roots real, and exact recomposition gives the
full degree-24 characteristic polynomial.  This calculation concerns $H_{\rm Weyl}$ and does not
contradict the all-size theorem for $H_{\rm VMY}$.

\subsection{Exact source-normalized finite-point factorization}
\label{app:exactvmyfactor}

At the same $(N,M,L)=(3,2,3)$ point,
$\eps=1+q$ and $f=\eps/q=-q$ in
$\mathbb Q(q)/(q^2+q+1)$.  The literal source gauges are therefore
\begin{equation}
 F_+=\operatorname{diag}(1,-q),\qquad
 F_-=\operatorname{diag}(-q,1).
 \label{eq:suppexactsourcegauges}
\end{equation}
We apply them to the second physical leg of the exact seam, assemble both
$24\times24$ source-normalized Hamiltonians, and compute their
characteristic polynomials using rational-pair arithmetic.  No floating
diagonalization or eigenvalue clustering enters.  Both branches give
$\chi_{\rm VMY}^{\sigma}=C^2$, where
\begin{align}
4096C(\lambda)={}&4096\lambda^{12}-46080\lambda^{10}
-18432\lambda^9\nonumber\\
&+170496\lambda^8+129024\lambda^7-206400\lambda^6\nonumber\\
&-229248\lambda^5+17568\lambda^4+75328\lambda^3\nonumber\\
&+14220\lambda^2-3096\lambda-459,
\label{eq:suppCpolynomial}
\end{align}
and the exact Euclidean algorithm gives $\gcd(C,C')=1$.  The uniform
comparison has
\begin{equation}
 \chi_0(\lambda)
 =(\lambda-\tfrac94)^6(\lambda+\tfrac34)^{18},
 \qquad H_0=H_{\rm unif}\otimes I_3.
 \label{eq:suppuniformfactor}
\end{equation}
Together with Eqs.~\eqref{eq:suppcounterfactor}--
\eqref{eq:suppcountergcd}, the source factorization proves the main-text
separation of the two relative-holonomy orbits: eight simple plus eight double
levels versus twelve double levels.  Equation~\eqref{eq:suppuniformfactor} is
retained only as an auxiliary reference to the unmodified chain with a
spectator endpoint.  The accompanying machine-readable data record the full
polynomial coefficients, square-free recompositions, and basis conventions.

The finite-point factors and the all-size second-moment lemma have
complementary roles.  The former distinguishes the two direct
normalizations at one exact point.  The latter separates either direct
normalization from $H_{\rm unif}\otimes I_N$ for all allowed roots and
lengths, but cannot distinguish $H_{\rm Weyl}$ from $H_{\rm VMY}$ because
their second moments coincide.

\subsection{Sector-resolved finite-size test of the spectral converse}
\label{app:fluxscan}

We independently reconstructed and charge-block diagonalized every coprime
root with $2\le N\le12$, all $L=3,4,5$, both seam branches, and one
representative of each output-parity orbit.  Exact
Eq.~\eqref{eq:suppetaplustwo} makes additional values in the same parity orbit
unitarily redundant.  Table~\ref{tab:scan-summary} summarizes the
resulting $540$ Hamiltonians; the numerical diagnostics and exact controls
underlying that summary are given below.

The maximum Hermiticity residual is $1.15\times10^{-14}$, the maximum
paired-sector spectral residual is $1.60\times10^{-14}$, and the largest
diameter of any degeneracy cluster is $1.87\times10^{-14}$.  The tightest
case, $(N,M,L,\sigma,\eta)=(10,1,5,-,0)$ and its isospectral plus branch, has
an 80-digit charge-block estimate
$6.06334\times10^{-7}$ for the nearest distinct-doublet gap; this value was
checked using 80-digit working arithmetic, and the Jacobi off-diagonal
Frobenius residual is below $6.6\times10^{-62}$.  The full-precision value and
diagnostics are retained in the repository package described in the Data
availability statement.
Thus the classification is separated from numerical cluster widths by more
than seven orders of magnitude, while the main clustering tolerance itself
is reported separately.  Every protected case nevertheless contains simple
levels inside at least one individual nonfixed sector.  Full-space evenness
must therefore not be described as uniform within-sector Kramers degeneracy.

The two exceptional even-$N$ rows in the table are the branches of
$(N,M,L,\eta)=(2,1,4,0)$.  Exact arithmetic over $\mathbb Q(i)$ gives the
same characteristic polynomial in the two charge sectors,
\begin{equation}
 \begin{aligned}
 P_{\rm sec}(\lambda)
 &=\left(\lambda^2-\frac34\right)^2R_{12}(\lambda),\\
 \gcd(R_{12},R_{12}')
 &=\gcd\!\left(R_{12},\lambda^2-\frac34\right)=1.
 \end{aligned}
 \label{eq:suppn2protected}
\end{equation}
Hence $\lambda=\pm\sqrt3/2$ is double in each sector and fourfold in the
full Hilbert space; the remaining twelve energy levels are double.  The
full-space multiplicity histogram is therefore $2{:}12;4{:}2$.  At the
unprotected representative $\eta=1$, the exact sector factorization together
with cross-sector coprimality gives eight simple full-space levels.  These exact controls show
both why higher protected even multiplicities are allowed and why the
opposite-orbit scan is plausible, without converting the scan into a proof.
The largest full-space multiplicity is four in the protected class (268 cases
at maximum two, two cases at maximum four) and six in the opposite class (268
cases at maximum two, two cases at maximum six).  The latter exceptions are
the two branches of $(N,M,L,\eta)=(2,1,5,1)$, with histogram
$1{:}8;2{:}16;6{:}4$.  Hence ``unprotected'' does not mean
``nondegenerate.''

\subsection{Exact failure of naive transfer-family inheritance}

Use the VMY cyclic tensor reproduced in Eq.~\eqref{eq:suppvmyLax}, and
take the exact finite point
\begin{equation}
 (N,M,L)=(3,1,3),\qquad
 (\ee^u,\ee^v,\ee^s)=(2,1,1).
 \label{eq:supptransferpoint}
\end{equation}
Let $b$ denote the traced cyclic auxiliary space and fix the source order
\begin{equation}
 T_c=\Tr_b\!\left[
 \mathcal L_{b,1}\mathcal L_{b,2}\mathcal L_{b,3}\right].
 \label{eq:supporderedmonodromy}
\end{equation}
The displayed product is the left-to-right monodromy order used here, so
$\mathcal L_{b,3}$ acts first on a ket.  We order the physical basis as
$|s_1s_2s_3\rangle$ and the retained-endpoint basis as
$|s_1s_2s_3\rangle\otimes|n\rangle_a$, with the endpoint index $n$ fastest;
the zero-based flat index is
$3(4s_1+2s_2+s_3)+(n-1)$.  We use $[A,B]=AB-BA$.
All arithmetic lies in $\mathbb Q[q]/(q^2-q+1)$.  Entrywise exact
calculation first supplies the source control
\begin{equation}
 [H_{\rm unif}\otimes I_3,T_c\otimes I_3]=0,
\end{equation}
but for the retained-endpoint direct seam gives
\begin{equation}
 [H_{\rm VMY}^+(1;0),T_c\otimes I_3]\neq0,
 \qquad
 [H_{\rm VMY}^-(1;0),T_c\otimes I_3]\neq0.
 \label{eq:supptransfercounter}
\end{equation}
Each nonzero commutator has 162 nonzero entries.  Representative exact
entries (with zero-based matrix indices) and complete
squared Frobenius norms are
\begin{align}
 ([H_{\rm VMY}^+,T_c\otimes I_3])_{0,4}
 &=\frac{159}{16}+3q,\nonumber\\
 \|[H_{\rm VMY}^+,T_c\otimes I_3]\|_F^2
 &=\frac{620487}{64},\nonumber\\
 ([H_{\rm VMY}^-,T_c\otimes I_3])_{0,4}
 &=\frac92+\frac{147}{16}q,\nonumber\\
 \|[H_{\rm VMY}^-,T_c\otimes I_3]\|_F^2
 &=\frac{639927}{64}.
 \label{eq:supptransferexact}
\end{align}
As an order-sensitivity check, replacing this monodromy by
$\Tr_b(\mathcal L_{b,3}\mathcal L_{b,2}\mathcal L_{b,1})$ interchanges the
two squared norms.  The product order is therefore part of the exact
definition, not a cosmetic convention.
At $M=1$, $F_+=F_-=I_2$, so there is no Weyl/VMY normalization ambiguity
at this test point.  The exact calculation establishes failure of inheritance
of this \emph{undressed uniform} cyclic member by the direct seam.  Dressed,
endpoint-bearing, mixed-intertwiner, junction-dependent, and other
seam-specific transfer families---as well as the broader integrability
question---remain open.

\subsection{Computational cross-checks}

The preceding proof is analytic and does not use diagonalization.  Independent finite-dimensional implementations nevertheless reconstruct the branch exchange, branch antiunitary, charge action, co-moving identity, full-space squares, and representative invariance for coprime roots with $2\leq N\leq7$ and $L=3,4,5$.  Direct diagonalization on the same finite set agrees with the protected even-multiplicity theorem and includes the unprotected exact counterexample above.  A separate integer-arithmetic implementation checks the modular lifts and fixed-sector parity over a substantially larger range of $N$ and $L$.  These calculations are transcription and finite-size checks only; the all-size conclusion follows from Eqs.~\eqref{eq:suppetaAsquare}--\eqref{eq:suppetaoddparity}.

\subsection{Relation to prior work and operator-domain boundaries}
\label{app:prior-boundaries}

The Discussion gives the physical literature positioning.  The operator-domain
boundary needed for the proofs is narrower: VMY's cyclic transfer and temporal
duality operators are auxiliary-traced, whereas
Eqs.~\eqref{eq:suppHWeyl}--\eqref{eq:suppHVMY} retain the endpoint as a
dynamical factor \citep{Vernier2026Lattice}.  Ueda et al.\ provide the nearest
dangling-virtual impurity framework, but their Ising doubling mechanism is a
unitary spectator factorization \citep{Ueda2026Perfect}.  Related mobility,
antiunitary, and movement-cycle results concern different operators or
sectors \citep{Liang2026Physical,Aasen2020Dualities,Sato2026Invariants}.

Likewise, loop-algebra multiplets, twisted transfer matrices,
Fabricius--McCoy strings, $Q$-operator towers, and hidden twisted sectors are
sector-, tower-, or selected-eigenspace statements
\citep{DeguchiFabriciusMcCoy2001Loop,Deguchi2004Twisted,Korff2004Twisted,Miao2021QOperator,Hu2026Hidden}.
These results address operator domains distinct from the charge-sector
antiunitary proved here.

\subsection{Secondary second-moment boundary diagnostic}
\label{app:moment}

This trace comparison is retained as a narrow boundary diagnostic and is not used in the Weyl, local-spectrum, holonomy, or even-multiplicity theorems.  It applies to the entire output family and every coprime $(M,N)$, but rules out only one naive equal-size comparison with the uniform chain tensored by the displayed endpoint identity.

Let
\begin{equation}
 \theta=\frac{\pi M}{N},\qquad
 \Delta=\frac{q+q^{-1}}2=\cos\theta.
 \label{eq:suppmomenttheta}
\end{equation}
In the ordered two-spin basis, the XXZ density is
\begin{equation}
h=\begin{pmatrix}
\tfrac12\cos\theta&0&0&0\\
0&-\tfrac12\cos\theta&q&0\\
0&q^{-1}&-\tfrac12\cos\theta&0\\
0&0&0&\tfrac12\cos\theta
\end{pmatrix},
\label{eq:hmatrix}
\end{equation}
where the diagonal entries follow from $(q+q^{-1})/4=\cos\theta/2$ and the eigenvalues of $\sigma^z\otimes\sigma^z$.  Direct multiplication gives
\begin{equation}
\Tr h^2=2+\cos^2\theta.
\label{eq:supptrh2}
\end{equation}
All one-site partial traces of $h$ vanish.

The ungauged seam has six Hermitian off-diagonal pairs.  For one pair $cE_{rc}Y+c^*E_{cr}Y^\dagger$, squaring and tracing gives $2|c|^2N$.  Here $|c|=1/2$ for all six pairs, so
\begin{equation}
\Tr S_\sigma^2=6\cdot2\cdot\frac14N=3N.
\end{equation}
Every physical-output similarity in Eq.~\eqref{eq:suppoutputgaugefamily}
preserves this trace, so $\Tr[(S_\sigma^{(\eta)})^2]=3N$ for all $\eta$.

For later use, the finite Weyl trace is
\begin{equation}
\Tr(Y_{m,n})=N\,\delta_{m,0\ ({\rm mod}\ N)}\delta_{n,0\ ({\rm mod}\ N)}.
\end{equation}
Every endpoint monomial in the seam is traceless for $N\geq2$.  At $N=2$, the only exponent $m=\pm2$ that becomes zero modulo $N$ is paired with $n=\pm1$, so it remains traceless.  Consequently
\begin{equation}
\Tr_aS_\sigma=\Tr_aS_\sigma^{(\eta)}=0.
\label{eq:partialseamtrace}
\end{equation}

For periodic $L\geq3$, two distinct ordinary cycle edges are disjoint or share exactly one site.  Disjoint cross traces factor through $\Tr h=0$.  If $h_{uv}$ and $h_{vw}$ share $v$, tracing the exterior sites first gives
\begin{equation}
\Tr_{uvw}(h_{uv}h_{vw})
=\Tr_v\!\left[(\Tr_u h_{uv})(\Tr_w h_{vw})\right]=0.
\end{equation}
This covers the triangle $L=3$, where all distinct edge pairs meet.  For any physical operator $B$, Eq.~\eqref{eq:partialseamtrace} gives
\begin{equation}
\Tr_{\rm phys,a}[S_\sigma(B\otimes I_a)]
=\Tr_{\rm phys}[(\Tr_aS_\sigma)B]=0,
\end{equation}
so every seam--bond cross term vanishes whether or not the physical supports overlap.  Write $H_\eta^\sigma$ for any retained-endpoint member in Eq.~\eqref{eq:suppHeta} at any seam position (the trace is position independent).  Define, before using it,
\begin{equation}
 \Delta_{2,\eta}^{\sigma}(N,M,L):=
 \Tr[(H_{\eta}^{\sigma})^2]
 -\Tr[(H_{\rm unif}\otimes I_N)^2].
 \label{eq:suppDelta2definition}
\end{equation}
Including the $2^{L-2}$ trace over spectator spins yields
\begin{align}
\Tr[(H_{\rm unif}\otimes I_N)^2]
&=NL2^{L-2}(2+\cos^2\theta),\nonumber\\
\Tr[(H_{\eta}^\sigma)^2]
&=N(L-1)2^{L-2}(2+\cos^2\theta)+3N2^{L-2},\nonumber\\
\Delta_{2,\eta}^{\sigma}(N,M,L)
 &=2^{L-2}N\sin^2\!\left(\frac{\pi M}{N}\right)>0,
 \qquad L\geq3.
\label{eq:suppmoment}
\end{align}

Equation~\eqref{eq:suppmoment} separates the displayed direct
$H_{\eta}^\sigma$ from $H_{\rm unif}\otimes I_N$ in the equal-$N$,
equal-$L$, standard-trace comparison.  Factorizations with a different
$H_{\rm eff}$, or dimension-changing, projected, quotient, junction,
sectorwise, dressed-transfer, and infrared maps, lie outside this trace
comparison.  Its analytic range is $L\geq3$.

The same formula must not be extended to $L=2$.  There the two directed periodic bonds occupy the same two-site support.  For $\mathfrak f\in\{\mathrm{Weyl},\mathrm{VMY}\}$ and with $\Delta=\cos(\pi M/N)$, direct calculation instead gives
\begin{align}
\Tr[(H_{\rm unif}^{(L=2)}\otimes I_N)^2]&=12N\Delta^2,\\
\Tr[(H_{\mathfrak f}^{\sigma,(L=2)})^2]&=N(5+\Delta^2),
\end{align}
so their difference is $N(5-11\Delta^2)$ rather than Eq.~\eqref{eq:suppmoment}.  The local movement theorem also requires three distinct sites and is not asserted at $L=2$.

\subsection{Machine-verification map}
\label{app:repro}

The accompanying code contains exact-arithmetic and finite-dimensional
implementations of the derivations above.  The unrestricted calculations
represent coefficients as Gaussian-rational Laurent polynomials in $q$ and
normal-order every endpoint monomial before any finite-root quotient.  Finite
clock--shift calculations use the one-based endpoint basis
$|1\rangle,\ldots,|N\rangle$, physical-major/endpoint-minor tensor ordering,
and row-major vectorization.  Exponents are reduced modulo $N$ only after the
unrestricted movement and quartic residuals have been collected.

The code reproduces the 240-variable coefficient system, the complete
movement residual, the quartic identity, the homogeneous commutant, the
branch phase comparison, the full-space antiunitary identities, and the
fixed-sector parity reduction.  It also reconstructs the exact
characteristic polynomials at $(N,M,L)=(3,2,3)$ over
$\mathbb Q[q]/(q^2+q+1)$ and the transfer commutators at $(3,1,3)$ over
$\mathbb Q[q]/(q^2-q+1)$, using the monodromy and basis conventions in
\ref{app:global}.  Floating-point diagonalization is used only for
finite corroborative checks, with residuals normalized by the Frobenius norm;
it is not used to establish any all-size theorem.

%% If you have bibdatabase file and want bibtex to generate the
%% bibitems, please use
%%
\bibliographystyle{elsarticle-num} 
\bibliography{ref}

\end{document}